\documentclass[mnsc,nonblindrev]{optonline} 
\OneAndAHalfSpacedXI 

\newcount\Comments
\Comments=1 
\newcommand{\kibitz}[2]{\ifnum\Comments=1{\textcolor{#1}{\textsf{\footnotesize #2}}}\fi}

\usepackage{color, bm, enumitem, mathtools}
\usepackage{graphicx, transparent}
\usepackage{comment}
\usepackage{soul}
\usepackage{physics}
\usepackage{dsfont}

\graphicspath{{Figures/}}
\usepackage[dvipsnames]{xcolor}

\usepackage{natbib}
\bibpunct[, ]{(}{)}{,}{a}{}{,}%

\TheoremsNumberedThrough     
\ECRepeatTheorems
\EquationsNumberedThrough

\DeclareMathOperator{\minority}{\mathcal{I}^{\text{min}}}
\DeclareMathOperator{\majority}{\mathcal{I}^{\text{maj}}}
\DeclareMathOperator{\all}{\mathcal{I}}

\renewcommand{\qed}{\hfill \ensuremath{\Box}}

\usepackage{algpseudocode}
\usepackage{multirow}
\usepackage{array}
\usepackage{subcaption}
\usepackage{caption}
\captionsetup{font=small}
\usepackage{titlesec}
\usepackage{xurl}
\usepackage{lmodern}
\usepackage{fix-cm}

\usepackage{hyperref}
\usepackage{cleveref}
\hypersetup{
colorlinks,
breaklinks,
linkcolor={[RGB]{174,83,139}},
citecolor={[RGB]{97,62,173}},
urlcolor={[RGB]{115,135,195}}
}




\TITLE{Equitable Auction Design with Provable Regret Guarantees\footnote{An earlier iteration of this paper was disseminated under a different title (`Equitable Auction Design: With and Without Distributions'). The current manuscript substantially extends and supersedes that preliminary version.}}
\RUNTITLE{Equitable Auction Design with Provable Regret Guarantees}
\RUNAUTHOR{Wang, Ko{\c{c}}yi{\u{g}}it, and Rujeerapaiboon}
\ARTICLEAUTHORS{
\AUTHOR{Ruiqin Wang$^1$, {\c{C}}a{\u{g}}{\i}l Ko{\c{c}}yi{\u{g}}it$^2$, Napat Rujeerapaiboon$^3$}

\AFF{
$^1$\textit{School of Information Management and Engineering, Shanghai University of Finance and Economics, China}\\
$^2$\textit{Luxembourg Centre for Logistics and Supply Chain Management, University of Luxembourg, Luxembourg}\\
$^3$\textit{Department of Industrial Systems Engineering and Management, National University of Singapore, Singapore}\\
\EMAIL{wangruiqin@mail.shufe.edu.cn}, \EMAIL{cagil.kocyigit@uni.lu}, \EMAIL{napat.rujeerapaiboon@nus.edu.sg}
}
}

\begin{document}

\ABSTRACT{\indent We study a mechanism design problem where a seller aims to allocate a good to multiple bidders, each with
a private value. The seller supports or favors a specific group, referred to as the minority group. Specifically,
the seller requires that allocations to the minority group are at least a predetermined fraction (equity level)
of those made to the rest of the bidders. Such constraints arise in various settings, including government
procurement and corporate supply chain policies that prioritize small businesses, environmentally responsible suppliers, or enterprises owned by historically disadvantaged individuals. We include the equity requirements as constraints in the design of a mechanism, and our goal is to characterize equitable mechanisms with provable performance guarantees. Motivated by settings where reliable information about the bidders' values is unavailable, our main focus is a regret-based mechanism design problem, which makes no assumptions about the distribution of the bidders' values and aims to minimize the maximum (worst-case) ex-post regret in view of all possible bidders' values. Here, regret is defined as the difference between the highest revenue achievable in hindsight and the revenue generated by a mechanism. For this problem, we propose a closed-form mechanism and prove that its ex-post regret is at most a constant multiple (dependent on the equity level) of the optimal worst-case regret. We find that, when bidders' value supports are identical, this factor is less than 1.31 across all possible equity levels. We also show that the proposed mechanism is asymptotically optimal as the equity level decreases and increases. Additionally, we show that this mechanism is also equipped with an additional layer of pricing protection for the minority group. As a benchmark based on distributional information, we also study a stochastic mechanism design problem, which assumes that the bidders' values are random variables following a known distribution and aims to maximize the expected revenue. We characterize the optimal mechanism in this benchmark problem under standard independence and regularity assumptions. Both mechanisms can be interpreted as set-asides, a common policy tool that reserves a fraction of goods for minority groups. Numerical results demonstrate that the stochastic mechanism performs well when the bidders' value distribution is accurately estimated, while the regret-based mechanism exhibits greater robustness under estimation errors.}

\maketitle

\section{Introduction}
Auctions are widely used in practice to sell a diverse range of items, including housing, financial instruments, and commodities. Key reasons for employing auctions include the limited availability of the items being offered and limited knowledge of demand, which necessitate a strategic allocation mechanism to achieve objectives such as revenue maximization, welfare maximization or regret minimization. In addition to these objectives, fairness and equity in allocations are important considerations in certain contexts, particularly when the goal is to support or favor a specific group, referred to here as the \textit{minority} group. In fact, governments, in both procurement and allocation contexts, often have objectives to favor certain groups such as small businesses or enterprises owned by historically disadvantaged individuals. For instance, in the ``Executive Order on Further Advancing Racial Equity and Support for Underserved Communities Through the Federal Government," issued by the former president of the United States, Joseph R. Biden, the importance of equitable procurement is emphasized. Section~7 of this order set a government-wide goal for federal procurement dollars, aiming for~15 percent to be awarded to small businesses owned by socially and economically disadvantaged individuals by Fiscal Year 2025 \citep{federalregister2023}. Another example is corporate procurement policies, where firms may set targets to source from environmentally responsible suppliers, small businesses, or suppliers owned by historically disadvantaged individuals to promote sustainability, innovation, and competition.

Motivated by the importance of incorporating fairness and equity into allocation mechanisms, in this paper, we study a mechanism design problem with equity constraints. In this problem, a seller aims to sell an item to multiple bidders, each with a privately known willingness-to-pay (value) for the item. The seller seeks to allocate a certain proportion of the item to the minority group, introducing equity constraints on allocation decisions. We formulate these equity constraints to ensure that they hold ex-post, i.e., after the bids have been revealed. 
We consider two variants of this mechanism design problem, each with a different objective: 
\emph{regret-based mechanism design}, which makes no assumptions about the distribution of the bidders' values and aims to minimize the maximum (worst-case) ex-post regret in view of all possible bidders' values, where regret is defined as the difference between the highest revenue achievable in hindsight and the revenue generated by a mechanism,
and \emph{stochastic mechanism design}, which assumes that the bidders' values are random variables following a known distribution and aims to maximize the expected revenue (when the distribution is known, this objective coincides with minimizing the expected regret). 
While the stochastic formulation is appropriate in settings where reliable distributional information is available, obtaining accurate estimates of bidders’ value distributions can be challenging in many real-world applications. For example, government auctions often occur infrequently, resulting in limited available data. This motivates the study of the regret-based formulation, which does not rely on distributional assumptions and remains applicable even when little information about bidders’ values is available.

The contributions of this paper are summarized as follows:
\begin{itemize}[noitemsep]
    \item 
    Motivated by settings in which reliable distributional information is unavailable or difficult to estimate, 
    we formulate the regret-based mechanism design problem with the objective of minimizing the worst-case regret. This approach neither relies on distributional assumptions nor requires knowledge of the bidders' value distributions. 
    To address this problem, we propose a mechanism and prove that its ex-post regret is at most a factor--depending on the equity level--of the optimal worst-case regret. 
    We numerically find that, when bidders’ value supports are identical, this factor is less than~$1.31$ across all possible equity levels. %
    We also show that the proposed mechanism is asymptotically optimal as the equity level decreases and increases.
    A further finding is that the proposed mechanism provides the minority group with an additional layer of pricing protection, in that its ratio of payment to allocation is no worse than that of its majority counterpart.
    From a practical perspective, the proposed mechanism can be interpreted as a set-aside, a method commonly used in practice where a certain fraction of the item is reserved for allocation to the minority group; see, e.g.,~\cite{athey2013set}. The fraction set aside for the minority group increases with and is uniquely determined by the equity level. 
    \item 
    As a benchmark based on distributional information, we also study stochastic mechanism design
    and characterize the optimal mechanism in closed form for any equity level desired under a standard assumption on the bidders' values—specifically, that they are mutually independent and their distributions are regular. 
    Similar to the proposed regret-based mechanism, there is also a set-aside implementation corresponding to this optimal stochastic mechanism.  
    \item We compare the two proposed mechanisms with additional benchmarks inspired by the literature, which can only be computed numerically, in a stress test experiment. 
    Specifically, we derive the regret-based mechanism, as well as the optimal stochastic mechanism based on the in-sample distribution, and then assess the distributions of the revenues and regrets they generate under the out-of-sample distribution.
    Our results suggest that the optimal stochastic mechanism should be preferred when the two distributions are similar, while the regret-based mechanism performs better in other cases. 
\end{itemize}
Consequently, we provide a closed-form, easy-to-implement mechanism with a provable performance guarantee, regardless of the trustworthiness of the bidders' value distribution.

\textbf{Related Literature.} Our work contributes to the mechanism design literature on fairness. The literature studies various notions of fairness, including envy-freeness, where agents prefer their own allocations over any allocation given to other agents (see, e.g., \cite{barman2018finding}); 
bidder-level nondiscrimination, where the bidders with the same bids are treated identically (see, e.g., \citealt{chen2025optimal});
and objectives that balance seller's revenue and bidders' surplus to promote long-term engagement (see, e.g., \citealt{babaioff2026paretoefficientmultibuyermechanismscharacterization}). Research on fair allocations in the context of auction design with consideration of group fairness (i.e., favoring a minority group), which is the focus of our paper, remains relatively limited. The closest works to ours in this line of research are \cite{pai2012auction} and \cite{fallah2024fair}. 

\citet{pai2012auction} consider the mechanism design problem of a seller offering a single item to multiple buyers, subject to an equity constraint that requires buyers from a target group to win the item with an ex-ante probability of at least a given threshold, while maximizing efficiency. \citet{fallah2024fair} study a similar problem in a dynamic setting where the seller interacts with two groups of buyers over multiple rounds. Their objective is to maximize discounted revenue while ensuring an equity constraint that requires the discounted average of items allocated to a target group up to any particular period, combined with the expected discounted allocation in future rounds, to exceed a given threshold. In both of these works, the underlying distribution of the buyers' values is assumed to be known to the seller, and the equity constraints are enforced in expectation based on this distribution. In contrast, our paper formulates the equity constraints to hold ex-post, i.e., after the bids are revealed. Thus, while our notion of equity is similar in spirit to those studied in \cite{pai2012auction} and \cite{fallah2024fair}, it does not rely on distributional assumptions or information and is guaranteed to hold irrespective of the underlying distribution. Additionally, we study the regret-based mechanism design problem, which does not require knowledge of the bidders' value distributions, even in the objective. We therefore adopt a robust approach compared to these papers, both in terms of the objective (in the case of regret-based mechanism design) and the equity requirements (in both stochastic and regret-based mechanism design). Our robust approach not only reduces reliance on distributional knowledge but also simplifies the theoretical analysis, enabling us to characterize (approximately) optimal mechanisms in closed form.

Furthermore, our work contributes to the (distributionally) robust mechanism design literature (see, e.g., \cite{anunrojwong2023robust, chen2024screening, wang2024power, giannakopoulos2023robust, rujeerapaiboon2023target, koccyiugit2024regret} and references therein), which assumes that only limited or no information about the bidders' value distributions is available and evaluates mechanisms based on their performance under the most adverse distribution consistent with the available information. To the best of our knowledge, we are the first to study the robust auction design setting with equity constraints.

In addition, our work is related to other fair resource allocation problems, such as 
scheduling for scarce resources (see, e.g., \cite{vardi2024price}) and
policy design for resource allocation based on contextual information (see, e.g., \cite{jo2023fairness, tang2023learning, freund2023group, bansak2024outcome} and references therein). In particular, the equity notion we study in our paper resembles the concepts of minority prioritization and statistical parity (a.k.a., group fairness) in allocations, which should either be similar across different groups or favor minority groups, as studied in the respective literature.

\textbf{Organization.} The paper is structured as follows. Section~\ref{sec: Problem Formulation} introduces the problem formulation and preliminaries. Sections~\ref{sec: Regret-Based Mechanism Design} and \ref{sec: stochastic} study the regret-based and stochastic mechanism design problems, respectively. Section~\ref{sec: Numerical Experiment} presents the numerical experiments. Omitted proofs can be found in Appendix~\ref{appendix: proofs}, and additional experimental results are provided in Appendix~\ref{appendix: experiment}.

\textbf{Notation.} For any vector $\bm{v} \in \mathbb{R}^I$, we use $v_i$ to denote its $i^{\text{th}}$ component and $\bm v_{-i}$ to represent the subvector of $\bm v$ obtained by excluding $v_i$. Random vectors are indicated with a tilde (e.g., $\tilde{\bm{v}}$), while their realizations are denoted by the same symbols without the tilde (e.g., $\bm{v}$). The collection of all bounded Borel-measurable functions from a Borel set $\mathcal{A}$ to another Borel set $\mathcal{C}$ is denoted by $\mathcal{L}(\mathcal{A}, \mathcal{C})$. 
The natural logarithm is written as $\log(\cdot)$, and $\mathbb{I}_{\mathcal{E}}$ denotes the indicator function of the event~$\mathcal{E}$.
Finally, we use \textit{increasing} and \textit{non-decreasing} to mean strictly increasing and weakly increasing, respectively.

\section{Problem Formulation}\label{sec: Problem Formulation}
We consider the auction design problem of a seller who wishes to sell a single item to $I \geq 2$ bidders. Let $\mathcal{I} = \{1, \dots, I\}$ represent the set of bidders. The bidders are categorized into two groups: minority and majority. 
Minority bidders, referring to a target group such as historically disadvantaged groups, may outnumber majority bidders. These classifications may be based on protected and observable features such as race, gender, age, etc. We use $\mathcal{I}^{{\operatorname{min}}} = \{1, \dots, I^{\operatorname{min}}\}$ and $\mathcal{I}^{{\operatorname{maj}}} = \{I^{\operatorname{min}} + 1, \dots, I^{\operatorname{min}} + I^{\operatorname{maj}}\}$ to represent the sets of minority and majority bidders, respectively. 
The overall set $\mathcal{I}$ of bidders can be expressed as $\mathcal{I} = \mathcal{I}^{{\operatorname{min}}} \cup \mathcal{I}^{{\operatorname{maj}}}$. Each bidder $i \in \mathcal{I}$ assigns a value $v_i \in \mathcal V_i = [0,\overline{v}_i]$, $\overline{v}_i \in \mathbb R_{++}$, to the item, which is unknown to the seller and other bidders.  
We let $\bm v = (v_1, \dots, v_I)$ be the vector of all bidders' values and denote by $\mathcal V = \times_{i \in \mathcal I} \mathcal V_i$ the set of all bidders' potential values. We do not make any distributional assumptions about the bidders' values and assume that the only information available to the seller is the set $\mathcal V$. For notational convenience, we define
${\overline{v}^{{\operatorname{min}}}} = \max_{i \in \mathcal{I}^{\operatorname{min}}} \overline{v}_i$ and ${\overline{v}^{{\operatorname{maj}}}} = \max_{i \in \mathcal{I}^{\operatorname{maj}}} \overline{v}_i$, which denote the maximum possible value within the minority group and the majority group, respectively.

By invoking the Revelation Principle, we focus on truthful direct mechanisms under which bidders choose to bid their true values \citep{myerson1981optimal}. Formally, we define an auction as a mechanism~$(\bm q, \bm m)$, comprising an allocation rule $\bm q \in \mathcal L(\mathcal V, [0,1]^I)$ and a payment rule $\bm m \in \mathcal L(\mathcal V, \mathbb R^I)$, that satisfies the following incentive compatibility \eqref{eq:IC}, individual rationality \eqref{eq:IR} and allocation feasibility~\eqref{eq:AF} constraints. 
\begin{align}
        &q_i(v_i,\bm{v}_{-i}) v_i  - m_i(v_i, \bm{v}_{-i})  \geq  q_i(w_i,\bm{v}_{-i}) v_i - m_i(w_i, \bm{v}_{-i}) \;\;\; \forall i \in \mathcal I, \forall \bm v \in \mathcal V, \forall w_i \in \mathcal V_i \tag{IC}\label{eq:IC}\\
&q_i(v_i,\bm{v}_{-i}) v_i  - m_i(v_i, \bm{v}_{-i}) \geq 0 \;\;\; \forall i \in \mathcal I, \forall \bm v \in \mathcal V \tag{IR}\label{eq:IR}\\
&\sum_{i \in \mathcal I} q_i(\bm v) \leq 1 \;\;\; \forall \bm v \in \mathcal V \tag{AF}\label{eq:AF}
\end{align}
Given that the bidders report their values as $\bm v \in \mathcal V$, $q_i(\bm v)$ represents 
the proportion of the item allocated to bidder $i \in \mathcal I$\footnote{In the case of an indivisible item, $q_i(\bm v)$ can represent the probability of allocating the item to bidder $i \in \mathcal I$.}, and $m_i(\bm v)$ represents the payment made by the same bidder.
The incentive compatibility \eqref{eq:IC} constraint ensures that every bidder has no incentive to misreport their value, the individual rationality \eqref{eq:IR} constraint guarantees that every bidder's utility cannot be negative when bidding truthfully, and the allocation feasibility \eqref{eq:AF} constraint assures that the allocated amount does not exceed $1$.

We will require that the allocation rule $\bm q$ satisfies the following equity \eqref{eq:Eq} constraint, where~$\gamma > 0$ quantifies the level of equity. 
\begin{align}
    &\sum_{i \in \mathcal I^{{\operatorname{min}}}} q_i(\bm v) \geq \gamma \sum_{i \in \mathcal I^{{\operatorname{maj}}}} q_i(\bm v) \;\;\;\forall \bm v \in \mathcal V \tag{Eq}\label{eq:Eq} 
\end{align}
This constraint ensures that the proportion allocated to the minority group is at least as high as $\gamma$ times the one for the majority group, thereby ensuring equitable opportunities for the minority group. This principle is inspired by the approach of minority prioritization and statistical parity in allocations, which are employed, for example, in policy design for resource allocation based on contextual information \citep{tang2023learning, jo2023fairness}. We exclude the case where $\gamma = 0$, for which~\eqref{eq:Eq} is automatically satisfied and there is no need to distinguish between majority and minority bidders.

\begin{remark} 
    In his ``Executive Order on Further Advancing Racial Equity and Support for Undeserved Communities Through The Federal Government," the former president of the United States, Joseph R. Biden,  emphasized the importance of equitable procurement. Section~7 of this order set a government-wide goal for federal procurement dollars, aiming for 15 percent to be awarded to small businesses owned by socially and economically disadvantaged individuals in Fiscal Year 2025. 
    Our formulation of equity \eqref{eq:Eq} can articulate targets of this nature. In fact, consider the alternative equity constraint
\begin{equation*}
\sum_{i \in \mathcal I^{{\operatorname{min}}}} q_i(\bm v) \geq \delta \sum_{i \in \mathcal I} q_i(\bm v),
\end{equation*}
where $\delta \in (0, 1)$. This constraint requires that $\delta$ proportion of allocations be directed to bidders from the minority group.  Note that this constraint can be written in the form of \eqref{eq:Eq} as follows:
    \begin{equation*}
        \begin{aligned}
            \sum_{i \in \mathcal I^{{\operatorname{min}}}} q_i(\bm v) \geq \delta \sum_{i \in \majority} q_i(\bm v) + \delta \sum_{i \in \minority} q_i(\bm v)
            \iff \sum_{i \in \mathcal I^{{\operatorname{min}}}} q_i(\bm v) \geq \frac{\delta}{1-\delta} \sum_{i \in \majority} q_i(\bm v). 
        \end{aligned}
    \end{equation*}\qed
\end{remark}
\begin{remark}
Our formulation of equity \eqref{eq:Eq} also covers equity defined in terms of average allocations. Consider the alternative equity constraint
\begin{equation*}
    \frac{1}
    {I^{\operatorname{min}}}\sum_{i \in \mathcal I^{\operatorname{min}}} q_i(\bm v)
    \ge
    \frac{\gamma'}
    {I^{\operatorname{maj}}} \sum_{i \in \mathcal I^{\operatorname{maj}}} q_i(\bm v),
\end{equation*}
where $\gamma' > 0 $. This constraint requires that the average allocation received by the minority group be at least $\gamma'$ times that received by the majority group. Note that this constraint is equivalent to \eqref{eq:Eq} with
$\gamma=\frac{\gamma' I^{\operatorname{min}}}{I^{\operatorname{maj}}}$. \qed
\end{remark}

We denote by $\mathcal M$ the set of mechanisms that satisfy \eqref{eq:IC}, \eqref{eq:IR}, \eqref{eq:AF} and~\eqref{eq:Eq}:
\begin{equation*}\label{def:M}
\mathcal{M} = 
\left\{(\bm q ,\bm m) \in \mathcal L(\mathcal V, [0,1]^I) \times \mathcal L(\mathcal V, \mathbb R^I)
\;\left|\;
\eqref{eq:IC},\,\eqref{eq:IR},\,\eqref{eq:AF},\,\eqref{eq:Eq}
\right.\right\}.
\end{equation*}

The seller's objective is to minimize the worst-case (maximum) ex-post regret\footnote{For brevity, we use the term ``regret'' to refer to the ex-post regret throughout the remainder of the paper.} across all possible realizations of the bidders' values.  The regret of a mechanism is defined as the difference between the revenue that could have been achieved with full knowledge of the bidders’ values and the actual revenue generated by the mechanism. If the seller knew the bidders’ true values $\bm{v}$, they could allocate a $\frac{1}{1+\gamma}$ proportion of the item to the highest bidder overall and a $\frac{\gamma}{1+\gamma}$ proportion to the highest bidder within the minority group.
Each of these bidders would then be charged an amount equal to the product of their allocated proportion and their true value.
This allocation ensures that the equity constraint~\eqref{eq:Eq} is satisfied while also generating the highest possible revenue for the seller. The revenue generated by this allocation is given by
\begin{equation}\label{eq: hindsight revenue}
\frac{1}{1+\gamma} \max_{i \in \mathcal I} v_i + \frac{\gamma}{1+\gamma} \max_{i \in \mathcal I^{\operatorname{min}}} v_i,
\end{equation}
which serves as a benchmark for what can be achieved with complete knowledge of the bidders' values. Note that if the highest bidder within the minority group is also the highest bidder overall, the entire item is allocated to this bidder.

Noting that the worst-case regret is defined as the maximum regret over all possible bidders' values in $\mathcal V$, we can now formulate the regret-based mechanism design problem with the objective of minimizing the worst-case regret as follows.
\begin{equation}\label{Reg-MDP}\tag{R-MDP}
    \begin{aligned}
        &\min_{(\bm q,\bm m)\in\mathcal{M}} \;\max_{\bm{v}\in\mathcal V}&& \frac{1}{1+\gamma} \max_{i \in \mathcal I} v_i + \frac{\gamma}{1+\gamma} \max_{i \in \mathcal I^{\operatorname{min}}} v_i - \sum_{i \in \mathcal I} m_i(\bm v)
    \end{aligned}
\end{equation}
We denote by $\operatorname{REG}^\star$ the optimal value of~\eqref{Reg-MDP}. In Section \ref{sec: Regret-Based Mechanism Design}, we will characterize a feasible mechanism that achieves a constant-factor approximation guarantee for problem \eqref{Reg-MDP}. 

\subsection*{Preliminaries}
We close this section with some preliminaries that will be used in the following sections. The next lemma\footnote{As the result of \Cref{lem: monotonicity of allocation} is well known, we omit the proof.} presents a well-known result available from the literature. Specifically, this lemma prescribes necessary and sufficient conditions for \eqref{eq:IC} to hold.
\begin{lemma}[See, e.g., \cite{krishna2009auction}]\label{lem: monotonicity of allocation}
    A mechanism $(\bm q, \bm m)$ satisfies \eqref{eq:IC} if and only if, for all~$i \in \mathcal I$,
    \begin{itemize}
        \item[(i)] allocation rule $q_i$ is non-decreasing in bidder $i$'s value, i.e.,  $q_i(v_i, \bm{v}_{-i}) \geq q_i(w_i, \bm{v}_{-i})$ for all $\bm v \in \mathcal V$ and  $w_i \in \mathcal V_i : v_i \geq w_i$,
        \item[(ii)] payment rule $m_i$ satisfies $m_i(\bm v) = q_i(\bm v)v_i - \int_{0}^{{v}_i} q_i(x, \bm v_{-i})\text{d}x + m_i(0, \bm v_{-i})$ for all $\bm v \in \mathcal V$.
    \end{itemize}
\end{lemma}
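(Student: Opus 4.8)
The plan is to fix an arbitrary bidder $i \in \mathcal I$ and an arbitrary profile $\bm v_{-i}$ of the competitors' reports, treat $q_i(\cdot,\bm v_{-i})$ and $m_i(\cdot,\bm v_{-i})$ as functions of the single scalar report, and argue the equivalence one bidder at a time. Throughout I would work with the indirect utility $u_i(v_i) := q_i(v_i,\bm v_{-i})\,v_i - m_i(v_i,\bm v_{-i})$ of a truthful type-$v_i$ bidder, and prove the two implications separately.

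For necessity (\eqref{eq:IC} $\Rightarrow$ (i),(ii)), take two types $v_i > w_i$ and write down \eqref{eq:IC} twice: once forbidding type $v_i$ from mimicking $w_i$, and once forbidding type $w_i$ from mimicking $v_i$. Adding the two inequalities cancels the payment terms and leaves $(q_i(v_i,\bm v_{-i})-q_i(w_i,\bm v_{-i}))(v_i-w_i)\ge 0$, which is exactly monotonicity~(i). The same two inequalities, rearranged rather than added, sandwich the increment of the indirect utility as $q_i(w_i,\bm v_{-i})(v_i-w_i) \le u_i(v_i)-u_i(w_i) \le q_i(v_i,\bm v_{-i})(v_i-w_i)$. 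Since $q_i(\cdot,\bm v_{-i})$ is monotone and bounded, hence Riemann integrable on $[0,v_i]$, integrating this sandwich from $0$ to $v_i$ yields $u_i(v_i)-u_i(0)=\int_0^{v_i} q_i(x,\bm v_{-i})\,\text{d}x$. Substituting $u_i(0)=-m_i(0,\bm v_{-i})$ and unfolding the definition of $u_i(v_i)$ rearranges precisely into the payment identity~(ii).

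For sufficiency ((i),(ii) $\Rightarrow$ \eqref{eq:IC}), I would substitute the payment formula~(ii) into the deviation utility $q_i(w_i,\bm v_{-i})\,v_i - m_i(w_i,\bm v_{-i})$ of a type-$v_i$ bidder reporting $w_i$, so that the gain from truthtelling reduces to $\int_{w_i}^{v_i} q_i(x,\bm v_{-i})\,\text{d}x - q_i(w_i,\bm v_{-i})(v_i-w_i)$. Splitting into the cases $v_i>w_i$ and $v_i<w_i$ and invoking monotonicity~(i) to bound the integrand by $q_i(w_i,\bm v_{-i})$ from above or below as appropriate, this expression is nonnegative in both cases, which is \eqref{eq:IC}.

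The only genuinely delicate step is the passage from the pointwise sandwich inequalities to the integral representation in~(ii). One cannot simply differentiate $u_i$, since an \eqref{eq:IC} allocation need only be monotone and $u_i$ is therefore merely convex and differentiable almost everywhere; the clean remedy is the sandwich-and-integrate argument above, which sidesteps differentiability entirely and uses only that a bounded monotone function is integrable. I would also note the measurability caveat that, because we restrict to $\bm q, \bm m \in \mathcal L(\cdot,\cdot)$, all of the integrals involved are well defined, so no additional regularity hypotheses are needed. As the statement itself flags, this is a classical result, so the writeup can lean on the standard argument and keep the case analysis terse.
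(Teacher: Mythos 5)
Your proposal is correct: the two-way application of \eqref{eq:IC} to obtain monotonicity, the sandwich-and-integrate derivation of the envelope/payment identity, and the substitution argument for sufficiency are exactly the classical Myerson-style proof. The paper itself does not prove this lemma but defers to the cited reference \citep{krishna2009auction}, whose Chapter~5 argument is precisely the one you reproduce, so there is nothing to reconcile beyond noting that your handling of the integrability of the monotone allocation rule (avoiding any differentiability assumption on the indirect utility) is the standard and correct way to close the delicate step.
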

From~\eqref{eq:IR}, it holds that $m_i(0, \bm v_{-i}) \leq 0$ for all $i \in \mathcal I$, and at optimality of \eqref{Reg-MDP}, it must hold that $m_i(0, \bm v_{-i}) = 0$ because the payment $m_i(\bm v)$ is increasing as $m_i(0, \bm v_{-i})$ increases.

\section{Worst-Case Regret Minimization}\label{sec: Regret-Based Mechanism Design}
In this section, we propose a mechanism that is approximately optimal in \eqref{Reg-MDP}.
In particular, we demonstrate that this mechanism generates a regret that is at most a constant multiple (dependent on the equity level) of the optimal value of \eqref{Reg-MDP}. 
We will establish this by deriving a lower bound on the optimal value of \eqref{Reg-MDP} and an upper bound on the regret
of the proposed mechanism. For ease of exposition, from now on we denote by
${\operatorname{REG}}_{(\bm {q}, \bm {m})}(\bm{v})$ the regret of a mechanism $(\bm {q}, \bm {m})$ that is incurred for a given value $\bm{v}\in\mathcal{V}$, which is defined as
\begin{equation*}
    \begin{aligned}
        {\operatorname{REG}}_{(\bm {q}, \bm {m})}(\bm{v}) = & \frac{1}{1+\gamma} \max_{i \in \mathcal I} v_i + \frac{\gamma}{1+\gamma} \max_{i \in \mathcal I^{\operatorname{min}}} v_i - \sum_{i \in \mathcal I} m_i(\bm v).
    \end{aligned}
\end{equation*}
Thus, the objective of \eqref{Reg-MDP} can be expressed as $\max_{\bm{v} \in \mathcal{V}} {\operatorname{REG}}_{(\bm {q}, \bm {m})}(\bm{v})$, and its optimal value $\operatorname{REG}^\star = \min_{(\bm q,\bm m)\in\mathcal{M}} \max_{\bm{v} \in\mathcal V} {\text{REG}}_{(\bm {q}, \bm {m})}(\bm{v})$.

\subsection{Lower Bound on the Optimal Regret}

First, we establish a lower bound on the optimal value of \eqref{Reg-MDP}.
\begin{theorem}[Lower Bound] \label{thm:regret_lower}
It holds that
$\operatorname{REG}^\star \geq \frac{1}{e}\max\left\{\overline{v}^{{\operatorname{min}}},\frac{\overline{v}^{{\operatorname{maj}}}}{1+\gamma}\right\}.$
\end{theorem}

The lower bound from \Cref{thm:regret_lower} is tight when $\gamma \to 0$ or $\gamma \to \infty$. Indeed, the equity~constraint \eqref{eq:Eq} becomes inactive as $\gamma \to 0$, meaning that no preferential treatment is applied to minority bidders and all bidders are treated equally. In this case, \Cref{thm:regret_lower} yields a lower bound on $\operatorname{REG}^\star$ of
$
\frac{1}{e}\max\left\{\overline{v}^{\operatorname{min}}, \overline{v}^{\operatorname{maj}}\right\},
$
which coincides with the known optimal worst-case regret in auction design without equity considerations from \cite{koccyiugit2024regret}. Similarly, as $\gamma \to \infty$, the equity constraint effectively forbids any allocation to majority bidders. Consequently, problem~\eqref{Reg-MDP} reduces to the same minimax regret auction design problem restricted to minority bidders only. In this case, \Cref{thm:regret_lower} yields a lower bound of
$
\frac{1}{e}\overline{v}^{\operatorname{min}},
$
which again coincides with the known optimal value reported in~\cite{koccyiugit2024regret}.

In the next section, we propose a mechanism and derive its upper bound on the worst-case regret. This upper bound matches the lower bound from \Cref{thm:regret_lower} in the two equity limits, implying the optimality of the proposed mechanism in these scenarios. For other values of $\gamma$, the lower and upper bounds allow us to quantify the suboptimality gap of the proposed mechanism.


\subsection{Proposed Mechanism and Its Regret}
Next, we propose a mechanism $(\bm{\hat{q}}, \bm{\hat{m}})$, and then, in Theorem~\ref{thm:regret_upper}, we characterize an upper bound on its regret ${\text{REG}}_{(\bm{\hat{q}}, \bm{\hat{m}})}(\bm{v})$ across all $\bm{v} \in \mathcal{V}$, which thereby also bounds the optimal value of~\eqref{Reg-MDP}. To define $(\bm{\hat{q}}, \bm{\hat{m}})$, from now on, 
we denote by ${i^{\operatorname{min}}(\bm{v})}$ (respectively, ${i^{\operatorname{maj}}(\bm{v})}$) the bidder who has the largest value in the minority (respectively, majority) group. We use the lexicographical rule to break ties when there are multiple such bidders, that is,
$${i^{\operatorname{min}}(\bm{v})} = \min \arg \max_{i \in \mathcal{I}^{\operatorname{min}}} v_i  \quad
\text{and} \quad {i^{\operatorname{maj}}(\bm{v})} = \min \arg \max_{i \in \mathcal{I}^{\operatorname{maj}}} v_i.$$ 

The definition of $(\bm {\hat{q}}, \bm {\hat{m}})$ will rely on the following quantities:
\begin{equation*}
\begin{aligned}
    &\ell(\bm{v}) = \frac{1}{1+\gamma}v_{{i^{\operatorname{maj}}(\bm{v})}} + \frac{\gamma}{1+\gamma}v_{{i^{\operatorname{min}}(\bm{v})}} \;\;\;\forall \bm v \in \mathcal V,\\
&\overline{\ell} = \frac{1}{1+\gamma}{\max\{\overline{v}^{\operatorname{min}},\overline{v}^{\operatorname{maj}}\}} + \frac{\gamma}{1+\gamma}\overline{v}^{\operatorname{min}}>0.
\end{aligned}
\end{equation*}
Consider any $\bm{v} \in \mathcal{V}$ and suppose for the sake of the discussion here that the seller knows this value. 
In this case, we can interpret~$\ell(\bm{v})$ as the revenue obtained by allocating a $\frac{1}{1+\gamma}$ proportion to the highest majority bidder and a $\frac{\gamma}{1+\gamma}$ proportion to the highest minority bidder. Note also that when $v_{i^{\operatorname{maj}}(\bm{v})} \geq v_{i^{\operatorname{min}}(\bm{v})}$, $\ell(\bm{v})$ coincides with the highest possible revenue~\eqref{eq: hindsight revenue} that the seller could equitably achieve in hindsight. In addition, $\overline{\ell}$ is equal to~\eqref{eq: hindsight revenue} 
when $\bm v = (\overline{v}_1, \dots, \overline{v}_I)$, i.e., when all bidders' values are equal to their respective upper bounds.

We now define the mechanism $(\bm {\hat{q}}, \bm {\hat{m}})$ as follows. For all minority bidders $i\in\mathcal{I}^{\operatorname{min}}$,
\begin{equation*}
    {\hat{q}}_i(\bm{v}) = \begin{cases}
        \dfrac{\gamma}{1+\gamma}\left(  1 + \log \left(\dfrac{\ell(\bm{v})}{\overline{\ell}}\right)  \right)^+ & \text{ if } i = {i^{\operatorname{min}}(\bm{v})} \text{ and } v_{{i^{\operatorname{maj}}(\bm{v})}} > v_{{i^{\operatorname{min}}(\bm{v})}} \\
        \left(  1 + \log \left(\dfrac{v_i}{\overline{\ell}}\right) \right)^+ & \text{ if } i = {i^{\operatorname{min}}(\bm{v})} \text{ and } v_{{i^{\operatorname{maj}}(\bm{v})}} \leq v_{{i^{\operatorname{min}}(\bm{v})}} \\
         0& \text{ otherwise,}
    \end{cases}
\end{equation*}
and, for all majority bidders $i\in\mathcal{I}^{\operatorname{maj}}$,
\begin{equation*}
    {\hat{q}}_i(\bm{v}) = \begin{cases}
        \dfrac{1}{1+\gamma}\left(  1 + \log \left(\dfrac{\ell(\bm{v})}{\overline{\ell}}\right) \right)^+ & \text{ if } i = {i^{\operatorname{maj}}(\bm{v})} \text{ and } v_{{i^{\operatorname{maj}}(\bm{v})}} > v_{{i^{\operatorname{min}}(\bm{v})}} \\
        0 & \text{ otherwise,}
    \end{cases}
\end{equation*}
Finally, in accordance with Lemma~\ref{lem: monotonicity of allocation}(ii), to ensure incentive compatibility, we let
$${\hat{m}}_i(\bm{v}) = v_i{\hat{q}}_i(\bm{v})-\int_{0}^{v_i}{\hat{q}}_i(x,\bm{v}_{-i}) {\rm{d}} x \quad \forall i \in \mathcal{I}.$$
Under this mechanism, only the highest bidder from the minority group and the highest bidder from the majority group can be allocated a positive proportion of the item. Specifically, if the highest minority bid exceeds the highest majority bid, no positive proportion of the item can be allocated to any majority bidder. Conversely, if the highest bid comes from the majority group, the proportion of the item allocated to the highest majority bidder is limited to $\frac{1}{\gamma}$ times the proportion allocated to the highest minority bidder, ensuring fairness.

The following remark discusses that $(\hat{\bm q}, \hat{\bm m})$ can be implemented as a set-aside mechanism.
\begin{remark}[Set-aside Interpretation]\label{rem: regret set-aside}
Set-asides are commonly used in the procurement and sales of resources to secure increased opportunities for target groups \citep{athey2013set}. These mechanisms set aside a fraction of the item(s) for the targeted (minority) group. 
The mechanism $(\hat{\bm q}, \hat{\bm m})$  employs a set-aside implementation. The allocation rule $\hat{\bm q}$ can be equivalently expressed as:
\begin{equation*}\label{eq: set-aside interpretation of hat q hat m}
\begin{aligned}
\hat{q}_i(\bm v) = \frac{1}{1 + \gamma} \hat{q}_i^{\operatorname{all}}(\bm v) + \frac{\gamma}{1 + \gamma} \hat{q}_i^{\operatorname{min}}(\bm v)\;\;\;\forall i \in \mathcal I,
\end{aligned}
\end{equation*}
where
\begin{equation*}
\begin{aligned}
&\hat{q}_i^{\operatorname{all}}(\bm v) = \begin{cases}
\left(  1 + \log \left(\dfrac{\ell'(\bm{v})}{\overline{\ell}}\right) \right)^+ &\text{if } i = \min \arg \max_{j \in \mathcal I} v_j\\
0 &\text{otherwise},
\end{cases}
\end{aligned}
\end{equation*}
\begin{equation*}
\begin{aligned}
&\hat{q}_i^{\operatorname{min}}(\bm v) = \begin{cases}
\left(  1 + \log \left(\dfrac{\ell'(\bm{v})}{\overline{\ell}}\right) \right)^+ &\text{if } i = \min \arg \max_{j \in \mathcal{I}^{\operatorname{min}}} v_j\\
0 &\text{otherwise,}
\end{cases}
\end{aligned}
\end{equation*}
where
$ \ell'(\bm{v})$ represents the maximum revenue achievable in hindsight, that is, it is equal to \eqref{eq: hindsight revenue}. We note that $ \ell'(\bm{v}) = {\ell}(\bm{v})$ if $v_{{i^{\operatorname{maj}}(\bm{v})}} > v_{{i^{\operatorname{min}}(\bm{v})}}$ and $ \ell'(\bm{v}) = v_{{i^{\operatorname{min}}(\bm{v})}}$ if $v_{{i^{\operatorname{maj}}(\bm{v})}} \leq v_{{i^{\operatorname{min}}(\bm{v})}}$.
Note that $\hat{\bm q}^{\operatorname{min}}$ allocates exclusively to the minority group. Therefore, the allocation $\hat{\bm q}$ reserves a fraction of the item, specifically $\frac{\gamma}{1 + \gamma}$, for allocation to the minority group. The payment rule $\hat{\bm m}$ can similarly be expressed as $\frac{1}{1 + \gamma} \hat{\bm m}^{\operatorname{all}}(\bm v) + \frac{\gamma}{1 + \gamma} \hat{\bm m}^{\operatorname{min}}(\bm v)$, where $\hat{\bm m}^{\operatorname{all}}$ and $\hat{\bm m}^{\operatorname{min}}$ are defined according to Lemma~\ref{lem: monotonicity of allocation}(ii) as functions of $\hat{\bm q}^{\operatorname{all}}$ and $\hat{\bm q}^{\operatorname{min}}$, respectively.  \qed 
\end{remark}

This mechanism is feasible in \eqref{Reg-MDP} as we formalize in the following proposition.
\begin{proposition}\label{prop:feasibility of robust case}
    The mechanism $(\bm {\hat{q}}, \bm {\hat{m}})$ belongs to $\mathcal M$ and is therefore feasible in \eqref{Reg-MDP}.
\end{proposition}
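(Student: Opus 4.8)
The plan is to verify each of the four constraints defining $\mathcal{M}$ in turn, handling \eqref{eq:IC} through the characterization in Lemma~\ref{lem: monotonicity of allocation}. Since $\hat{m}_i$ is defined to be exactly the payment rule of Lemma~\ref{lem: monotonicity of allocation}(ii) with $\hat{m}_i(0,\bm{v}_{-i})=0$, condition (ii) of that lemma holds by construction, so \eqref{eq:IC} reduces to verifying condition (i), namely that each $\hat{q}_i$ is non-decreasing in $v_i$. This same payment form also gives the truthful utility $v_i\hat{q}_i(\bm{v})-\hat{m}_i(\bm{v})=\int_0^{v_i}\hat{q}_i(x,\bm{v}_{-i})\,\mathrm{d}x\geq 0$, so \eqref{eq:IR} is immediate from $\hat{q}_i\geq 0$. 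Hence the real work lies in monotonicity together with \eqref{eq:AF} and \eqref{eq:Eq}.

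To establish monotonicity I would fix $\bm{v}_{-i}$ and trace $\hat{q}_i(\cdot,\bm{v}_{-i})$ as $v_i$ grows. For a minority bidder $i$, the allocation is $0$ until $v_i$ exceeds $\max_{j\in\mathcal{I}^{\text{min}}\setminus\{i\}}v_j$ and $i$ becomes ${i^{\text{min}}(\bm{v})}$; thereafter, writing $M=v_{{i^{\text{maj}}(\bm{v})}}$ (which does not depend on $v_i$), it equals $\frac{\gamma}{1+\gamma}(1+\log\ell(\bm{v}))^+$ while $v_i<M$ and $(1+\log v_i)^+$ once $v_i\geq M$. Both pieces are non-decreasing in $v_i$ (the first because $\ell(\bm{v})$ increases with $v_i$), so the only delicate point is the switch at $v_i=M$: there $\ell(\bm{v})=M$, the left piece equals $\frac{\gamma}{1+\gamma}(1+\log M)^+$ and the right piece equals $(1+\log M)^+$, and since $\frac{\gamma}{1+\gamma}<1$ the function jumps upward. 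An analogous sweep for a majority bidder $i$ shows $\hat{q}_i=0$ until $v_i$ exceeds both $\max_{j\in\mathcal{I}^{\text{maj}}\setminus\{i\}}v_j$ and $v_{{i^{\text{min}}(\bm{v})}}$, after which it equals $\frac{1}{1+\gamma}(1+\log\ell(\bm{v}))^+$, again increasing with $v_i$; at the activation threshold $v_i=v_{{i^{\text{min}}(\bm{v})}}$ one has $\ell(\bm{v})=v_{{i^{\text{min}}(\bm{v})}}$ and the value is nonnegative, so no downward jump occurs.

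For \eqref{eq:AF} and \eqref{eq:Eq} I would split on the sign of $v_{{i^{\text{maj}}(\bm{v})}}-v_{{i^{\text{min}}(\bm{v})}}$, noting that at most two bidders receive positive allocation. When $v_{{i^{\text{maj}}(\bm{v})}}>v_{{i^{\text{min}}(\bm{v})}}$, the minority and majority shares are $\frac{\gamma}{1+\gamma}(1+\log\ell(\bm{v}))^+$ and $\frac{1}{1+\gamma}(1+\log\ell(\bm{v}))^+$, which sum to $(1+\log\ell(\bm{v}))^+\leq 1$ because $\ell(\bm{v})\in[0,1]$, yielding \eqref{eq:AF}, and satisfy \eqref{eq:Eq} with equality. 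When $v_{{i^{\text{maj}}(\bm{v})}}\leq v_{{i^{\text{min}}(\bm{v})}}$, the total allocation is $(1+\log v_{{i^{\text{min}}(\bm{v})}})^+\leq 1$ to the minority group and nothing to the majority group, so \eqref{eq:AF} holds and \eqref{eq:Eq} holds trivially since its right-hand side vanishes.

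The main obstacle is the monotonicity check at the minority-bidder switch point $v_i=M$: one must confirm the allocation does not drop when the rule changes from the $\ell(\bm{v})$-based form to the $v_i$-based form, and the argument hinges precisely on the scaling factor $\frac{\gamma}{1+\gamma}<1$ producing an upward jump. Once this and the analogous majority transition are settled, \eqref{eq:IC} follows from Lemma~\ref{lem: monotonicity of allocation}, and the remaining constraints reduce to the two routine case checks above.
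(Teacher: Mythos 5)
Your proposal is correct and follows essentially the same route as the paper's proof: \eqref{eq:IC} is reduced via Lemma~\ref{lem: monotonicity of allocation} to the monotonicity of $\hat{q}_i$ in $v_i$, which is verified piecewise with the key check at the switch point $v_i = v_{i^{\text{maj}}(\bm{v})}$ where the factor $\frac{\gamma}{1+\gamma} \leq 1$ guarantees no downward jump; \eqref{eq:IR} follows from the integral form of the payments; and \eqref{eq:AF} and \eqref{eq:Eq} are settled by the same two-case computation of the group allocation totals. No gaps.
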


Next, we establish an upper bound on the regret of the feasible mechanism $(\bm {\hat{q}}, \bm {\hat{m}})$.
The bound depends on two auxiliary constants: 
\begin{equation}\label{eq: definition of beta star}
\beta^\star=\max \left\{ \frac{\overline{v}^{\operatorname{maj}}}{(1+\gamma)\overline{\ell}},\frac{1}{e} \right\},
\end{equation}
\begin{equation}\label{eq: definition of theta star}
    \theta^\star = \max_{u\in \mathcal U} \left\{
        \left( \frac{{\overline{v}^{\operatorname{maj}}}+\gamma u}{(1+\gamma){\overline{\ell}}} \right)
        \log \left( \frac{{\overline{v}^{\operatorname{maj}}}+\gamma u}{(1+\gamma){\overline{\ell}}} \right) - \max \left\{ \dfrac{u}{\overline{\ell}}, \frac{1}{e} \right\}
        \log \max \left\{ \dfrac{u}{\overline{\ell}}, \frac{1}{e} \right\} \right\},
\end{equation}
where $\mathcal U = \left[\left(\frac{{\overline{\ell}(1+\gamma)}}{e\gamma}-\frac{{\overline{v}^{\operatorname{maj}}}}{\gamma}\right)^+,{\overline{v}^{\operatorname{min}}}\right]$.
Based on the above definitions, the upper bound is formally presented in the following theorem.





\begin{theorem}[Upper Bound] \label{thm:regret_upper}
The regret of the mechanism $(\hat{\bm{q}},\hat{\bm{m}})$ satisfies
\begin{equation*}
\begin{aligned}
    {\operatorname{REG}}_{(\bm {\hat{q}}, \bm {\hat{m}})}(\bm{v}) \leq 
    {\overline{\ell}}\max\left\{\frac{1}{e}, \theta^\star - \beta^\star \log \beta^\star \right\} \;\;\;\forall \bm{v} \in \mathcal{V}.
\end{aligned}
\end{equation*}
\end{theorem}

The lower bound from \Cref{thm:regret_lower} and the upper bound from \Cref{thm:regret_upper} do not depend on $\bm{v}$.
This independence allows us to compare the two bounds and derive the approximation ratio, discussed later in \Cref{sec: approximation ratio}.
Computing the upper bound of Theorem~\ref{thm:regret_upper} requires maximizing a continuous objective function over a bounded interval $\mathcal U$. This can be done to arbitrary accuracy by grid search, so this is computationally easy. 

\subsection{Approximation Ratio and Asymptotic Optimality}\label{sec: approximation ratio}

By Theorems~\ref{thm:regret_lower} and~\ref{thm:regret_upper}, the mechanism $(\bm {\hat{q}}, \bm {\hat{m}})$ achieves a performance guarantee in terms of regret, as stated in the following corollary.

\begin{corollary}[Approximation Factor]\label{cor: constant approximation factor}
The regret of the mechanism $(\hat{\bm{q}},\hat{\bm{m}})$ is at most $\lambda$ times the optimal regret of \eqref{Reg-MDP},
where 
\begin{equation*}
\begin{aligned}
    \lambda =~\ &  {\overline{\ell}}\max\Bigl\{1, e\Bigl(\theta^\star - \beta^\star \log \beta^\star\Bigr) \Bigr\}  /  \max\Bigl\{\overline{v}^{{\operatorname{min}}},\frac{\overline{v}^{{\operatorname{maj}}}}{1+\gamma}\Bigr\},
\end{aligned}
\end{equation*}
and $\beta^\star$ and $\theta^\star$ are defined in \eqref{eq: definition of beta star} and \eqref{eq: definition of theta star}, respectively. 
\end{corollary}
\Cref{cor: constant approximation factor}
follows from Theorems~\ref{thm:regret_lower} and~\ref{thm:regret_upper}.
The ratio $\lambda$ is uniquely determined by the equity parameter $\gamma$ and the upper bounds $\overline{v}^{{\operatorname{min}}}$ and $\overline{v}^{{\operatorname{maj}}}$ of the minority and majority values, respectively. In the symmetric case where these bounds are equal—which should typically be assumed unless there is prior information suggesting otherwise—we can compute $\lambda$ numerically and obtain a worst-case approximation ratio of $1.31$, attained when $\gamma = 0.91$.

When $\overline{v}^{{\operatorname{min}}}$ and $\overline{v}^{{\operatorname{maj}}}$ diverge, the approximation ratio for each fixed $\gamma > 0$ is computed and illustrated in Figure~\ref{fig: lambda}. We observe that the bound improves when $\overline{v}^{{\operatorname{min}}}>\overline{v}^{{\operatorname{maj}}}$ and deteriorates otherwise. This highlights the challenge of achieving more equitable outcomes when minority bidders are less inclined to obtain the item compared to majority bidders. In practice, the choice of $\gamma$ reflects the mechanism designer's fairness preferences; smaller values of $\gamma$ may be preferable as they allow the mechanism designer to balance fairness guarantees against revenues. For instance, the former equitable procurement targets of the United States mandated that 15 percent of procurement dollars be allocated to minority-owned businesses (see Remark~1). 

Furthermore, Figure~\ref{fig: lambda} shows that the approximation factor $\lambda$ converges to one in the two asymptotic equity regimes as $\gamma$ approaches~zero and $\gamma$ approaches $\infty$. This means that our approximate mechanism $(\bm{\hat{q}}, \bm{\hat{m}})$ becomes optimal in these two regimes. We formally establish this result in \Cref{prop:extreme_cases} below.

\begin{figure}[!htbp]
    \centering
    \includegraphics[scale=0.45]{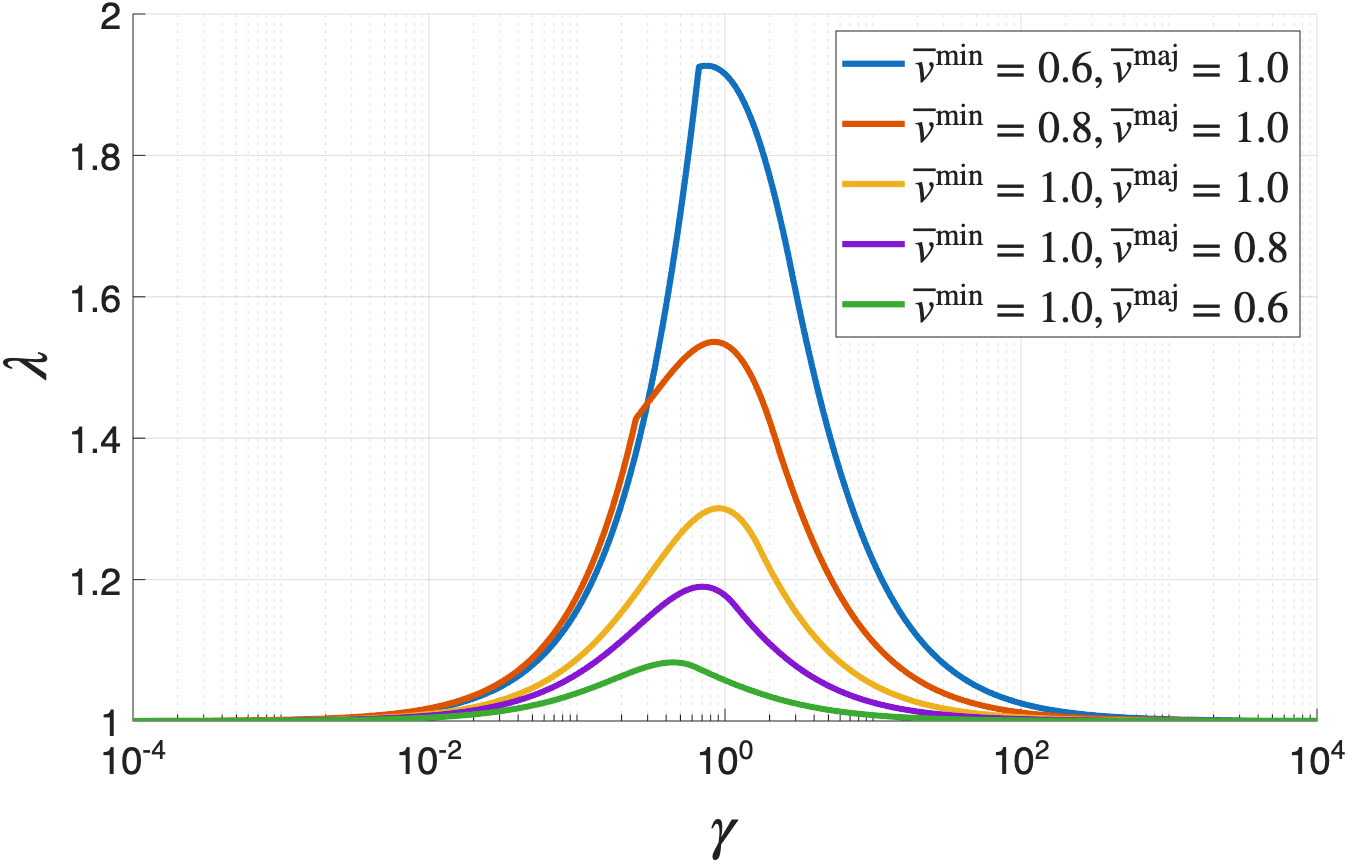}
    \caption{The approximation factor $\lambda$ as a function of the equity parameter $\gamma$ for different pairs of maximum value $(\overline{v}^{\operatorname{min}}, \overline{v}^{\operatorname{maj}})$: (0.6, 1), (0.8, 1), (1, 1), (1, 0.8), and (1, 0.6).}
    \label{fig: lambda}
\end{figure}

\begin{proposition}\label{prop:extreme_cases}
The approximation factor $\lambda$ tends to one as $\gamma$ tends to zero or to infinity.
\end{proposition}
In the two equity-limit scenarios $\gamma \downarrow 0$ and as $\gamma \uparrow \infty$, the factor $\lambda$ tends to~$1$, and
the lower bound from \Cref{thm:regret_lower} and the upper bound from \Cref{thm:regret_upper} become tight. 
Consequently, our proposed mechanism $(\bm{\hat{q}}, \bm{\hat{m}})$ is asymptotically optimal in these two regimes.

While allocation fairness is the primary concern and motivation of this study, we also find, somewhat serendipitously, that the minority group benefits from an additional layer of pricing protection under the proposed mechanism. Specifically, the ratio of payment to allocation for the minority group is no worse than that of its majority counterpart.
We formally establish this property in Proposition~\ref{prop:fairness of payments} where we extend it to all set-aside mechanisms $(\bm q^{\operatorname{sa}}, \bm m^{\operatorname{sa}})$ that are individually rational and incentive compatible and of the following form:
\begin{equation*}
\begin{aligned}
&q^{\operatorname{sa}}_i(\bm v) = t q_i^{\operatorname{all}}(\bm v) + (1-t) 
q_i^{\operatorname{min}}(\bm v)\;\;\;&\forall i \in \mathcal I,\\
\end{aligned}
\end{equation*}
where $t\in[0,1]$ and $(q^{\rm{all}},q^{\rm{min}})$ satisfy
\begin{equation*}
\begin{aligned}
    &q_i^{\operatorname{all}}(\bm v) = 0 \quad \text{ whenever } i \neq \min \arg \max_{j \in \mathcal I} v_j,\\
    &q_i^{\operatorname{min}}(\bm v) = 0 \quad \text{ whenever } i \neq \min \arg \max_{j \in \mathcal I^{\operatorname{min}}} v_j.
    \end{aligned}
\end{equation*}
By Remark~\ref{rem: regret set-aside}, $(\bm{\hat{q}}, \bm{\hat{m}})$ is a set-aside mechanism. 

\begin{proposition}\label{prop:fairness of payments}
For all $ \bm{v} \in \mathcal{V}$, it holds that
$q^{\operatorname{sa}}_{i^{\operatorname{min}}(\bm{v})}(\bm{v}) m^{\operatorname{sa}}_{i^{\operatorname{maj}}(\bm{v})}(\bm{v}) \geq q^{\operatorname{sa}}_{i^{\operatorname{maj}}(\bm{v})}(\bm{v}) m^{\operatorname{sa}}_{i^{\operatorname{min}}(\bm{v})}(\bm{v}).$
\end{proposition}
By \Cref{prop:fairness of payments}, whenever the two bidders' allocations are non-zero, the highest minority bidder enjoys a \textit{unit price} (the ratio of payment to allocation) no greater than that of the highest majority bidder. By the definition of $(\bm {q}^{\operatorname{sa}}, \bm {m}^{\operatorname{sa}})$, the allocation and the payment for a group's highest bidder are exactly the total allocation and total payments for the whole group. 

\section{Expected Revenue Maximization}\label{sec: stochastic}

While the primary focus of the paper is to propose a robust mechanism for more equitable allocation outcomes when the distribution of bidders’ values $\bm v$ is unknown or difficult to estimate, alternative approaches become relevant when the distribution can be accurately estimated. In particular, under perfect distributional information, the standard objective in mechanism design is (expected) revenue maximization (this coincides with expected regret minimization when the distribution is known). Without equity considerations, the optimal mechanism for this objective admits a closed-form solution (see, e.g., \citep{myerson1981optimal}) under the independence and regularity conditions discussed below. We extend this analytical result by incorporating an equity constraint that ensures the total allocation to minority bidders is not disproportionately small compared to that of the majority group. Later in \Cref{sec: Numerical Experiment}, we use this mechanism in our numerical study. While it performs best in-sample, its revenue drops significantly out-of-sample, suggesting that its optimality is fragile—especially in comparison to the assumption-free and distribution-free mechanism we propose in Section 3.

We now introduce the stochastic mechanism design problem with the revenue maximization objective. We model each bidder $i$'s value as a continuous random variable $\tilde{v}_i$ governed by a cumulative distribution function~$\mathbb{F}_i$ with support $\mathcal{V}_i$. The joint cumulative distribution function of the bidders' values is denoted by $\mathbb{F}$. We assume that $\mathbb{F}$ is known to the seller and solve
\begin{equation}\label{Rev-MDP}\tag{S-MDP}
\begin{aligned}
&\max_{(\bm q,\bm m)\in\mathcal{M}} &&\mathbb{E}_{\mathbb F}\left[\sum_{i \in \mathcal I} m_i(\tilde{\bm v})\right].
\end{aligned}
\end{equation}

The goal of this section is to characterize the optimal mechanism $(\bm q^\star, \bm m^\star)$ for problem \eqref{Rev-MDP}. This characterization relies on the following assumption on the distribution of the bidders' values, which we assume to hold throughout this section.

\begin{assumption}[Independence \& Regularity]\label{assum:regularity}
Bidders' values $\tilde v_i$ are mutually independent under~$\mathbb F$.
Furthermore, distribution $\mathbb F$ is regular, i.e., for all $i \in \mathcal I$, the marginal density $f_i$ of $\tilde v_i$ exists and is strictly positive on $\mathcal V_i$, and the virtual value function defined as
\begin{equation*}
    \begin{aligned}
        \psi_i(v_i) = v_i - \frac{1 - \mathbb F_i(v_i)}{f_i(v_i)}
    \end{aligned}
\end{equation*}
is non-decreasing in $v_i$ on $\mathcal V_i$. 
\end{assumption}
Independence and regularity are standard assumptions and widely used in mechanism design literature. The class of regular distributions is very large and contains, for example, uniform, normal, logistic and exponential distributions and their truncations; see e.g., \citep{ewerhart2013regular}.




For all minority bidders $i \in \mathcal I^\text{min}$, we claim that the optimal allocation rule is
\begin{equation*}
    \begin{aligned}
        q_i^\star(\bm v) = \begin{cases}
            1 &\text{if } i = \min \arg \max_{j \in  \mathcal I} \psi_j(v_{j}), \,\psi_i(v_i) \geq 0\\
            \dfrac{\gamma}{1 + \gamma} &\text{if } i = \min \arg \max_{j \in  \mathcal I^\text{min}} \psi_j(v_{j}),\, \psi_i(v_{i}) < \max_{j \in  \mathcal I^\text{maj}} \psi_j(v_{j}),\\
&\quad\quad\quad\quad\quad\quad\quad\quad\quad\;\;\text{and } \max_{j \in \mathcal I^{\operatorname{maj}}} \psi_j(v_j) + \gamma \psi_i(v_i) \geq 0,\\
            0 &\text{otherwise,}
        \end{cases}
    \end{aligned}
\end{equation*}
and for all majority bidders $i \in \mathcal I^\text{maj}$, it is
\begin{equation*}
    \begin{aligned}
        q_i^\star(\bm v) = \begin{cases}
            \dfrac{1}{1 + \gamma} &\text{if } i = \min \arg \max_{j \in  \mathcal I} \psi_j(v_{j}),\, \psi_i(v_i) + \gamma \max_{j \in \mathcal I^{\operatorname{min}}} \psi_j(v_j) \geq 0\\
            0 &\text{otherwise.}
        \end{cases}
    \end{aligned}
\end{equation*}
Here, we again use a lexicographic tie-breaker to break ties, but it does not play a role in our analysis.
Following Lemma~\ref{lem: monotonicity of allocation}~(ii), the optimal payment rule for each $i\in\mathcal{I}$ is then given by
\begin{equation*}
    \begin{aligned}
        m_i^\star(\bm v) = q^\star_i(\bm v)v_i - \int_{0}^{{v}_i} q^\star_i(x, \bm v_{-i})\text{d}x.
    \end{aligned}
\end{equation*}

The revenue-based mechanism also adopts a set-aside interpretation.

\begin{remark}[Set-aside Interpretation]
The optimal mechanism $(\bm q^\star, \bm m^\star)$ incorporates a set-aside approach. Indeed, the optimal allocation rule~$\bm q^\star$ can be equivalently expressed as:
\begin{equation*}
\begin{aligned}
{q}^\star_i(\bm v) = \frac{1}{1 + \gamma} {q}_i^{\text{all}}(\bm v) + \frac{\gamma}{1 + \gamma} {q}_i^{\operatorname{min}}(\bm v)\;\;\;\forall i \in \mathcal I,
\end{aligned}
\end{equation*}
where
\begin{equation*}
\begin{aligned}
&q_i^{\text{all}}(\bm v) = \begin{cases}
1 &\text{if } i = \min \arg \max_{j \in \mathcal I} \psi_{j}(v_j),  \psi_{i}(v_i) + \gamma \max_{j \in \minority} \psi_{j}(v_j) \geq 0\\
0 &\text{otherwise,}
\end{cases}
\end{aligned}
\end{equation*}
\begin{equation*}
\begin{aligned}
&q_i^{\operatorname{min}}(\bm v) = \begin{cases}
1 &\text{if } i = \min \arg \max_{j \in \minority} \psi_{j}(v_j), \max_{j \in \all} \psi_{j}(v_j) + \gamma \psi_{i}(v_i) \geq 0\\
0 &\text{otherwise.}
\end{cases}
\end{aligned}
\end{equation*}
Note that $\bm q^{\operatorname{min}}$ allocates exclusively to the minority group. Therefore, the optimal allocation $\bm q^\star$ reserves a fraction of the item, specifically $\frac{\gamma}{1 + \gamma}$, for allocation to the minority group. The payment rule $\bm m^\star$ can similarly be expressed as $\frac{1}{1 + \gamma} \bm m^{\text{all}}(\bm v) + \frac{\gamma}{1 + \gamma} \bm m^{\operatorname{min}}(\bm v)$, where $\bm m^{\text{all}}$ and $\bm m^{\operatorname{min}}$ are defined according to Lemma \ref{lem: monotonicity of allocation}(ii) as functions of $\bm q^{\text{all}}$ and $\bm q^{\operatorname{min}}$, respectively. \qed 
\end{remark}

We next establish the feasibility and the optimality of $(\bm q^\star, \bm m^\star)$.

\begin{proposition}\label{prop:feasibility of stochastic case}
    The mechanism $(\bm q^\star, \bm m^\star)$ belongs to $\mathcal M$ and is therefore feasible in \eqref{Rev-MDP}.
\end{proposition}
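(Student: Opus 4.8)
The plan is to verify directly that the candidate mechanism satisfies each of the four constraints defining $\mathcal M$, namely \eqref{eq:IC}, \eqref{eq:IR}, \eqref{eq:AF} and \eqref{eq:Eq}. The payment rule $\bm m^\star$ is already written in the form prescribed by Lemma~\ref{lem: monotonicity of allocation}~(ii) with $m_i^\star(0, \bm v_{-i}) = 0$, so by that lemma \eqref{eq:IC} will follow as soon as I establish that each $q_i^\star(v_i, \bm v_{-i})$ is non-decreasing in $v_i$. Once monotonicity is in hand, \eqref{eq:IR} is immediate: plugging the payment into the truthful utility gives $q_i^\star(\bm v) v_i - m_i^\star(\bm v) = \int_{0}^{v_i} q_i^\star(x, \bm v_{-i})\,\text{d}x \geq 0$ because $q_i^\star \geq 0$. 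Thus the two payment-related constraints collapse to a single monotonicity check, while \eqref{eq:AF} and \eqref{eq:Eq} are statements about $\bm q^\star$ alone.

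For monotonicity, fix bidder $i$ and the competitors' values $\bm v_{-i}$, and abbreviate $\Psi^{\text{maj}}_{-i} = \max_{j \in \mathcal I^{\text{maj}} \setminus \{i\}} \psi_j(v_j)$ and $\Psi^{\text{min}}_{-i} = \max_{j \in \mathcal I^{\text{min}} \setminus \{i\}} \psi_j(v_j)$. Since $\mathbb F$ is regular, $\psi_i(v_i)$ is non-decreasing in $v_i$, so it suffices to show $q_i^\star$ is non-decreasing as a function of the scalar $\psi_i(v_i)$. For a minority bidder, as $\psi_i(v_i)$ rises it first fails to be the minority-group maximizer (allocation $0$), then becomes the minority maximizer while still below the majority maximizer (entering the middle branch with value $\tfrac{\gamma}{1+\gamma}$ once the sign condition $\Psi^{\text{maj}}_{-i} + \gamma \psi_i(v_i) \geq 0$ holds), and finally overtakes the majority maximizer (entering the top branch with value $1$ once $\psi_i(v_i) \geq 0$); since $0 \leq \tfrac{\gamma}{1+\gamma} \leq 1$ and the thresholds are crossed in increasing order of $\psi_i(v_i)$, the resulting step function is non-decreasing. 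For a majority bidder, the allocation is $0$ until $\psi_i(v_i)$ exceeds $\Psi^{\text{min}}_{-i}$, $\Psi^{\text{maj}}_{-i}$ and the threshold $-\gamma \Psi^{\text{min}}_{-i}$, after which it equals $\tfrac{1}{1+\gamma}$, again a single upward step. The point requiring care, and the main obstacle of the argument, is to confirm that these thresholds are ordered consistently in the degenerate regimes: for instance, when $\Psi^{\text{maj}}_{-i} < 0$ the sign condition forces $\psi_i(v_i) \geq -\Psi^{\text{maj}}_{-i}/\gamma > \Psi^{\text{maj}}_{-i}$, contradicting $\psi_i(v_i) < \Psi^{\text{maj}}_{-i}$, so the middle branch is in fact never active and no spurious non-monotonicity can arise. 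I would dispatch this bookkeeping by a short sign analysis across the branches.

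Finally, for \eqref{eq:AF} and \eqref{eq:Eq} I would partition $\mathcal V^I$ according to which group attains the overall maximal virtual value, writing $\Psi^{\text{min}}(\bm v) = \max_{j \in \mathcal I^{\text{min}}} \psi_j(v_j)$ and $\Psi^{\text{maj}}(\bm v) = \max_{j \in \mathcal I^{\text{maj}}} \psi_j(v_j)$. When $\Psi^{\text{min}}(\bm v) \geq \Psi^{\text{maj}}(\bm v)$, the lexicographic rule awards the overall maximizer to the minority group (minority indices are smaller), and the allocation is either identically zero (if $\Psi^{\text{min}}(\bm v) < 0$) or assigns $1$ to a single minority bidder and $0$ to everyone else. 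When $\Psi^{\text{maj}}(\bm v) > \Psi^{\text{min}}(\bm v)$, the allocation is either identically zero or, triggered by the common sign condition $\Psi^{\text{maj}}(\bm v) + \gamma \Psi^{\text{min}}(\bm v) \geq 0$, assigns $\tfrac{1}{1+\gamma}$ to the top majority bidder and $\tfrac{\gamma}{1+\gamma}$ to the top minority bidder. In every case $\sum_{i \in \mathcal I} q_i^\star(\bm v) \in \{0,1\}$, which yields \eqref{eq:AF}; and the minority-to-majority allocation ratio is either infinite (majority receives nothing) or exactly $\gamma$, since $\tfrac{\gamma}{1+\gamma} = \gamma \cdot \tfrac{1}{1+\gamma}$, which yields \eqref{eq:Eq} with equality in the second regime. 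The structural fact that makes this step clean is that the three branches are mutually exclusive and, whenever the majority leads, split one unit of the good exactly in the ratio $\gamma : 1$, so that the equity constraint binds precisely there.
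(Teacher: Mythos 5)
Your proposal is correct and follows essentially the same route as the paper: reduce \eqref{eq:IC} to monotonicity of $\bm q^\star$ via Lemma~\ref{lem: monotonicity of allocation}, obtain \eqref{eq:IR} for free from the integral payment formula, and verify \eqref{eq:AF} and \eqref{eq:Eq} by casing on which group attains the top virtual value, where the two active branches share the common sign condition and split the good exactly in the ratio $\gamma:1$. The only cosmetic difference is that you establish monotonicity directly by ordering the branch thresholds (correctly noting that the middle branch is vacuous when the top majority virtual value is negative), whereas the paper argues by contradiction; the underlying sign analysis is identical.
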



\begin{theorem}\label{theorem:optimal mechanism REV}
    The mechanism $(\bm q^\star, \bm m^\star)$ is optimal in \eqref{Rev-MDP}.
\end{theorem}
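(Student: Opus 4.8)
The plan is to exploit the virtual-value reformulation in Lemma~\ref{lem: reformulation in terms of virtual values} and to argue that $\bm q^\star$ maximizes the expected virtual surplus pointwise. In problem~\eqref{eq: reformulation REV-MDP} the payment rule is pinned down by the allocation rule through the payment identity, and the objective $\mathbb E_{\mathbb F}[\sum_{i \in \mathcal I} \psi_i(\tilde v_i) q_i(\tilde{\bm v})]$ depends only on $\bm q$; moreover the constraints \eqref{eq:AF} and \eqref{eq:Eq} are imposed separately at each realization $\bm v$. I would therefore first \emph{relax} the monotonicity requirement on each $q_i$. Since this produces a relaxation of \eqref{eq: reformulation REV-MDP}, its optimal value is an upper bound on the optimal expected revenue, and the relaxed problem decouples across realizations. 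It thus suffices to solve, for each fixed $\bm v$, the linear program that maximizes $\sum_{i \in \mathcal I} \psi_i(v_i)\, x_i$ over $\bm x \in [0,1]^I$ subject to $\sum_{i \in \mathcal I} x_i \le 1$ and $\sum_{i \in \mathcal I^\text{min}} x_i \ge \gamma \sum_{i \in \mathcal I^\text{maj}} x_i$.

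Second, I would reduce this pointwise program to two effective variables. Because the objective is linear and both constraints depend on $\bm x$ only through the group totals $\sum_{i \in \mathcal I^\text{min}} x_i$ and $\sum_{i \in \mathcal I^\text{maj}} x_i$, any feasible allocation is weakly improved by shifting all minority mass onto a minority bidder attaining $a := \max_{j \in \mathcal I^\text{min}} \psi_j(v_j)$ and all majority mass onto a majority bidder attaining $b := \max_{j \in \mathcal I^\text{maj}} \psi_j(v_j)$, without disturbing feasibility. Writing $s$ and $t$ for the masses placed on these two representatives, the program collapses to $\max\{\, a s + b t : s,t \ge 0,\ s+t \le 1,\ s \ge \gamma t \,\}$, whose feasible region is the triangle with vertices $(0,0)$, $(1,0)$, and $\bigl(\tfrac{\gamma}{1+\gamma},\tfrac{1}{1+\gamma}\bigr)$.

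Third, I would identify the optimal vertex by direct comparison. The vertex $(1,0)$ yields value $a$, the vertex $\bigl(\tfrac{\gamma}{1+\gamma},\tfrac{1}{1+\gamma}\bigr)$ yields $\tfrac{\gamma a + b}{1+\gamma}$, and the origin yields $0$. Pairwise comparison shows that $(1,0)$ beats the mixed vertex iff $a \ge b$, the mixed vertex beats the origin iff $\gamma a + b \ge 0$, and $(1,0)$ beats the origin iff $a \ge 0$. Matching these inequalities against the definition of $\bm q^\star$ confirms, case by case, that $\bm q^\star(\bm v)$ selects an optimal vertex for every $\bm v$: the minority-takes-all vertex $(1,0)$ exactly when the overall best bidder is a minority bidder with $\psi \ge 0$; the split $\bigl(\tfrac{\gamma}{1+\gamma},\tfrac{1}{1+\gamma}\bigr)$ exactly when the mixed vertex dominates and $b + \gamma a \ge 0$; and no allocation otherwise. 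Hence $\bm q^\star$ attains the upper bound furnished by the relaxation.

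Finally, I would close the loop. By Proposition~\ref{prop:feasibility of stochastic case} the mechanism $(\bm q^\star, \bm m^\star)$ lies in $\mathcal M$, so via Lemma~\ref{lem: monotonicity of allocation} the allocation $\bm q^\star$ is monotone and $\bm m^\star$ obeys the payment identity by construction; thus $(\bm q^\star, \bm m^\star)$ is feasible in the \emph{unrelaxed} problem \eqref{eq: reformulation REV-MDP}. Since it achieves the relaxation's optimal value, it is optimal for \eqref{eq: reformulation REV-MDP} and therefore, by the equivalence in Lemma~\ref{lem: reformulation in terms of virtual values}, for \eqref{Rev-MDP}. I expect the main obstacle to be the bookkeeping in the third step: one must verify that the piecewise definition of $\bm q^\star$ reproduces the optimal vertex across every ordering and sign configuration of the virtual values (ties being resolved by the lexicographic rule), and in particular that degenerate patterns—such as $a < b$ together with $b + \gamma a < 0$, which forces $a < 0$—correctly fall into the ``allocate nothing'' case.
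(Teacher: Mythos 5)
Your proposal is correct and follows essentially the same route as the paper's proof: relax the monotonicity constraints, decouple pointwise into a linear program, aggregate to a two-variable program over the group masses with vertex set $\{(0,0),(1,0),(\gamma,1)/(1+\gamma)\}$, match the optimal vertex to the case structure of $\bm q^\star$, and then invoke Proposition~\ref{prop:feasibility of stochastic case} to recover monotonicity and hence optimality in the unrelaxed problem. The only cosmetic difference is that you justify the two-variable reduction by a direct mass-shifting argument, whereas the paper treats it as a further relaxation that is then shown to be tight; the two are equivalent.
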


\section{Numerical Illustration}\label{sec: Numerical Experiment}

We now numerically assess the performance of the regret-based robust mechanism $(\hat{\bm{q}},\hat{\bm{m}})$ from Section~\ref{sec: Regret-Based Mechanism Design} and the revenue-based mechanism $(\bm{q}^\star,\bm{m}^\star)$ from Section~\ref{sec: stochastic}.
Throughout this numerical illustration, we assume there are $I = 2$ bidders: one minority bidder (bidder $1$) with the value support $[0,\overline{v}^{\operatorname{min}}]$ and one majority bidder (bidder $2$) with the value support $[0,\overline{v}^{\operatorname{maj}}]$. We set the equity level $\gamma = \frac{1}{4}$. 
The robust mechanism $(\hat{\bm{q}},\hat{\bm{m}})$ can be obtained without knowing the bidders' value distribution. 
The revenue-based mechanism $(\bm{q}^\star,\bm{m}^\star)$ is derived from a crisp (but possibly misspecified) value distribution $\mathbb{F}$. We assume that bidders' values are mutually independent under $\mathbb{F}$ and that its marginals are scaled beta distributions, each with shape parameters equal to $2$. That is, $\frac{\tilde v_1}{\overline{v}^{\operatorname{min}}},\frac{\tilde v_2}{\overline{v}^{\operatorname{maj}}} \sim \operatorname{Beta}(2,2)$ and are mutually independent.
The density of $\operatorname{Beta}(2,2)$ 
is bell-shaped and symmetric and thus resembles the normal distribution.

In practice, the estimated (in-sample) distribution is likely to be subject to estimation errors. In other words, the true (out-of-sample) distribution of the bidders' values $\tilde{\bm{v}}$ may deviate from the estimated distribution~$\mathbb{F}$. To capture this, we assume that the true distribution is representable as $\mathbb{F}^{\epsilon,\rho} = (1-\epsilon)\mathbb{F} + \epsilon \mathbb{B}^{\rho}$, where $\epsilon \in [0,1]$ represents the contamination level, and $\mathbb{B}^{\rho}$, $\rho \in [-1,+1]$, denotes an extremal distribution to be defined below. The distribution $\mathbb{F}^{\epsilon,\rho}$ is a contaminated version of $\mathbb{F}$, where smaller values of $\epsilon$ indicate greater fidelity in the estimated distribution $\mathbb{F}$, while larger values indicate lower fidelity.

We define the extremal distribution $\mathbb{B}^{\rho}$ as the distribution supported on the extreme points of $\mathcal{V} = [0,\overline{v}^{\operatorname{min}}]\times[0,\overline{v}^{\operatorname{maj}}]$, i.e., $\{0,\overline{v}^{\operatorname{min}}\}\times\{0,\overline{v}^{\operatorname{maj}}\}$, under which
\begin{equation*}
    (\tilde{v}_1,\tilde{v}_2) = \begin{cases}
        (0,0) & \text{ with probability } \frac{1+\rho}{4} \\
        (0,\overline{v}^{\operatorname{maj}}) & \text{ with probability } \frac{1-\rho}{4} \\
        (\overline{v}^{\operatorname{min}},0) & \text{ with probability } \frac{1-\rho}{4} \\    (\overline{v}^{\operatorname{min}},\overline{v}^{\operatorname{maj}}) & \text{ with probability } \frac{1+\rho}{4}.
    \end{cases}
\end{equation*}
Note that $\tilde{v}_1$ and $\tilde{v}_2$ may not be independent under $\mathbb{B}^{\rho}$, depending on the value of $\rho$, which can be interpreted as a correlation parameter.
Additionally, the distribution $\mathbb{B}^{\rho}$ is discrete. Thus, unlike $\mathbb{F}$, $\mathbb{B}^{\rho}$ is not regular. As a result, the distribution $\mathbb{F}^{\epsilon,\rho}$ is also irregular, except when $\epsilon = 0$.

We evaluate the performance of the mechanisms $(\hat{\bm{q}},\hat{\bm{m}})$ and $(\bm{q}^\star,\bm{m}^\star)$ primarily in terms of the revenue they generate and the regret they incur. This evaluation follows a stress-test approach, conducted on the distribution $\mathbb{F}^{\epsilon,\rho}$, which may differ from $\mathbb{F}$. For comprehensiveness, we introduce two other benchmark mechanisms:
\begin{itemize}[noitemsep]
    \item Inspired by~\cite{pai2012auction} and~\cite{fallah2024fair}, we consider a variant of~\eqref{Rev-MDP} that enforces equity only in expectation, i.e., the original~\eqref{eq:Eq} is replaced by
    \begin{align*}
    &\mathbb{E}_{\mathbb{F}} \left[ \sum_{i \in \mathcal I^{\operatorname{min}}} q_i(\tilde{\bm v}) \right] \geq \gamma \mathbb{E}_{\mathbb{F}} \left[ \sum_{i \in \mathcal I^{\operatorname{maj}}} q_i(\tilde{\bm v}) \right].
    \end{align*}
    The resulting mechanism design problem is a relaxation of~\eqref{Rev-MDP}, and we denote the optimal mechanism of this relaxed problem as $(\overline{\bm{q}},\overline{\bm{m}})$. 
    \item We also consider the optimal mechanism $({\bm{q}}^{\epsilon,\rho},{\bm{m}}^{\epsilon,\rho})$ in~\eqref{Rev-MDP}, which is tailored to the true distribution $\mathbb{F}^{\epsilon,\rho}$.
    We emphasize that this mechanism cannot be realistically computed because, in practice, the distribution of bidders' values is not known and estimation errors are inevitable.
\end{itemize}
We remark that the mechanism $(\overline{\bm{q}}, \overline{\bm{m}})$ may lie outside $\mathcal{M}$ and fail to be fair ex-post. Additionally, to the best of our knowledge, no closed-form expression exists for $(\overline{\bm{q}}, \overline{\bm{m}})$. Likewise, our analytical characterization of $({\bm{q}}^{\star}, {\bm{m}}^{\star})$ does not straightforwardly extend to $({\bm{q}}^{\epsilon,\rho}, {\bm{m}}^{\epsilon,\rho})$ due to the irregularity of $\mathbb{F}^{\epsilon,\rho}$. Hence, unlike $(\bm{q}^\star, \bm{m}^\star)$ and $(\hat{\bm{q}}, \hat{\bm{m}})$, we must resort to a numerical approach to find $(\overline{\bm{q}}, \overline{\bm{m}})$ and $({\bm{q}}^{\epsilon,\rho}, {\bm{m}}^{\epsilon,\rho})$. Since the respective mechanism design problems are infinite-dimensional linear programs, we solve a discrete approximation. 
Specifically, we approximate $\mathcal{V}$
by the grid
$\big([0,\overline{v}^{\operatorname{min}}]\cap \delta \mathbb{Z}\big)\times\big([0,\overline{v}^{\operatorname{maj}}]\cap \delta \mathbb{Z}\big)$,
where $\delta = 0.01$, and assign each grid point the probability mass of its lower-left grid cell under $\mathbb{F}$.
The discrete approximation of distribution $\mathbb{F}^{\epsilon,\rho}$ is obtained by replacing $\mathbb{F}$ with its discrete approximation in the construction of $\mathbb{F}^{\epsilon,\rho}$.
The resulting finite-dimensional linear programs are then solved in MATLAB R2025a via the YALMIP interface \citep{Lofberg2004} and the MOSEK solver \citep{mosek}. We remark that this discretization scheme for computing $(\overline{\bm{q}}, \overline{\bm{m}})$ and $({\bm{q}}^{\epsilon,\rho}, {\bm{m}}^{\epsilon,\rho})$  is only reasonably accurate and computationally efficient when the number of bidders~$I$ is small. In contrast, our proposed mechanisms  $(\hat{\bm{q}},\hat{\bm{m}})$ and $(\bm{q}^\star,\bm{m}^\star)$ are available in closed form for any $I$.

In our experiment, we first set $\rho = 0$ and consider $\epsilon \in \{0, 0.1, \dots, 1\}$. We repeat the analysis for three support configurations, $(\overline{v}^{\operatorname{min}},\overline{v}^{\operatorname{maj}})\in\{(0.8,1),(1,1),(1,0.8)\}$, and compute the expected revenues of the four mechanisms under the discrete approximation of the distribution $\mathbb{F}^{\epsilon, 0}$. 
Note that for a fixed $(\overline{v}^{\operatorname{min}},\overline{v}^{\operatorname{maj}})$ and a fixed $\rho$, the mechanism $({\bm{q}}^{\epsilon, \rho}, {\bm{m}}^{\epsilon, \rho})$ must be recomputed every time $\epsilon$ changes. 
We use the expected revenue of $({\bm{q}}^{\epsilon, \rho}, {\bm{m}}^{\epsilon, \rho})$ to normalize the expected revenues of the remaining three mechanisms. 
By construction, the revenues of $(\hat{\bm{q}},\hat{\bm{m}})$ and $(\bm{q}^\star,\bm{m}^\star)$ cannot exceed that of $({\bm{q}}^{\epsilon, \rho}, {\bm{m}}^{\epsilon, \rho})$, because the discrete approximation relaxes the original constraints of $\mathcal{M}$, and our approximation of the distribution favors $({\bm{q}}^{\epsilon, \rho}, {\bm{m}}^{\epsilon, \rho})$ in terms of expected revenues. Therefore, the normalized revenues of $(\hat{\bm{q}},\hat{\bm{m}})$ and $(\bm{q}^\star,\bm{m}^\star)$ cannot exceed one. In fact, the revenues of $(\bm{q}^\star, \bm{m}^\star)$ and $({\bm{q}}^{\epsilon, \rho}, {\bm{m}}^{\epsilon, \rho})$ do not necessarily match even when $\epsilon = 0$ due to discretization, which favors the latter.
The comparison between $({\bm{q}}^{\epsilon,\rho},{\bm{m}}^{\epsilon,\rho})$ and $(\overline{\bm{q}},\overline{\bm{m}})$ is less straightforward because $(\overline{\bm{q}},\overline{\bm{m}})$ is only required to satisfy the equity constraint in expectation.
For this reason, the normalized revenue of $(\overline{\bm{q}},\overline{\bm{m}})$ can exceed one. However, this potential excess revenue comes at the cost of violating the equity constraint~\eqref{eq:Eq}.  
In fact, the probability that $\sum_{i \in \mathcal I^{\operatorname{min}}} \overline{q}_i(\tilde{\bm v}) < \gamma \sum_{i \in \mathcal I^{\operatorname{maj}}} \overline{q}_i(\tilde{\bm v})$ uniformly exceeds 40\% under $\mathbb{F}$ for the three support configurations, $(\overline{v}^{\operatorname{min}},\overline{v}^{\operatorname{maj}})\in\{(0.8,1),(1,1),(1,0.8)\}$, implying a significant chance that the equity constraint fails to hold ex-post.

\begin{figure}[!htbp]
    \centering
    \includegraphics[scale=0.30]{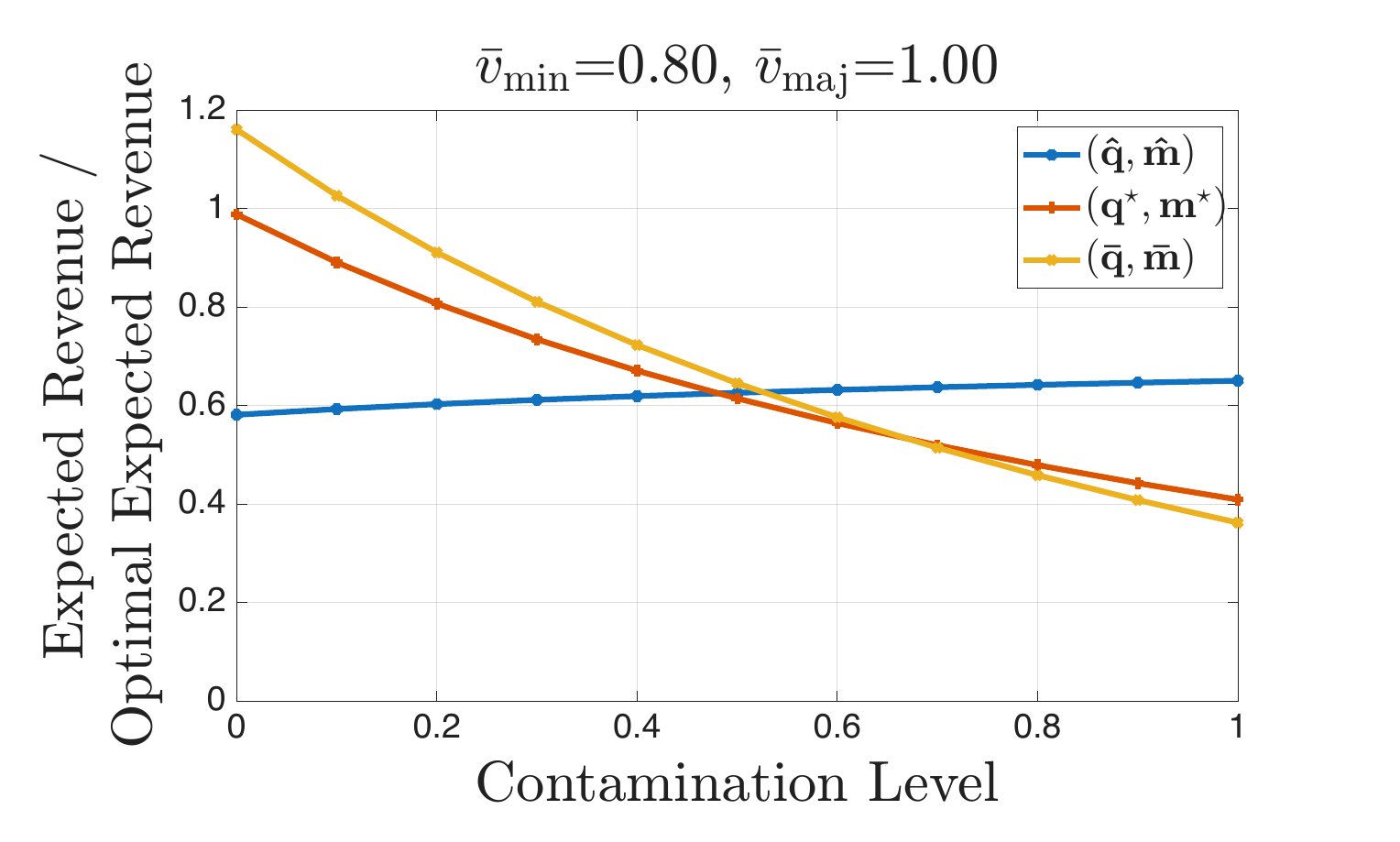}
    \includegraphics[scale=0.30]{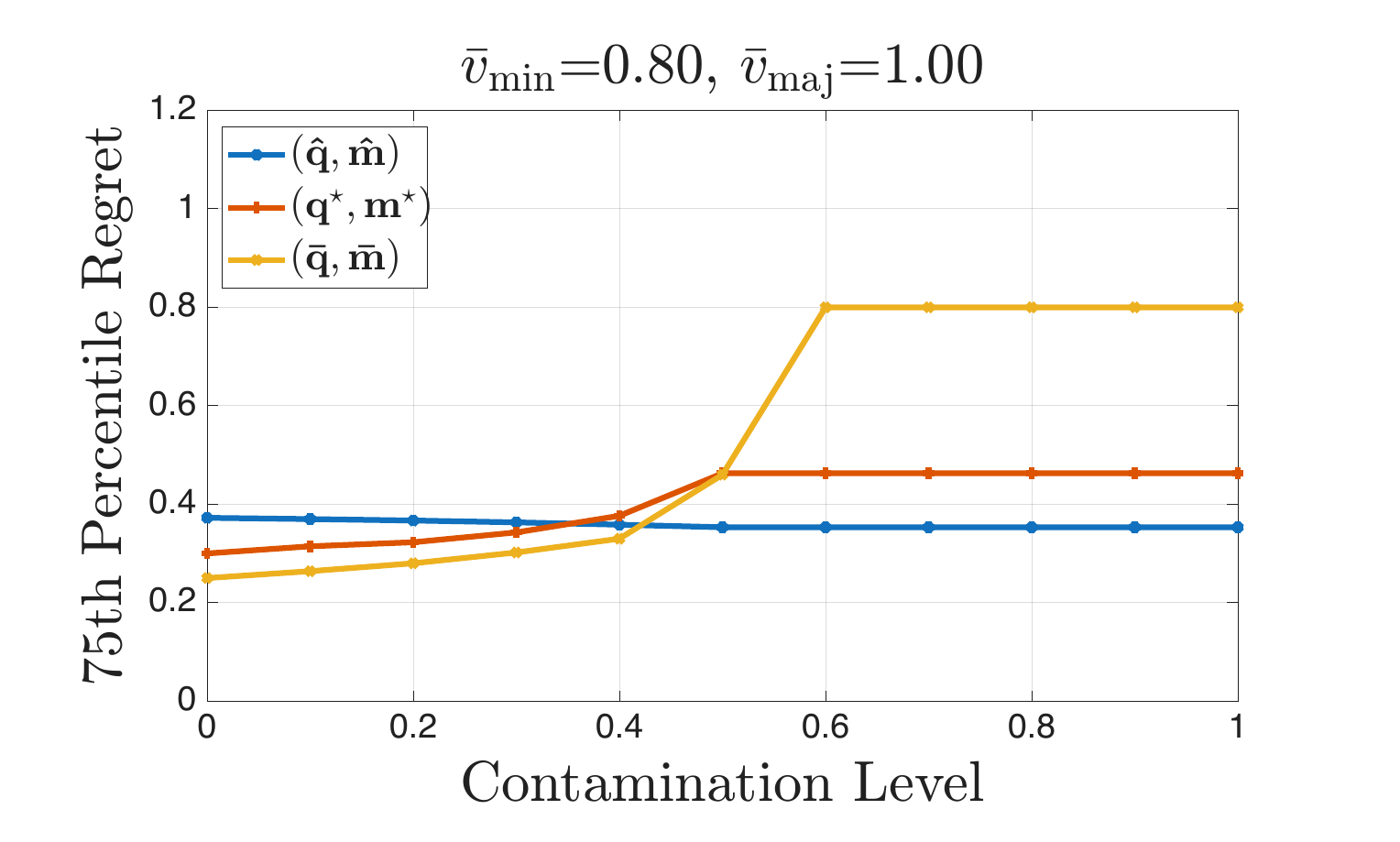}
    
    \includegraphics[scale=0.30]{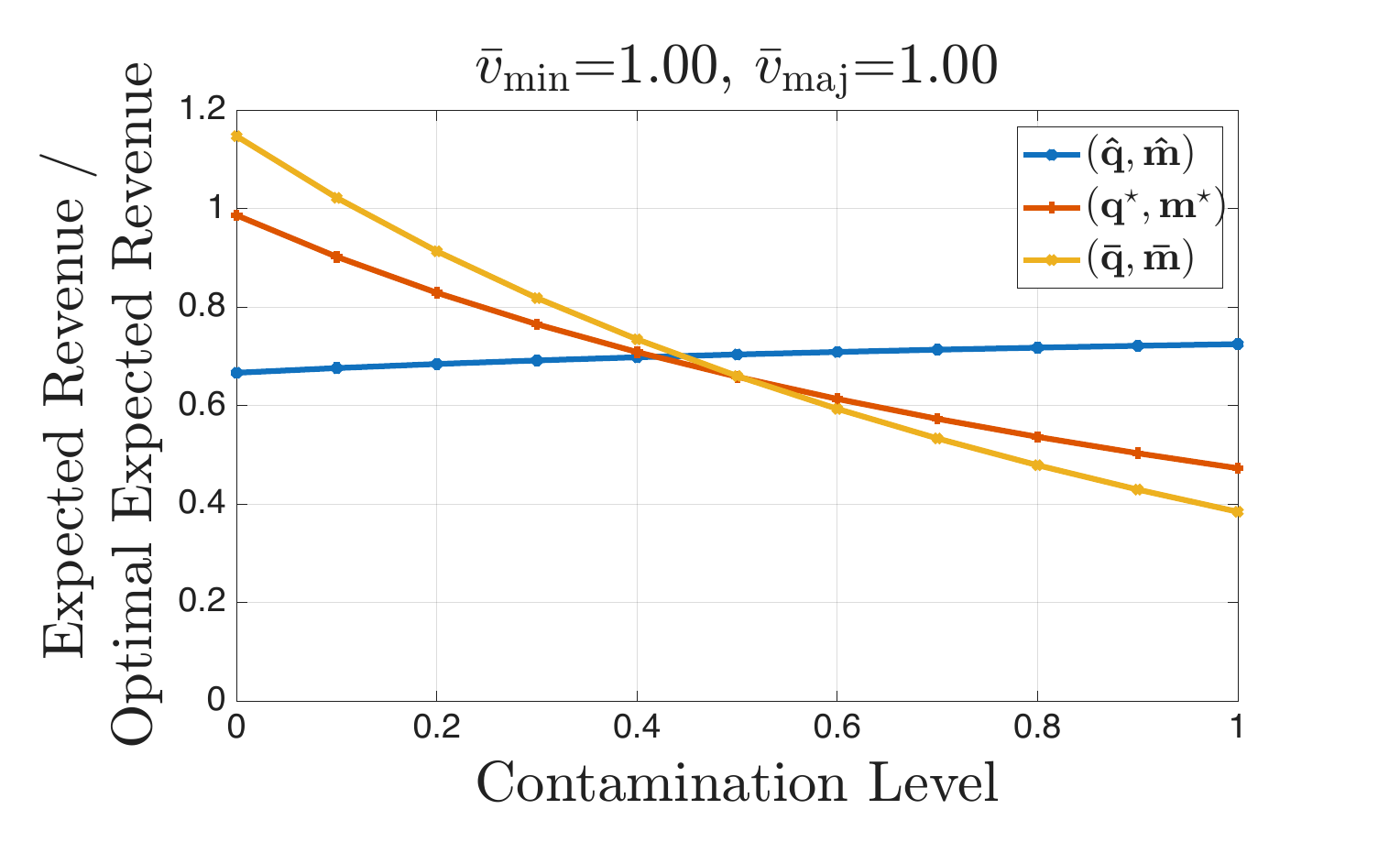}
    \includegraphics[scale=0.30]{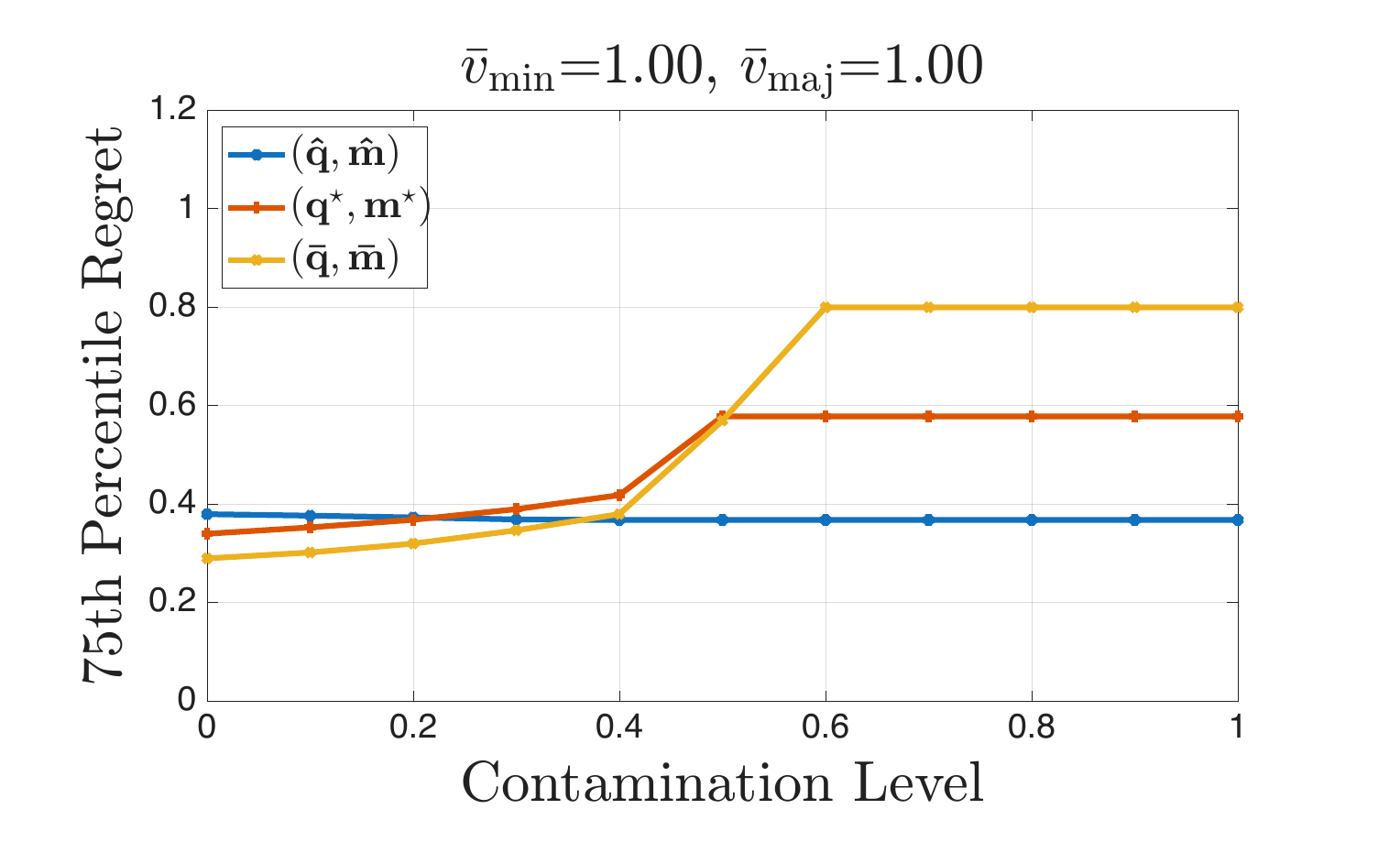}
    
    \includegraphics[scale=0.30]{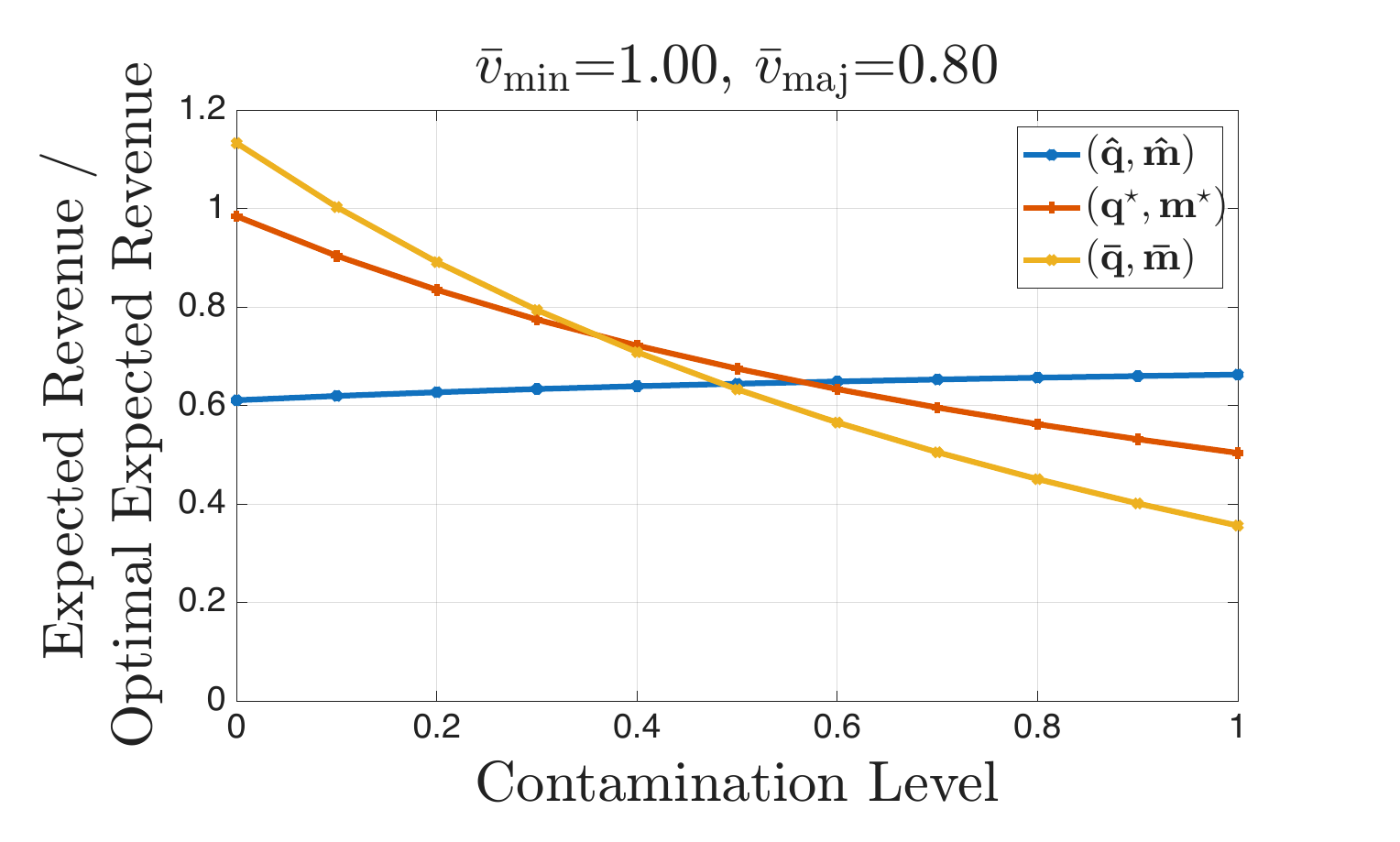}
    \includegraphics[scale=0.30]{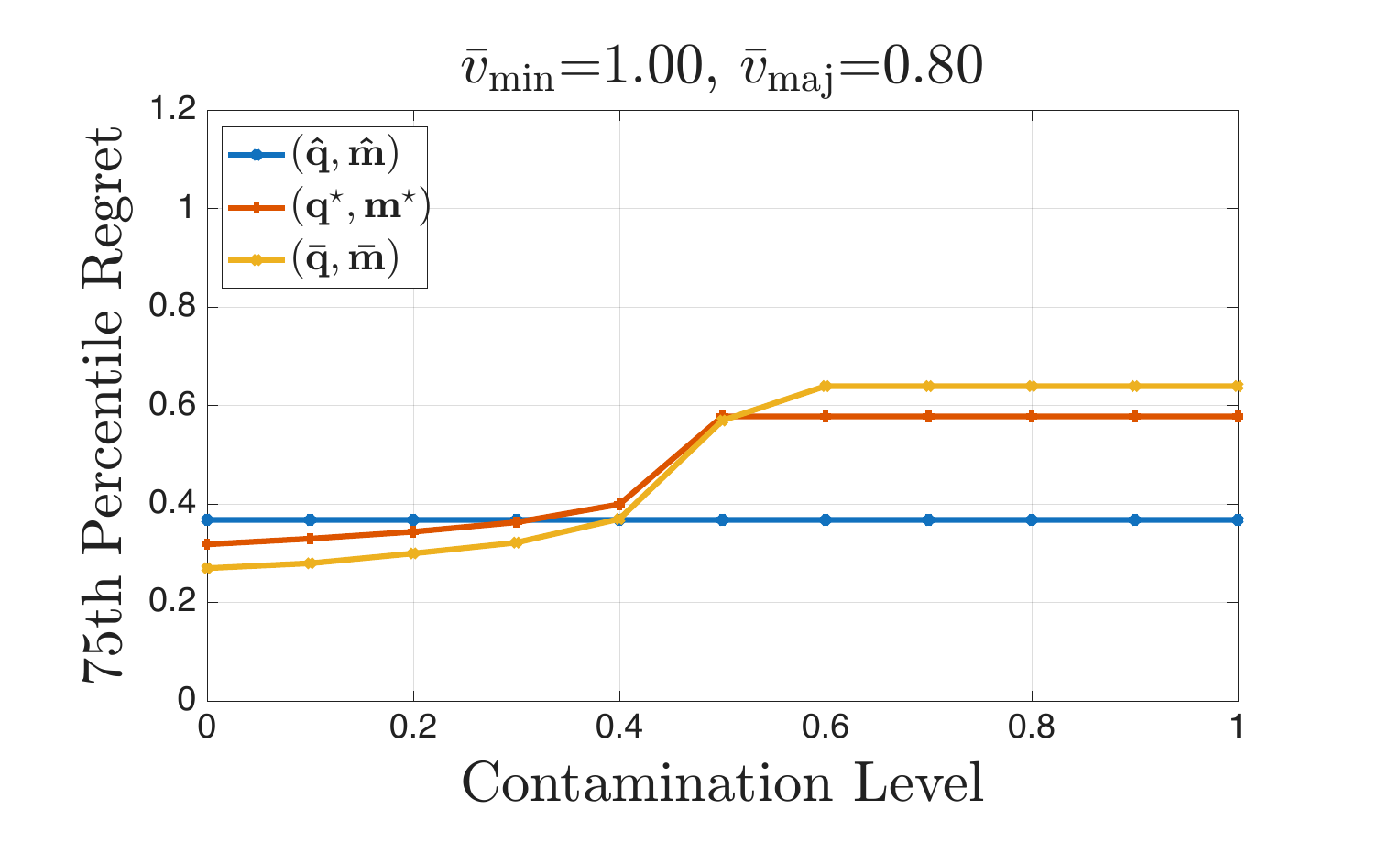}
    \caption{Normalized expected revenues (left) and the upper quartile regrets (right) of the three mechanisms when $\rho = 0$ and $(\overline{v}^{\operatorname{min}},\overline{v}^{\operatorname{maj}}) = (0.8,1),(1,1),(1,0.8)$: the regret-based mechanism $(\hat{\bm{q}},\hat{\bm{m}})$ (blue), the revenue-based mechanism $(\bm{q}^\star,\bm{m}^\star)$ (red) and its variant $(\overline{\bm{q}}, \overline{\bm{m}})$ that enforces equity only in expectation (yellow).}
    \label{fig:zero_rho}
\end{figure}

Figure~\ref{fig:zero_rho} summarizes the normalized expected revenues (left panel) and the 75th percentile regrets (right panel) of $(\hat{\bm{q}}, \hat{\bm{m}}),(\bm{q}^\star, \bm{m}^\star)$ and $(\overline{\bm{q}}, \overline{\bm{m}})$
under different contamination levels and the three support configurations: $(\overline{v}^{\operatorname{min}},\overline{v}^{\operatorname{maj}})\in\{(0.8,1),(1,1),(1,0.8)\}$.
We find (unsurprisingly) that the revenue-based mechanism $(\bm{q}^\star, \bm{m}^\star)$ outperforms the regret-based mechanism $(\hat{\bm{q}}, \hat{\bm{m}})$ when the contamination level $\epsilon$ is small in terms of both metrics. However, we observe that the expected revenue and the 75th percentile regret of $(\hat{\bm{q}}, \hat{\bm{m}})$ remain relatively stable across different contamination levels, highlighting the robustness of the mechanism, and actually outperform those of $(\bm{q}^\star, \bm{m}^\star)$ and $(\overline{\bm{q}}, \overline{\bm{m}})$ as $\epsilon$ increases.
Interestingly, the performance of the mechanism $(\overline{\bm{q}}, \overline{\bm{m}})$ deteriorates significantly (more so in comparison to $(\bm{q}^\star, \bm{m}^\star)$) as $\epsilon$ increases. We attribute this to the mechanism's high reliance on the (incorrect) distribution $\mathbb{F}$ in both the objective and the feasible set. Compared to the other mechanisms, it is more prone to overfitting.

We then repeat the experiment for other $\rho$ values from $\{0.5, -0.5\}$, and the obtained results are presented in Appendix~\ref{appendix: experiment}.
Similar observations can be made for these cases. We observe that, as the contamination level $\epsilon$ grows, the decline in the performance of $(\bm{q}^\star, \bm{m}^\star)$ and $(\overline{\bm{q}}, \overline{\bm{m}})$ is faster when $\rho$ is negative and slower when $\rho$ is positive. The performance of the regret-based mechanism $(\hat{\bm{q}}, \hat{\bm{m}})$ remains relatively stable.

\bibliographystyle{abbrvnat}
\bibliography{bib}

\clearpage

\clearpage
\section*{Appendix}
\renewcommand{\thesection}{\Alph{section}}
\setcounter{section}{0}
\section{Proofs} \label{appendix: proofs}

\begin{proof}{Proof of Theorem \ref{thm:regret_lower}}
The proposed lower bound contains the maximum of two components. In the proof, we show that each of these components, multiplied by the constant $\frac{1}{e}$, constitutes a lower bound on the optimal value $\operatorname{REG}^\star$ of~\eqref{Reg-MDP}. 
For ease of exposition in the proof, we define $\mathcal V^{\rm min} = \times_{i \in \mathcal I^{\operatorname{min}}} \mathcal V_i$ and $\mathcal V^{\rm maj} = \times_{i \in \mathcal I^{\operatorname{maj}}} \mathcal V_i$ as the sets of potential values of the bidders in the minority and majority groups, respectively.

We first prove that $\frac{1}{e} \overline{v}^{{\operatorname{min}}}$ is a lower bound. This relies on the construction of another lower bound on $\operatorname{REG}^\star$ by shrinking the uncertainty set from $\mathcal{V}$ to $\mathcal V' = \{\bm v \in \mathcal V \;|\; v_i = 0 \;\forall i \in \mathcal I^{\operatorname{maj}}\}$. The set $\mathcal{V}'$ can be interpreted as the case where the uncertainty about the majority bidders’ values is removed, and their values are set to zero. We have
\begin{equation}\label{Reg-MDP-lbmin}
\begin{aligned}
    \operatorname{REG}^\star &\geq \min_{(\bm q,\bm m)\in\mathcal{M}} \max_{\bm{v} \in\mathcal V'} \;{\operatorname{REG}}_{(\bm {q}, \bm {m})}(\bm{v}) \\
    & = \min_{(\bm q,\bm m)\in\mathcal{M}} \;\max_{\bm{v} \in\mathcal V'}  \;\;\left(\frac{1}{1+\gamma} \max_{i \in \mathcal I} v_i + \frac{\gamma}{1+\gamma} \max_{i \in \mathcal I^\text{min}} v_i - \sum_{i \in \mathcal I} m_i(\bm v)\right) \\
    &\geq \min_{(\bm q,\bm m)\in\mathcal{M}} \;\max_{\bm{v} \in \mathcal V'}  \;\;\left(\max_{i \in \mathcal I^\text{min}} v_i - \sum_{i \in \mathcal I^{\rm min}} m_i(\bm v)\right),
\end{aligned}
\end{equation}
where the second inequality holds because $m_{i}(\bm {v}) \leq 0$ for all $i \in \mathcal I^{\rm maj}$ as implied by~\eqref{eq:IR} and that the values of these bidders are equal to $0$ across $\mathcal V'$. Note that the last minimax problem above can be interpreted as a variant of~\eqref{Reg-MDP}, involving only quantities related to minority bidders in its objective. 
In particular, the objective function is only a function of the allocations and payments for the minority bidders. In the absence of majority bidders, the equity requirement~\eqref{eq:Eq} also no longer applies, as reflected by the disappearance of the equity parameter $\gamma$. With this observation in mind, to establish a lower-bounding problem for the last minimax problem in~\eqref{Reg-MDP-lbmin}, considering only the allocations and payments of minority bidders as decision variables, we now introduce a new feasible set $\mathcal{M}^{\rm min} \subseteq \mathcal{L}(\mathcal{V}^{\rm min},[0,1]^{I^{\rm min}}) \times \mathcal{L}(\mathcal{V}^{\rm min},\mathbb{R}^{I^{\rm min}})$. This set represents the collection of admissible mechanisms defined solely for minority bidders, i.e., the set of all mechanisms $(\bm{q}^{\operatorname{min}}, \bm{m}^{\operatorname{min}}) \in \mathcal{L}(\mathcal{V}^{\rm min},[0,1]^{I^{\rm min}}) \times \mathcal{L}(\mathcal{V}^{\rm min},\mathbb{R}^{I^{\rm min}})$ that satisfy the following incentive compatibility, individual rationality and allocation feasibility constraints:
\begin{equation*}
\begin{aligned}
        &q^{\rm min}_i(v_i,\bm{v}_{-i}) v_i  - m^{\rm min}_i(v_i, \bm{v}_{-i})  \geq  q^{\rm min}_i(w_i,\bm{v}_{-i}) v_i - m^{\rm min}_i(w_i, \bm{v}_{-i}) \;\, \forall i \in \mathcal{I}^{\rm min}, \forall \bm v \in \mathcal{V}^{\rm min}, \forall w_i \in \mathcal V_i\\
&q^{\rm min}_i(v_i,\bm{v}_{-i}) v_i  - m^{\rm min}_i(v_i, \bm{v}_{-i}) \geq 0 \;\; \forall i \in \mathcal{I}^{\rm min}, \forall \bm v \in \mathcal{V}^{\rm min}\\
&\sum_{i \in \mathcal{I}^{\rm min}} q^{\rm min}_i(\bm v) \leq 1 \;\; \forall \bm v \in \mathcal{V}^{\rm min}.
\end{aligned}
\end{equation*}
We next aim to show that, for any $(\bm{q},\bm{m}) \in \mathcal{M}$, we can construct a $(\bm{q}^{\operatorname{min}}, \bm{m}^{\operatorname{min}}) \in \mathcal{M}^{\rm min}$ such that the allocations and payments for the minority bidders coincide under the two mechanisms across~$\mathcal V'$. This will allow us to replace the minimum over $(\bm{q},\bm{m}) \in \mathcal{M}$ in the last minimax problem in~\eqref{Reg-MDP-lbmin} with a minimum over $(\bm{q}^{\operatorname{min}}, \bm{m}^{\operatorname{min}}) \in \mathcal{M}^{\rm min}$ to establish a lower bound. To this end, we define the vector-valued function $f \in \mathcal{L}(\mathcal{V}^{\rm min}, \mathcal V)$ through  
$$f(\bm v) = (v_1 , \dots, v_{I^{\rm min}}, 0, \dots, 0) \;\;\forall \bm v \in\mathcal{V}^{\rm min}.$$ 
Now, for any $(\bm{q},\bm{m}) \in \mathcal{M}$, define $(\bm{q}^{\operatorname{min}}, \bm{m}^{\operatorname{min}})$ through 
$${q}_i^{\operatorname{min}}(\bm v) = q_i(f(\bm v)), \;\; {m}_i^{\operatorname{min}}(\bm v) = m_i(f(\bm v)) \;\;\forall i \in \mathcal{I}^{\rm min}, \,\forall \bm v \in \mathcal{V}^{\rm min}.$$ 
It can be readily verified that $(\bm{q}^{\operatorname{min}}, \bm{m}^{\operatorname{min}}) \in \mathcal{M}^{\rm min}$ and that the allocations and payments for the minority bidders under $(\bm{q},\bm{m})$ remain the same across $\mathcal V'$. Therefore, the last line in~\eqref{Reg-MDP-lbmin} can be bounded below by 
\begin{equation*}
\begin{aligned}
    &\min_{(\bm{q}^{\operatorname{min}}, \bm{m}^{\operatorname{min}})\in\mathcal{M}^{\rm min}} \max_{\bm{v} \in\mathcal V'} \;\; \left(\max_{i \in \mathcal I^\text{min}} v_i - \sum_{i \in \mathcal I^{\rm min}} m^{\operatorname{min}}_i(v_1, \dots, v_{{I}^{\rm min}})\right)\\
    &= \min_{(\bm{q}^{\operatorname{min}}, \bm{m}^{\operatorname{min}})\in\mathcal{M}^{\rm min}} \max_{\bm{v} \in\mathcal V^{\rm min}} \;\; \left(\max_{i \in \mathcal I^\text{min}} v_i - \sum_{i \in \mathcal I^{\rm min}} m^{\operatorname{min}}_i(\bm v)\right) = \frac{1}{e}\overline{v}^{\rm min},
    \end{aligned}
\end{equation*}
where the first equality holds because $(\bm{q}^{\operatorname{min}}, \bm{m}^{\operatorname{min}})$ does not rely on the majority bidders' values. The second equality follows from Theorem~4 of~\cite{koccyiugit2024regret}, which characterizes the optimal worst-case regret in multi-bidder auctions without fairness considerations, thereby proving the claim $\frac{1}{e} \overline{v}^{{\operatorname{min}}}$ is a lower bound.

To verify the second lower bound $\frac{1}{e}\frac{\overline{v}^{{\operatorname{maj}}}}{1+\gamma}$, we obtain a lower bound on $\operatorname{REG}^\star$ by shrinking the uncertainty set from $\mathcal{V}$ to $\mathcal V'' = \{\bm v \in \mathcal V \;|\; v_i = 0 \;\forall i \in \mathcal I^{\operatorname{min}}\}$. We have
\begin{equation}\label{Reg-MDP-lbmaj}
\begin{aligned}
    \operatorname{REG}^\star &\geq \min_{(\bm q,\bm m)\in\mathcal{M}} \;\max_{\bm{v} \in\mathcal V''} \;{\text{REG}}_{(\bm {q}, \bm {m})}(\bm{v}) \\
    &=\min_{(\bm q,\bm m)\in\mathcal{M}} \;\max_{\bm{v} \in\mathcal V''}\; \left( \frac{1}{1+\gamma} \max_{i \in \mathcal I} v_i + \frac{\gamma}{1+\gamma} \max_{i \in \mathcal I^\text{min}} v_i - \sum_{i \in \mathcal I} m_i(\bm v) \right) \\
    &\geq \min_{(\bm q,\bm m)\in\mathcal{M}} \;\max_{\bm{v} \in\mathcal V''} \;\left( \frac{1}{1+\gamma}\max_{i \in \mathcal I^{\rm maj}} v_i - \sum_{i \in \mathcal I^{\rm maj}} m_i(\bm v) \right),
\end{aligned}
\end{equation}
where the second inequality holds because $m_{i}(\bm {v}) \leq 0$ for all $i \in \mathcal I^{\rm min}$ as implied by~\eqref{eq:IR} and that the values of these bidders are equal to $0$ across $\mathcal V''$. Similarly to the first part, note that the last minimax problem above can be interpreted as a variant of~\eqref{Reg-MDP} that involves only quantities related to majority bidders in its objective. 
The key difference, however, is that the equity parameter $\gamma$ remains present in this case. Next, we similarly introduce a new feasible set $\mathcal{M}^{\rm maj} \subseteq \mathcal{L}(\mathcal{V}^{\rm maj},[0,1]^{I^{\rm maj}}) \times \mathcal{L}(\mathcal{V}^{\rm maj},\mathbb{R}^{I^{\rm maj}})$, which represents the collection of admissible mechanisms defined solely for majority bidders, i.e., the set of all mechanisms $(\bm{q}^{\operatorname{maj}}, \bm{m}^{\operatorname{maj}}) \in \mathcal{L}(\mathcal{V}^{\rm maj},[0,1]^{I^{\rm maj}}) \times \mathcal{L}(\mathcal{V}^{\rm maj},\mathbb{R}^{I^{\rm maj}})$ that satisfy the following incentive compatibility, individual rationality and allocation feasibility constraints:
\begin{equation*}
\begin{aligned}
        &q^{\rm maj}_i(v_i,\bm{v}_{-i}) v_i  - m^{\rm maj}_i(v_i, \bm{v}_{-i})  \geq  q^{\rm maj}_i(w_i,\bm{v}_{-i}) v_i - m^{\rm maj}_i(w_i, \bm{v}_{-i}) \;\, \forall i \in \mathcal{I}^{\rm maj}, \forall \bm v \in \mathcal{V}^{\rm maj}, \forall w_i \in \mathcal V_i\\
&q^{\rm maj}_i(v_i,\bm{v}_{-i}) v_i  - m^{\rm maj}_i(v_i, \bm{v}_{-i}) \geq 0 \;\; \forall i \in \mathcal{I}^{\rm maj}, \forall \bm v \in \mathcal{V}^{\rm maj}\\
&\sum_{i \in \mathcal{I}^{\rm maj}} q^{\rm maj}_i(\bm v) \leq 1 \;\; \forall \bm v \in \mathcal{V}^{\rm maj}.
\end{aligned}
\end{equation*}
We will show that the last minimax problem in~\eqref{Reg-MDP-lbmaj} can be bounded below by
\begin{equation*}
    \frac{1}{1+\gamma}\min_{(\bm q^{\operatorname{maj}},\bm m^{\operatorname{maj}})\in\mathcal{M}^{\rm maj}} \;\max_{\bm{v} \in\mathcal V^{\rm maj}}\; \left( \max_{i \in \mathcal I^{\rm maj}} v_i - \sum_{i \in \mathcal I^{\rm maj}} m^{\operatorname{maj}}_i(\bm v) \right).
\end{equation*}
To this end, consider any $(\bm{q},\bm{m}) \in \mathcal{M}$ and note that the total allocation to the majority bidders is uniformly bounded above by $\frac{1}{1+\gamma}$ due to the equity constraint~\eqref{eq:Eq}. We construct a $(\bm{q}^{\operatorname{maj}}, \bm{m}^{\operatorname{maj}}) \in \mathcal{M}^{\rm maj}$ from $(\bm{q},\bm{m})$ as follows. Let $g \in \mathcal{L}(\mathcal{V}^{\rm maj}, \mathcal V)$ be a vector-valued function defined through  
$$g(\bm v) = (\underbrace{0, \dots, 0}_{I^{\rm min}}, v_1 , \dots, v_{I^{\rm maj}}) \;\;\forall \bm v \in\mathcal{V}^{\rm maj}.$$
Define $(\bm{q}^{\operatorname{maj}}, \bm{m}^{\operatorname{maj}})$ through 
$${q}_i^{\operatorname{maj}}(\bm v) = (1 + \gamma)q_i(g(\bm v)), \;\; {m}_i^{\operatorname{maj}}(\bm v) = (1 + \gamma) m_i(g(\bm v)) \;\;\forall i \in \mathcal{I}^{\rm maj}, \,\forall \bm v \in \mathcal{V}^{\rm maj}.$$ 
Since incentive compatibility and individual rationality constraints are preserved under non-negative scaling of the mechanism, and because $\bm{q}$ is uniformly bounded above by $\frac{1}{1+\gamma}$, it can be readily verified that $(\bm{q}^{\operatorname{maj}}, \bm{m}^{\operatorname{maj}}) \in \mathcal{M}^{\rm maj}$. Additionally, the allocations and payments under $(\bm{q}^{\operatorname{maj}}, \bm{m}^{\operatorname{maj}})$ are equal to $(1 + \gamma)$ times those under $(\bm{q},\bm{m})$ across $\mathcal V''$. Therefore, the last line in~\eqref{Reg-MDP-lbmaj} can be bounded below by 
\begin{equation*}
    \begin{aligned}
        &\min_{(\bm{q}^{\operatorname{maj}}, \bm{m}^{\operatorname{maj}}) \in \mathcal{M}^{\rm maj}} \;\max_{\bm{v} \in\mathcal V''} \;\left( \frac{1}{1+\gamma}\max_{i \in \mathcal I^{\rm maj}} v_i - \frac{1}{1+\gamma}\sum_{i \in \mathcal I^{\rm maj}} m^{\operatorname{maj}}_i(v_{I^{\rm min}+1}, \dots, v_{I}) \right)\\
        &= \frac{1}{1+\gamma} \min_{(\bm{q}^{\operatorname{maj}}, \bm{m}^{\operatorname{maj}}) \in \mathcal{M}^{\rm maj}} \;\max_{\bm{v} \in\mathcal V^{\operatorname{maj}}} \;\left(\max_{i \in \mathcal I^{\rm maj}} v_i - \sum_{i \in \mathcal I^{\rm maj}} m^{\operatorname{maj}}_i(\bm v) \right),
    \end{aligned}
\end{equation*}
where the first inequality follows from~\eqref{Reg-MDP-lbmaj}, and the last equality holds because $(\bm{q}^{\operatorname{maj}}, \bm{m}^{\operatorname{maj}})$ does not rely on minority bidders' values. Finally, analogously to the first part, we observe that
\begin{equation*}
    \begin{aligned}
    \min_{(\bm{q}^{\operatorname{maj}}, \bm{m}^{\operatorname{maj}}) \in \mathcal{M}^{\rm maj}} \;\max_{\bm{v} \in\mathcal V^{\operatorname{maj}}} \;\left(\max_{i \in \mathcal I^{\rm maj}} v_i - \sum_{i \in \mathcal I^{\rm maj}} m^{\operatorname{maj}}_i(\bm v) \right) = \frac{1}{e}\overline{v}^{\rm maj},
    \end{aligned}
\end{equation*}
thanks to Theorem~4 of~\cite{koccyiugit2024regret}. This observation concludes the second part of the proof and, together with the first part, completes the proof. 
\end{proof}

\begin{proof}{Proof of Proposition \ref{prop:feasibility of robust case}}
We will first prove that the constraint \eqref{eq:IC} holds under the mechanism $(\bm{\hat{q}},\bm{\hat{m}})$ for both minority bidders {(\it Case 1)} and majority bidders {(\it Case 2)}.  
Then, we will establish the feasibility of the same mechanism in view of the other constraints, namely~\eqref{eq:IR},~\eqref{eq:Eq} and ~\eqref{eq:AF}, to conclude the proof.

To prove that $(\bm{\hat{q}},\bm{\hat{m}})$ satisfies the \eqref{eq:IC} constraint, it is sufficient to show that conditions (i) and (ii) in Lemma \ref{lem: monotonicity of allocation} hold. By the definition of $\bm{\hat{m}}$, condition (ii) in Lemma~\ref{lem: monotonicity of allocation} immediately holds. 
Hence, it remains to establish condition~(i), that is, to show that the allocation rule ${\hat{q}}_i$ is non-decreasing in $v_i$ for all $i\in\mathcal{I}$.

{\it Case 1} ($i \in \minority$):
By construction, we can re-express the allocation~$\hat{q}_i$ as 
\begin{equation*}
\begin{aligned}
    \hat{q}_i(\bm{v}) &= \begin{cases}
        \dfrac{\gamma}{1+\gamma}\left(  1 + \log \left({{\dfrac{\ell(\bm{v})}{\overline{\ell}}}}\right) \right)^+\mathbb{I}_{\{i={i^{\operatorname{min}}(\bm{v})}\}} & \text{ if }  v_i < v_{{i^{\operatorname{maj}}(\bm{v})}} \\
        \left(1 + \log \left(\dfrac{v_i}{\overline{\ell}}\right) \right)^+\mathbb{I}_{\{i={i^{\operatorname{min}}(\bm{v})}\}} & \text{ otherwise. }
    \end{cases}
\end{aligned}
\end{equation*}
Consider an arbitrary fixed $\bm{v}_{-i}$. We aim to show that $\hat{q}_i$ is non-decreasing in $v_i$ given the fixed~$\bm{v}_{-i}$. Note that the value $v_{i^{\rm maj}}(\bm{v})$ is not affected by the values of the minority bidders. This means that the value $v_{i^{\rm maj}}(\bm{v})$ is a fixed quantity for the given fixed $\bm{v}_{-i}$. To avoid ambiguity, we denote this value by~$s$.
Note that $\ell(v_i,\bm{v}_{-i})$ and the indicator function $\mathbb{I}_{\{i={i^{\operatorname{min}}(v_i,\bm{v}_{-i})}\}}$ are non-decreasing in~$v_i$. Since logarithm function is increasing, our $\hat{q}_i(\cdot,\bm{v}_{-i})$ is non-decreasing in intervals $[0, s)$ and $[s, +\infty)$. 
It thus follows that $\hat{q}_i(\cdot,\bm{v}_{-i})$ is non-decreasing in the interval $\mathcal{V}_i = [0,\overline{v}_i]$ if $\overline{v}_i< s$. For the case that $\overline{v}_i\geq s$, suppose, for the sake of contradiction, that there exist $\alpha \in [0,s)$ and $\beta \in [s,\overline{v}_i]$ such that 
$\hat{q}_i(\alpha,\bm{v}_{-i}) > \hat{q}_i(\beta,\bm{v}_{-i}) \geq 0$. This means that $\mathbb{I}_{\{i={i^{\operatorname{min}}(\alpha,\bm{v}_{-i})}\}} = 1$ because otherwise $\hat{q}_i(\alpha,\bm{v}_{-i}) = 0$, and this further implies that $\mathbb{I}_{\{i={i^{\operatorname{min}}(\beta,\bm{v}_{-i})}\}} = 1$ as $\beta > \alpha$. Therefore, $\hat{q}_i(\alpha,\bm{v}_{-i}) > \hat{q}_i(\beta,\bm{v}_{-i})$ holds true if and only if
\begin{equation*} 
\begin{aligned}
    \frac{\gamma}{1+\gamma} \left( 1 + \log \left(\dfrac{\ell(\alpha,\bm{v}_{-i})}{\overline\ell}\right)  \right)^+ 
    > \left( 1 + \log \left(\dfrac{\ell(\beta,\bm{v}_{-i})}{\overline\ell} \right)\right)^+ .
\end{aligned}
\end{equation*}
However, the above inequality cannot be satisfied because
\begin{equation*}
    \frac{\gamma}{1+\gamma} \left( 1 + \log \left(\dfrac{\ell(\alpha,\bm{v}_{-i})}{\overline\ell}\right)  \right)^+
    \leq \left( 1 + \log \left(\dfrac{\ell(\alpha,\bm{v}_{-i})}{\overline\ell}\right)  \right)^+
    \leq \left( 1 + \log \left(\dfrac{s}{\overline\ell}\right)  \right)^+,
\end{equation*}
and
\begin{equation*}
    \left( 1 + \log \left(\dfrac{\ell(\beta,\bm{v}_{-i})}{\overline\ell} \right) \right)^+ 
    \geq \left( 1 + \log \left(\dfrac{s}{\overline\ell}\right)  \right)^+.
\end{equation*}
This contradiction confirms that $\hat{q}_i(v_i, \bm{v}_{-i})$ is non-decreasing in $v_i$ over $\mathcal V_i = [0,\overline{v}_i]$ for any fixed~$\bm{v}_{-i}$.

{\it Case 2} ($i \in \majority$):
By construction, we can similarly express the allocation $\hat{q}_i$ as
\begin{equation*}
\begin{aligned}
    \hat{q}_i(\bm{v}) &= \begin{cases}
        \dfrac{1}{1+\gamma}\left(  1 + \log \left({{\dfrac{\ell(\bm{v})}{\overline{\ell}}}}\right) \right)^+\mathbb{I}_{\{i={i^{\operatorname{maj}}(\bm{v})}\}} & \text{ if }  v_i > v_{{i^{\operatorname{min}}(\bm{v})}}\\
        0 & \text{ otherwise. } 
    \end{cases} \\
\end{aligned}
\end{equation*}
For an arbitrary fixed $\bm{v}_{-i}$, which uniquely determines $v_{i^{\rm min}}(\bm{v})$, as $\ell(\bm{v})$ and the indicator function $\mathbb{I}_{\{i={i^{\operatorname{maj}}(\bm{v})}\}}$ are non-decreasing in $v_i$, it can be seen that~$\hat{q}_i$ is non-decreasing in $v_i$.

Constraints ~\eqref{eq:IR},~\eqref{eq:Eq} and ~\eqref{eq:AF}  trivially hold due to the construction of the mechanism~$(\bm{\hat{q}},\bm{\hat{m}})$.
Indeed, this mechanism is individually rational because
$$v_i{\hat{q}}_i(\bm{v})-{\hat{m}}_i(\bm{v}) = \int_{0}^{v_i}{\hat{q}}_i(x,\bm{v}_{-i}) {\rm{d}} x \geq 0  \quad \forall i \in \mathcal{I},$$
and it is equitable as
\begin{equation*}\label{eqn: multi-bidder sum qmin} 
    \sum_{i\in\mathcal{I}^{\operatorname{min}}}{\hat{q}}_i(\bm{v}) = \begin{cases}
        \dfrac{\gamma}{1+\gamma}\left(  1 + \log \left({{\dfrac{\ell(\bm{v})}{\overline{\ell}}}}\right) \right)^+ & \text{ if } v_{{i^{\operatorname{maj}}(\bm{v})}} > v_{{i^{\operatorname{min}}(\bm{v})}} \\
        \left(  1 + \log \left(\dfrac{v_{{i^{\operatorname{min}}(\bm{v})}}}{\overline{\ell}} \right)\right)^+ & \text{ otherwise,}
    \end{cases}
\end{equation*}
and
\begin{equation*}\label{eqn: multi-bidder sum qmaj}             
    \sum_{i\in\mathcal{I}^{\operatorname{maj}}}{\hat{q}}_i(\bm{v}) = \begin{cases}
        \dfrac{1}{1+\gamma}\left(  1 + \log \left({{\dfrac{\ell(\bm{v})}{\overline{\ell}}}}\right) \right)^+ & \text{ if } v_{{i^{\operatorname{maj}}(\bm{v})}} > v_{{i^{\operatorname{min}}(\bm{v})}} \\
        0 & \text{ otherwise.}
    \end{cases}
\end{equation*}
From $\sum_{i\in\mathcal{I}^{\operatorname{min}}}{\hat{q}}_i(\bm{v})$ and $\sum_{i\in\mathcal{I}^{\operatorname{maj}}}{\hat{q}}_i(\bm{v})$ computed above, we also obtain
\begin{equation*}
    \sum_{i\in\mathcal{I}}{\hat{q}}_i(\bm{v}) = \sum_{i\in\mathcal{I}^{\operatorname{min}}}{\hat{q}}_i(\bm{v})+ \sum_{i\in\mathcal{I}^{\operatorname{maj}}}{\hat{q}}_i(\bm{v}) = \begin{cases}
        \left(  1 + \log \left({{\dfrac{\ell(\bm{v})}{\overline{\ell}}}}\right) \right)^+ & \text{ if } v_{{i^{\operatorname{maj}}(\bm{v})}} > v_{{i^{\operatorname{min}}(\bm{v})}} \\
        \left(  1 + \log \left(\dfrac{v_{{i^{\operatorname{min}}(\bm{v})}}}{\overline{\ell}}\right) \right)^+ & \text{ otherwise,}
    \end{cases}
\end{equation*}
and hence the total allocation probability does not exceed one as both $\ell(\bm{v})$ and $v_{i^{\operatorname{min}}(\bm{v})}$ do not exceed ${\overline{\ell}}$. As the mechanism $(\bm{\hat{q}},\bm{\hat{m}})$ satisfies all constraints in~$\mathcal M$, the feasibility claim follows.
\end{proof}

\begin{proof}{Proof of \Cref{thm:regret_upper}}
Fixing an arbitrary $\bm v \in \mathcal V$, we will 
prove the upper bound separately for two cases: $v_{{i^{\operatorname{maj}}(\bm{v})}} > v_{{i^{\operatorname{min}}(\bm{v})}}$ ({\it Case 1}) and  $v_{{i^{\operatorname{maj}}(\bm{v})}} \leq v_{{i^{\operatorname{min}}(\bm{v})}}$ ({\it Case 2}).

\textit{Case 1} ($v_{{i^{\operatorname{maj}}(\bm{v})}} > v_{{i^{\operatorname{min}}(\bm{v})}}$): By the definition of~$\ell(\bm v)$, we can write $\text{REG}_{(\bm {\hat{q}}, \bm {\hat{m}})}(\bm{v})$ as
\begin{equation*}
    \text{REG}_{(\bm {\hat{q}}, \bm {\hat{m}})}(\bm{v}) = 
    \ell (\bm{v}) - \sum_{i\in\mathcal{I}^{\operatorname{min}}} \hat{m}_{i}(\bm{v}) - \sum_{i\in\mathcal{I}^{\operatorname{maj}}} \hat{m}_{i}(\bm{v}) = 
    \ell (\bm{v}) - \hat{m}_{i^{\operatorname{min}}(\bm v)}(\bm{v}) - \hat{m}_{i^{\operatorname{maj}}(\bm v)}(\bm{v}),
\end{equation*}
where the second equality follows from the definition of $(\bm {\hat{q}}, \bm {\hat{m}})$, under which only the highest bidders from each group can receive a positive share of the item and may be charged non-zero payments.
By the definition of $\bm {\hat{m}}$, the regret above can be equivalently expressed as 
\begin{equation}
\label{eq:regret_1stcase}
\begin{aligned}
    \text{REG}_{(\bm {\hat{q}}, \bm {\hat{m}})}(\bm{v}) =
    \ell(\bm{v}) &- v_{{i^{\operatorname{min}}(\bm{v})}}{\hat{q}}_{{i^{\operatorname{min}}(\bm{v})}}(\bm{v})+\int_{0}^{v_{{i^{\operatorname{min}}(\bm{v})}}}{\hat{q}}_{{i^{\operatorname{min}}(\bm{v})}}(x,\bm{v}_{-{{i^{\operatorname{min}}(\bm{v})}}}) {\rm{d}} x \\
    &- v_{{i^{\operatorname{maj}}(\bm{v})}}{\hat{q}}_{{i^{\operatorname{maj}}(\bm{v})}}(\bm{v})+\int_{0}^{v_{{i^{\operatorname{maj}}(\bm{v})}}}{\hat{q}}_{{i^{\operatorname{maj}}(\bm{v})}}(x,\bm{v}_{-{{i^{\operatorname{maj}}(\bm{v})}}}) {\rm{d}} x.
\end{aligned}
\end{equation}
To bound $\text{REG}_{(\bm {\hat{q}}, \bm {\hat{m}})}(\bm{v})$, we first study the two integrals on the right-hand side of \eqref{eq:regret_1stcase} individually. For the first integral, we have 
\begin{equation}
\allowdisplaybreaks
\label{eq:regret_1stcase_1stint}
\begin{aligned}
    &\int_{0}^{v_{{i^{\operatorname{min}}(\bm{v})}}}{\hat{q}}_{{i^{\operatorname{min}}(\bm{v})}}(x,\bm{v}_{-{{i^{\operatorname{min}}(\bm{v})}}}) {\rm{d}} x \\
    &= \int_{0}^{v_{{i^{\operatorname{min}}(\bm{v})}}}\dfrac{\gamma}{1+\gamma}\left(  1 + \log \left(\dfrac{\frac{\gamma}{1+\gamma}x+\frac{1}{1+\gamma}v_{{i^{\operatorname{maj}}(\bm{v})}}}{\overline{\ell}}\right) \right)^+ \mathds{I}_{\{x\geq \max_{j \in \mathcal{I}^{\operatorname{min}} \setminus \{ i^{\operatorname{min}}(\bm{v}) \} } v_j \}}{\rm{d}} x \\
    &\leq \int_{0}^{v_{{i^{\operatorname{min}}(\bm{v})}}}\dfrac{\gamma}{1+\gamma}\left(  1 + \log \left(\dfrac{\frac{\gamma}{1+\gamma}x+\frac{1}{1+\gamma}v_{{i^{\operatorname{maj}}(\bm{v})}}}{\overline{\ell}}\right) \right)^+ {\rm{d}} x \\
    &= \int_{\frac{1}{1+\gamma}v_{{i^{\operatorname{maj}}(\bm{v})}}}^{\frac{\gamma}{1+\gamma}v_{{i^{\operatorname{min}}(\bm{v})}}+\frac{1}{1+\gamma}v_{{i^{\operatorname{maj}}(\bm{v})}}}\left(  1 + \log \left(\dfrac{y}{\overline{\ell}}\right) \right)^+ {\rm{d}} y \\
    &= \int_{\frac{v_{{i^{\operatorname{maj}}(\bm{v})}}}{(1+\gamma)\overline{\ell}}}^{\frac{\frac{\gamma}{1+\gamma}v_{{i^{\operatorname{min}}(\bm{v})}}+\frac{1}{1+\gamma}v_{{i^{\operatorname{maj}}(\bm{v})}}}{\overline{\ell}}}{\overline{\ell}}\left(  1 + \log z \right)^+ {\rm{d}} z 
    = \int_{\frac{v_{{i^{\operatorname{maj}}(\bm{v})}}}{(1+\gamma)\overline{\ell}}}^{{\frac{\ell(\bm{v})}{\overline{\ell}}}}{\overline{\ell}}\left(  1 + \log z \right)^+ {\rm{d}} z,
\end{aligned}
\end{equation}
where the first equality follows from the definition of $\hat{q}_{i^{\operatorname{min}}(\bm v)}$, and the next two equalities follow from affine transformations
$y = \frac{\gamma}{1+\gamma}x + \frac{1}{1+\gamma}v_{{i^{\text{maj}(\bm{v})}}}$ and $z = \frac{y}{\overline{\ell}}$, respectively. Finally, the last equality follows from the definition of $\ell(\bm v)$. 

For the second integral on the right-hand side of \eqref{eq:regret_1stcase}, we have
\begin{equation}
\allowdisplaybreaks
\label{eq:regret_1stcase_2ndint}
\begin{aligned}
    &\int_{0}^{v_{{i^{\operatorname{maj}}(\bm{v})}}}{\hat{q}}_{{i^{\operatorname{maj}}(\bm{v})}}(x,\bm{v}_{-{{i^{\operatorname{maj}}(\bm{v})}}}) {\rm{d}} x \\
    &= \int_{v_{{i^{\operatorname{min}}(\bm{v})}}}^{v_{{i^{\operatorname{maj}}(\bm{v})}}}\dfrac{1}{1+\gamma}\left(  1 + \log \left(\dfrac{\frac{\gamma}{1+\gamma}v_{{i^{\operatorname{min}}(\bm{v})}}+\frac{1}{1+\gamma}x}{\overline{\ell}}\right) \right)^+ \mathds{I}_{\{x\geq \max_{j \in \mathcal{I}^{\operatorname{maj}} \setminus \{ i^{\operatorname{maj}}(\bm{v}) \} } v_j \}}{\rm{d}} x \\
    &\leq \int_{v_{{i^{\operatorname{min}}(\bm{v})}}}^{v_{{i^{\operatorname{maj}}(\bm{v})}}}\dfrac{1}{1+\gamma}\left(  1 + \log \left(\dfrac{\frac{\gamma}{1+\gamma}v_{{i^{\operatorname{min}}(\bm{v})}}+\frac{1}{1+\gamma}x}{\overline{\ell}}\right) \right)^+ {\rm{d}} x \\
    &= \int_{v_{{i^{\operatorname{min}}(\bm{v})}}}^{\frac{1}{1+\gamma}v_{{i^{\operatorname{maj}}(\bm{v})}}+\frac{\gamma}{1+\gamma}v_{{i^{\operatorname{min}}(\bm{v})}}}\left(  1 + \log \left(\dfrac{y}{\overline{\ell}}\right) \right)^+ {\rm{d}} y\\ 
    &= \int_{\frac{v_{{i^{\operatorname{min}}(\bm{v})}}}{\overline{\ell}}}^{\frac{\frac{1}{1+\gamma}v_{{i^{\operatorname{maj}}(\bm{v})}}+\frac{\gamma}{1+\gamma}v_{{i^{\operatorname{min}}(\bm{v})}}}{\overline{\ell}}}{\overline{\ell}}\left(  1 + \log z \right)^+ {\rm{d}} z
    = \int_{\frac{v_{{i^{\operatorname{min}}(\bm{v})}}}{\overline{\ell}}}^{\frac{\ell(\bm{v})}{\overline{\ell}}}{\overline{\ell}}\left(  1 + \log z \right)^+ {\rm{d}} z,
\end{aligned}    
\end{equation}
where, similarly to before, the first equality follows from the definition of $\hat{q}_{i^{\operatorname{maj}}(\bm v)}$, and the next two equalities follow from affine transformations
$y = \frac{\gamma}{1+\gamma}v_{{i^{\operatorname{min}}(\bm{v})}} + \frac{1}{1+\gamma}x$ and $z = \frac{y}{\overline{\ell}}$, respectively. In addition, the last equality follows from the definition of $\ell(\bm v)$. We thus bound the two integrals from above.

Next, we consider the term $v_{{i^{\operatorname{min}}(\bm{v})}}{\hat{q}}_{{i^{\operatorname{min}}(\bm{v})}}(\bm{v}) + v_{{i^{\operatorname{maj}}(\bm{v})}}{\hat{q}}_{{i^{\operatorname{maj}}(\bm{v})}}(\bm{v})$, which also appears on the right-hand side of~\eqref{eq:regret_1stcase}. This term can be expressed equivalently as
\begin{equation}
\label{eq:regret_1stcase_3rd}
\begin{aligned}
    &v_{{i^{\operatorname{min}}(\bm{v})}}{\hat{q}}_{{i^{\operatorname{min}}(\bm{v})}}(\bm{v}) + v_{{i^{\operatorname{maj}}(\bm{v})}}{\hat{q}}_{{i^{\operatorname{maj}}(\bm{v})}}(\bm{v})\\
    &=   \dfrac{\gamma}{1+\gamma}v_{{i^{\operatorname{min}}(\bm{v})}}\left(  1 + \log \left({{\dfrac{\ell(\bm{v})}{\overline{\ell}}}}\right) \right)^+ +   \dfrac{1}{1+\gamma}v_{{i^{\operatorname{maj}}(\bm{v})}}\left(  1 + \log \left({{\dfrac{\ell(\bm{v})}{\overline{\ell}}}}\right)\right)^+ \\
    &= \ell(\bm{v})\left(  1 + \log \left({{\dfrac{\ell(\bm{v})}{\overline{\ell}}}}\right)\right)^+,
    \end{aligned}
\end{equation}
where the first equality follows from the definitions of ${\hat{q}}_{{i^{\operatorname{min}}(\bm{v})}}$ and ${\hat{q}}_{{i^{\operatorname{maj}}(\bm{v})}}$, and the second equality follows from the definition of $\ell(\bm{v})$.

Leveraging the algebraic insights from~\eqref{eq:regret_1stcase_1stint},~\eqref{eq:regret_1stcase_2ndint} and~\eqref{eq:regret_1stcase_3rd}, we can now bound the regret in~\eqref{eq:regret_1stcase} from above by $\overline{\text{REG}}_{(\bm {\hat{q}}, \bm {\hat{m}})}(\bm{v})$ that is defined as
\begin{equation}\label{eq:regret_1stcase_123}
    \ell(\bm{v}) - \ell(\bm{v})\left(  1 + \log \left({{\dfrac{\ell(\bm{v})}{\overline{\ell}}}}\right) \right)^+ 
    + \int_{\frac{v_{{i^{\operatorname{maj}}(\bm{v})}}}{(1+\gamma)\overline{\ell}}}^{\frac{\ell(\bm{v})}{\overline{\ell}}}{\overline{\ell}}\left(  1 + \log z \right)^+ {\rm{d}} z
    + \int_{\frac{v_{{i^{\operatorname{min}}(\bm{v})}}}{\overline{\ell}}}^{\frac{\ell(\bm{v})}{\overline{\ell}}}{\overline{\ell}}\left(  1 + \log z \right)^+ {\rm{d}} z.
\end{equation}
If $\ell(\bm{v}) < \frac{{\overline{\ell}}}{e}$, then we have $\overline{\text{REG}}_{(\bm {\hat{q}}, \bm {\hat{m}})}(\bm{v}) = \ell(\bm{v}) < \frac{{\overline{\ell}}}{e}\leq {\overline{\ell}}\max\left\{\frac{1}{e}, \theta^\star - \beta^\star \log \beta^\star \right\}$ as the three positive parts in \eqref{eq:regret_1stcase_123} simultaneously evaluate to zero.

For the case $\ell(\bm{v}) \geq \frac{{\overline{\ell}}}{e}$, we have
\begin{equation*}
    \dfrac{\overline{\text{REG}}_{(\bm {\hat{q}}, \bm {\hat{m}})}(\bm{v})}{\overline{\ell}} \\
    =  - {{\dfrac{\ell(\bm{v})}{\overline{\ell}}}}\log \left({{\dfrac{\ell(\bm{v})}{\overline{\ell}}}}\right)  
    + \int_{\frac{v_{{i^{\operatorname{maj}}(\bm{v})}}}{(1+\gamma)\overline{\ell}}}^{\frac{\ell(\bm{v})}{\overline{\ell}}}\left(  1 + \log z \right)^+ {\rm{d}} z
    + \int_{\frac{v_{{i^{\operatorname{min}}(\bm{v})}}}{\overline{\ell}}}^{\frac{\ell(\bm{v})}{\overline{\ell}}}\left(  1 + \log z \right)^+ {\rm{d}} z.
\end{equation*}
We will first show that $\overline{\text{REG}}_{(\bm {\hat{q}}, \bm {\hat{m}})}(\bm{v})$ (weakly) is non-decreasing in~$v_{i^{\operatorname{maj}}(\bm{v})}$ and then use this observation to argue that $\overline{\text{REG}}_{(\bm {\hat{q}}, \bm {\hat{m}})}(\bm{v})$ is bounded above by replacing $\bm{v}$ in the right-hand side of \eqref{eq:regret_1stcase_123} with another value profile $\bm{v}'$, where~$\bm{v}'$ is the value profile obtained by identifying the majority bidder with the largest maximum value and increasing this bidder’s value to its maximum, while keeping all other bidders' values unchanged. We note that the affected bidder need not be $i^{\operatorname{maj}}(\bm{v})$, whose private value (rather than its upper bound) is the highest.
This ensures that $v'_{i^{\operatorname{maj}}(\bm{v}')}$ is equal to $\overline{v}^{\operatorname{maj}}$ and that $v'_{i^{\operatorname{min}}(\bm{v}')}$ remains the same as $v_{i^{\operatorname{min}}(\bm{v})}$. 

As we suppose $\ell(\bm{v}) \geq \frac{{\overline{\ell}}}{e}$, we have
\begin{equation}
\label{eq:regret_1stcase_4th}
\begin{aligned}
    &\dfrac{\overline{\text{REG}}_{(\bm {\hat{q}}, \bm {\hat{m}})}(\bm{v})}{\overline{\ell}} =  - {{\dfrac{\ell(\bm{v})}{\overline{\ell}}}}\log \left({{\dfrac{\ell(\bm{v})}{\overline{\ell}}}}\right)
    + \int_{\frac{v_{{i^{\operatorname{maj}}(\bm{v})}}}{(1+\gamma)\overline{\ell}}}^{\frac{\ell(\bm{v})}{\overline{\ell}}}\left(  1 + \log z \right)^+ {\rm{d}} z
    + \int_{\frac{v_{{i^{\operatorname{min}}(\bm{v})}}}{\overline{\ell}}}^{\frac{\ell(\bm{v})}{\overline{\ell}}}\left(  1 + \log z \right)^+ {\rm{d}} z\\
    &=  - {{\dfrac{\ell(\bm{v})}{\overline{\ell}}}}\log \left({{\dfrac{\ell(\bm{v})}{\overline{\ell}}}} \right)
    + \int_{\max\left\{{\frac{v_{{i^{\operatorname{maj}}(\bm{v})}}}{(1+\gamma)\overline{\ell}}},\frac{1}{e}\right\}}^{\frac{\ell(\bm{v})}{\overline{\ell}}}\left(  1 + \log z \right) {\rm{d}} z
    + \int_{\max\left\{\frac{v_{{i^{\operatorname{min}}(\bm{v})}}}{\overline{\ell}},\frac{1}{e}\right\}}^{\frac{\ell(\bm{v})}{\overline{\ell}}}\left(  1 + \log z \right) {\rm{d}} z \\[2mm]
    &= {{\dfrac{\ell(\bm{v})}{\overline{\ell}}}}\log \left({{\dfrac{\ell(\bm{v})}{\overline{\ell}}}}\right) - \alpha(\bm{v}) \log \alpha (\bm{v}) - \beta (\bm{v}) \log \beta (\bm{v}),  
\end{aligned}
\end{equation}
where mappings $\alpha$ and $\beta$ are defined as
\begin{equation*}
    \alpha(\bm{v}) = \max \left\{ \frac{v_{{i^{\operatorname{min}}(\bm{v})}}}{\overline{\ell}},\frac{1}{e} \right\}
    \quad \text{and} \quad 
    \beta(\bm{v}) = \max \left\{{\frac{v_{{i^{\operatorname{maj}}(\bm{v})}}}{(1+\gamma)\overline{\ell}}},\frac{1}{e}\right\}.
\end{equation*}
Consider the case $\beta(\bm{v}) = \frac{1}{e}$, where the third term $\beta (\bm{v}) \log \beta (\bm{v})$ in the final line of \eqref{eq:regret_1stcase_4th} is a constant. 
The first term ${{\frac{\ell(\bm{v})}{\overline{\ell}}}}\log \left({{\frac{\ell(\bm{v})}{\overline{\ell}}}}\right)$ is non-decreasing in $\ell(\bm{v})$ and particularly in $v_{i^{\operatorname{maj}}(\bm{v})}$. As $\alpha(\bm v)$ and hence $\alpha (\bm{v}) \log \alpha (\bm{v})$ do not depend on $v_{i^{\operatorname{maj}}(\bm{v})}$, $\overline{\text{REG}}_{(\bm {\hat{q}}, \bm {\hat{m}})}(\bm{v})$ is also non-decreasing in $v_{i^{\operatorname{maj}}(\bm{v})}$. 
Otherwise, if $\beta(\bm{v}) = {\frac{v_{{i^{\operatorname{maj}}(\bm{v})}}}{(1+\gamma)\overline{\ell}}} > \frac{1}{e}$, we have
\begin{equation*}
\begin{aligned}
    &{{\dfrac{\ell(\bm{v})}{\overline{\ell}}}}\log \left({{\dfrac{\ell(\bm{v})}{\overline{\ell}}}}\right) - \beta(\bm{v}) \log \beta(\bm{v})\\
    &= \left( \beta(\bm{v}) + \frac{\gamma}{1+\gamma}  \frac{v_{i^{\operatorname{min}}(\bm{v})}}{\overline{\ell}} \right) \log \left( \beta(\bm{v}) + \frac{\gamma}{1+\gamma} \frac{v_{i^{\operatorname{min}}(\bm{v})}}{\overline{\ell}}  \right) - \beta(\bm{v}) \log \beta(\bm{v}),
\end{aligned}
\end{equation*}
where the equality follows from the definition of $\ell(\bm v)$.
The above term is non-decreasing in $\beta(\bm{v})$ (and consequently in $v_{i^{\operatorname{maj}}(\bm{v})}$) because \begin{equation*}
\frac{{\rm d}}{{\rm d}\eta} \left(
 (\eta+\delta)\log(\eta+\delta)-\eta \log\eta\right) = \log\left(1+\frac{\delta}{\eta}\right) > 0 \quad \forall \eta > 0, \delta \geq 0.   
\end{equation*} 
Hence, regardless of the value of $\beta(\bm{v})$, we find that $\overline{\text{REG}}_{(\bm {\hat{q}}, \bm {\hat{m}})}(\bm{v})$ is non-decreasing in $v_{i^{\operatorname{maj}}(\bm{v})}$ for any $v_{i^{\operatorname{maj}}(\bm{v})}\geq 0$. We can thus conclude that 
\begin{equation}
\begin{aligned}   \label{eq:regret_1stcase_uniopt1} 
&\dfrac{\overline{\text{REG}}_{(\bm {\hat{q}}, \bm {\hat{m}})}(\bm{v})}{\overline{\ell}}\\ 
&= \left( \frac{v_{i^{\operatorname{maj}}(\bm{v})}+\gamma v_{i^{\operatorname{min}}(\bm{v})}}{(1+\gamma){\overline{\ell}}} \right) \log \left( \frac{v_{i^{\operatorname{maj}}(\bm{v})}+\gamma v_{i^{\operatorname{min}}(\bm{v})}}{(1+\gamma){\overline{\ell}}} \right) - \alpha(\bm{v}) \log \alpha(\bm{v}) - \beta(\bm{v}) \log \beta(\bm{v}) \\
&\leq \left( \frac{{\overline{v}^{\operatorname{maj}}}+\gamma v_{i^{\operatorname{min}}(\bm{v})}}{(1+\gamma){\overline{\ell}}} \right) \log \left( \frac{{\overline{v}^{\operatorname{maj}}}+\gamma v_{i^{\operatorname{min}}(\bm{v})}}{(1+\gamma){\overline{\ell}}} \right) - \alpha(\bm{v}) \log \alpha(\bm{v}) - \beta^\star \log \beta^\star \\
&\leq \max_{\substack{u \in [0,{\overline{v}^{\operatorname{min}}}] \\ \frac{{\overline{v}^{\operatorname{maj}}}+\gamma u}{1+\gamma} \geq \frac{{\overline{\ell}}}{e}}}
\left\{ \left( \frac{{\overline{v}^{\operatorname{maj}}}+\gamma u}{(1+\gamma){\overline{\ell}}} \right) \log \left( \frac{{\overline{v}^{\operatorname{maj}}}+\gamma u}{(1+\gamma){\overline{\ell}}} \right) - \max \left\{ \dfrac{u}{\overline{\ell}}, \frac{1}{e} \right\} \log \max \left\{ \dfrac{u}{\overline{\ell}}, \frac{1}{e} \right\}  \right\} - \beta^\star \log \beta^\star 
\end{aligned}
\end{equation}
where $\beta^\star$ is a constant defined as $\max \left\{ \frac{\overline{v}^{\operatorname{maj}}}{(1+\gamma)\overline{\ell}},\frac{1}{e} \right\}$. 
The first equality follows from \eqref{eq:regret_1stcase_4th}, and
the first inequality in \eqref{eq:regret_1stcase_uniopt1} is obtained by replacing $\bm{v}$ with $\bm{v}'$. Note that with this replacement $v'_{i^{\operatorname{min}}(\bm{v}')}$ (consequently, $\alpha(\bm v')$) remains the same as $v_{i^{\operatorname{min}}(\bm{v})}$ (consequently, $\alpha(\bm v)$). 
The second inequality in \eqref{eq:regret_1stcase_uniopt1} holds by treating $u = v_{i^{\operatorname{min}}(\bm{v})}\in [0,{\overline{v}^{\operatorname{min}}}]$ as a free variable to be chosen independently. 
Note that we subject $u$ to another constraint $\frac{{\overline{v}^{\operatorname{maj}}}+\gamma u}{1+\gamma} \geq \frac{{\overline{\ell}}}{e}$ 
which follows from the assumption that $\ell (\bm{v})\geq \frac{{\overline{\ell}}}{e}$
and from the fact that $\frac{{\overline{v}^{\operatorname{maj}}}+\gamma u}{1+\gamma} = \ell (\bm{v}') \geq \ell (\bm{v})$. 
This constraint $\frac{{\overline{v}^{\operatorname{maj}}}+\gamma u}{1+\gamma} \geq \frac{{\overline{\ell}}}{e}$ essentially imposes a constant lower bound on $u$, namely $ u\geq \frac{{\overline{\ell}(1+\gamma)}}{e\gamma}-\frac{{\overline{v}^{\operatorname{maj}}}}{\gamma}$. 
The maximum in the last line of~\eqref{eq:regret_1stcase_uniopt1} is, in fact, $\theta^\star$. 
It then follows that ${\text{REG}}_{(\bm {\hat{q}}, \bm {\hat{m}})}(\bm{v}) \leq
\overline{\text{REG}}_{(\bm {\hat{q}}, \bm {\hat{m}})}(\bm{v})\leq {\overline{\ell}}( \theta^\star - \beta^\star \log \beta^\star )
\leq {\overline{\ell}}\max\left\{\frac{1}{e}, \theta^\star - \beta^\star \log \beta^\star \right\}$.

\textit{Case 2} ($v_{{i^{\operatorname{maj}}(\bm{v})}} \leq v_{{i^{\operatorname{min}}(\bm{v})}}$): In this case, as the benchmark (i.e., the maximum revenue in hindsight) \eqref{eq: hindsight revenue} simplifies to $v_{{i^{\operatorname{min}}(\bm{v})}}$, we can write $\text{REG}_{(\bm {\hat{q}}, \bm {\hat{m}})}(\bm{v})$ as
\begin{equation*}
    \text{REG}_{(\bm {\hat{q}}, \bm {\hat{m}})}(\bm{v}) = 
    v_{{i^{\operatorname{min}}(\bm{v})}} - \sum_{i\in\mathcal{I}^{\operatorname{min}}} \hat{m}_{i}(\bm{v}) - \sum_{i\in\mathcal{I}^{\operatorname{maj}}} \hat{m}_{i}(\bm{v}) = 
    v_{{i^{\operatorname{min}}(\bm{v})}} - \hat{m}_{i^{\operatorname{min}}(\bm v)}(\bm{v}),
\end{equation*}
where the second equality follows from the definition of $\bm {\hat{m}}$ under which only the highest bidder ${i^{\operatorname{min}}(\bm{v})}$ can be charged a non-zero payment when $v_{{i^{\operatorname{maj}}(\bm{v})}} \leq v_{{i^{\operatorname{min}}(\bm{v})}}$. As $\hat{m}_{i^{\operatorname{min}}(\bm v)}(\bm{v})$ is non-negative, if $v_{{i^{\operatorname{min}}(\bm{v})}} \leq \frac{{\overline{\ell}}}{e}$, it immediately holds that $\text{REG}_{(\bm {\hat{q}}, \bm {\hat{m}})}(\bm{v}) \leq \frac{{\overline{\ell}}}{e}$. Assume now that $v_{{i^{\operatorname{min}}(\bm{v})}} > \frac{{\overline{\ell}}}{e}$.
By the definition of $\bm {\hat{m}}$, the regret above can be equivalently expressed as 
\begin{equation*}
\begin{aligned}
    \text{REG}_{(\bm {\hat{q}}, \bm {\hat{m}})}(\bm{v}) 
    &= v_{{i^{\operatorname{min}}(\bm{v})}}  - v_{{i^{\operatorname{min}}(\bm{v})}}{\hat{q}}_{{i^{\operatorname{min}}(\bm{v})}}(\bm{v})+\int_{0}^{v_{{i^{\operatorname{min}}(\bm{v})}}}{\hat{q}}_{{i^{\operatorname{min}}(\bm{v})}}(x,\bm{v}_{-{i^{\operatorname{min}}(\bm{v})}}) {\rm{d}}  x \\
    &= - v_{{i^{\operatorname{min}}(\bm{v})}} \log \left({\frac{v_{{i^{\operatorname{min}}(\bm{v})}}}{\overline{\ell}}}\right) +\int_{0}^{v_{{i^{\operatorname{min}}(\bm{v})}}}{\hat{q}}_{{i^{\operatorname{min}}(\bm{v})}}(x,\bm{v}_{-{i^{\operatorname{min}}(\bm{v})}}) {\rm{d}}  x,
\end{aligned}
\end{equation*}
where the second equality follows from the definition of ${\hat{q}}_{{i^{\operatorname{min}}(\bm{v})}}$. Analogously to {\it Case 1}, we bound the integral above as follows:
\begin{equation*}
\allowdisplaybreaks
\begin{aligned}
    &\int_{0}^{v_{{i^{\operatorname{min}}(\bm{v})}}}{\hat{q}}_{{i^{\operatorname{min}}(\bm{v})}}(x,\bm{v}_{-{i^{\operatorname{min}}(\bm{v})}}) {\rm{d}} x \\
    &= \int_{0}^{v_{{i^{\operatorname{maj}}(\bm{v})}}}\dfrac{\gamma}{1+\gamma}\left(  1 + \log \left(\dfrac{\frac{\gamma}{1+\gamma}x+\frac{1}{1+\gamma}v_{{i^{\operatorname{maj}}(\bm{v})}}}{\overline{\ell}}\right) \right)^+\mathds{I}_{\{x\geq \max_{j \in \mathcal{I}^{\operatorname{min}} \setminus \{ i^{\operatorname{min}}(\bm{v}) \} } v_j \}} {\rm{d}} x \\ 
    &\quad + \int_{v_{{i^{\operatorname{maj}}(\bm{v})}}}^{v_{{i^{\operatorname{min}}(\bm{v})}}}\left(  1 + \log  \left(\dfrac{x}{\overline{\ell}}\right) \right)^+\mathds{I}_{\{x\geq \max_{j \in \mathcal{I}^{\operatorname{min}} \setminus \{ i^{\operatorname{min}}(\bm{v}) \} } v_j \}} {\rm{d}} x \\
    &\leq \int_{0}^{v_{{i^{\operatorname{maj}}(\bm{v})}}}\dfrac{\gamma}{1+\gamma}\left(  1 + \log \left(\dfrac{\frac{\gamma}{1+\gamma}x+\frac{1}{1+\gamma}v_{{i^{\operatorname{maj}}(\bm{v})}}}{\overline{\ell}}\right) \right)^+ {\rm{d}} x + \int_{v_{{i^{\operatorname{maj}}(\bm{v})}}}^{v_{{i^{\operatorname{min}}(\bm{v})}}}\left(  1 + \log  \left(\dfrac{x}{\overline{\ell}}\right) \right)^+ {\rm{d}} x \\
    &= \int_{\frac{1}{1+\gamma}v_{{i^{\operatorname{maj}}(\bm{v})}}}^{v_{{i^{\operatorname{maj}}(\bm{v})}}}\left(  1 + \log \left(\dfrac{y}{\overline{\ell}}\right) \right)^+ {\rm{d}} y + \int_{v_{{i^{\operatorname{maj}}(\bm{v})}}}^{v_{{i^{\operatorname{min}}(\bm{v})}}}\left(  1 + \log  \left(\dfrac{x}{\overline{\ell}}\right) \right)^+ {\rm{d}} x \\
    &= \int_{\frac{1}{1+\gamma}v_{{i^{\operatorname{maj}}(\bm{v})}}}^{v_{{i^{\operatorname{min}}(\bm{v})}}}\left(  1 + \log \left(\dfrac{y}{\overline{\ell}}\right) \right)^+ {\rm{d}} y
    = \int_{\frac{v_{{i^{\operatorname{maj}}(\bm{v})}}}{(1+\gamma)\overline{\ell}}}^{\frac{v_{{i^{\operatorname{min}}(\bm{v})}}}{\overline{\ell}}}{\overline{\ell}}\left(  1 + \log  z \right)^+ {\rm{d}} z 
    \end{aligned}    
\end{equation*}
where the first equality follows from the definition of $\hat{q}_{i^{\operatorname{min}}(\bm{v})}$, and the second and the fourth equalities follow from affine transformations $y = \frac{\gamma}{1+\gamma}x + \frac{1}{1+\gamma}v_{{i^{\operatorname{maj}}(\bm{v})}}$ and $z = \frac{y}{\overline{\ell}}$, respectively.
Note that
\begin{equation*}
\allowdisplaybreaks
\begin{aligned}
    &\int_{\frac{v_{{i^{\operatorname{maj}}(\bm{v})}}}{(1+\gamma)\overline{\ell}}}^{\frac{v_{{i^{\operatorname{min}}(\bm{v})}}}{\overline{\ell}}}{\overline{\ell}}\left(  1 + \log  z \right)^+ {\rm{d}} z 
    \leq \int_{\frac{1}{e}}^{\frac{v_{{i^{\operatorname{min}}(\bm{v})}}}{\overline{\ell}}}{\overline{\ell}}\left(  1 + \log  z \right) {\rm{d}} z\\
    &= {\overline{\ell}}\left(\dfrac{v_{{i^{\operatorname{min}}(\bm{v})}}}{\overline{\ell}} \log \left(\dfrac{v_{{i^{\operatorname{min}}(\bm{v})}}}{\overline{\ell}}\right)+ \frac{1}{e}\right)
    = v_{{i^{\operatorname{min}}(\bm{v})}} \log \left({\frac{v_{{i^{\operatorname{min}}(\bm{v})}}}{\overline{\ell}}}\right)+\dfrac{\overline{\ell}}{e},
\end{aligned}    
\end{equation*}
It therefore holds that $\text{REG}_{(\bm {\hat{q}}, \bm {\hat{m}})}(\bm{v}) \leq \frac{{\overline{\ell}}}{e}$.

In conclusion, our analysis of Case 2 reveals that 
${\text{REG}}_{(\bm {\hat{q}}, \bm {\hat{m}})}(\bm{v}) \leq \frac{{\overline{\ell}}}{e} \leq {\overline{\ell}}\max\left\{\frac{1}{e}, \theta^\star - \beta^\star \log \beta^\star \right\}$ for any $\bm{v} \in \mathcal{V}$ such that $ v_{{i^{\operatorname{maj}}(\bm{v})}} \leq v_{{i^{\operatorname{min}}(\bm{v})}}$, and the claim follows.
\end{proof}

\begin{proof}{Proof of Proposition \ref{prop:extreme_cases}}

To streamline the exposition, we introduce the shorthand notation
\begin{equation*}
    a(u,\gamma) = \frac{\overline{v}^{\operatorname{maj}}+\gamma u}{(1+\gamma)\overline{\ell}},
    \qquad
    b(u,\gamma) = \max\left\{ \frac{u}{\overline{\ell}}, \frac{1}{e} \right\},
    \qquad
    g(x) = x\log x.
\end{equation*}
so that $\theta^\star = \max_{u\in\mathcal U}  F(u,\gamma) $ with $F(u,\gamma) = g(a(u,\gamma)) - g(b(u,\gamma))$.\footnote{In other parts of the paper, we work with a fixed $\gamma$ and thus suppress its dependence in the notation. Here, $\gamma$ is allowed to vary, but for notational consistency we continue to write $\overline{\ell}$, $\theta^\star$, $\beta^\star$, and $\lambda$ without explicitly indicating their dependence on $\gamma$, even though they are functions of $\gamma$.} 
By the definition of $\mathcal{U}$, we have $\frac{1}{e} \leq a(u,\gamma),b(u,\gamma) \leq 1$.
The remainder of this proof is divided into two cases, one for each asymptotic regime.

    \textit{Case 1 ($\gamma \downarrow 0$)}: 
    In this asymptotic regime, if $\overline{v}^{\operatorname{min}} \geq  e\overline{v}^{\operatorname{maj}}$, then we have $\overline{\ell} = \overline{v}^{\operatorname{min}}$ and $\overline{v}^{\operatorname{maj}}\leq \frac{\overline{v}^{\operatorname{min}}}{e} = \frac{\overline{\ell}}{e}$, yielding $\beta^\star = \frac{1}{e}$.
    We then consider $\theta^\star$. Since the feasible set $\mathcal{U}$ encodes an explicit lower bound on $u$, we have 
\begin{equation*}\label{eqn: u in prop 2}
\begin{aligned}
    u\geq& \left(\frac{{\overline{\ell}(1+\gamma)}}{e\gamma}-\frac{{\overline{v}^{\operatorname{maj}}}}{\gamma}\right)^+
    \geq \frac{{\overline{\ell}(1+\gamma)}}{e\gamma}-\frac{{\overline{v}^{\operatorname{maj}}}}{\gamma}
    = \frac{1}{\gamma}\left( \frac{\overline{\ell}}{e} - \overline{v}^{\operatorname{maj}} \right) + \frac{\overline{\ell}}{e}\geq\frac{\overline{\ell}}{e}.
\end{aligned}
\end{equation*}     
It follows that $b(u,\gamma) = \frac{u}{\overline{\ell}}$. 
Furthermore, since $u \geq \frac{\overline{\ell}}{e} \geq \overline{v}^{\operatorname{maj}}$, it also holds that
\begin{equation*}
a(u,\gamma) = \frac{{\overline{v}^{\operatorname{maj}}}+\gamma u}{(1+\gamma){\overline{\ell}}} \leq \frac{u}{\overline{\ell}} = b(u,\gamma).
\end{equation*}
Since $g(x)$ is a non-decreasing function when $x\geq \frac{1}{e}$, we have $g(a(u,\gamma)) - g(b(u,\gamma)) \leq 0$, yielding $\theta^\star\leq0$.
Thus, $e\Bigl(\theta^\star - \beta^\star \log \beta^\star\Bigr) \leq -e \beta^\star \log \beta^\star \leq 1$ and 
\begin{equation}\label{eqn:lambda is 1-1}
\begin{aligned}
    \lim_{\gamma\downarrow0}\lambda =~\ &  \lim_{\gamma\downarrow0} \left({\overline{\ell}}  /  \max\Bigl\{\overline{v}^{{\operatorname{min}}},\frac{\overline{v}^{{\operatorname{maj}}}}{1+\gamma}\Bigr\}\right) = 1.
\end{aligned}
\end{equation}
Next, if $\overline{v}^{\operatorname{min}} < e\overline{v}^{\operatorname{maj}}$, then
for any sufficiently small $\gamma\in(0,\frac{e\overline{v}^{\operatorname{maj}}-\max\{\overline{v}^{\operatorname{min}}, \overline{v}^{\operatorname{maj}}\}}{\overline{v}^{\operatorname{min}}})$,
we have
\begin{equation}\label{eqn:prop 2 a>=beta}
    \frac{1}{e} < \frac{\overline{v}^{\operatorname{maj}}}{(1+\gamma)\overline{\ell}} = \beta^\star = a(u,\gamma) - \frac{\gamma u}{(1+\gamma)\overline{\ell}} \leq a(u,\gamma)\leq \frac{\overline{v}^{\operatorname{maj}}+\gamma \overline{v}^{\operatorname{min}}}{(1+\gamma)\overline{\ell}}\leq 1 \quad \forall u \in \mathcal{U}.
\end{equation}
The first inequality is because $\gamma <\frac{e\overline{v}^{\operatorname{maj}}-\max\{\overline{v}^{\operatorname{min}}, \overline{v}^{\operatorname{maj}}\}}{\overline{v}^{\operatorname{min}}}$. The first and second equalities are by the definition of $\beta^\star$ and $a(u,\gamma)$. The third inequality is because $\overline{v}^{\operatorname{min}}$ is the right endpoint of the feasible set $\mathcal U$.
We then find
\begin{equation*}
\begin{aligned} 
    \theta^\star - \beta^\star \log \beta^\star
    &=\max_{u\in \mathcal U} \left\{
        g(a(u,\gamma)) - g(b(u,\gamma)) \right\}- \beta^\star \log \beta^\star\\
    &=\max_{u\in \mathcal U} \left\{
        g(a(u,\gamma)) - g(\beta^\star)  - g(b(u,\gamma)) \right\}\\
    &\leq  \max_{u\in \mathcal U} \left\{
        g(a(u,\gamma)) - g(\beta^\star)  \right\} +\max_{u\in \mathcal U} \left\{ -g(b(u,\gamma)) \right\}\\
    &\leq \max_{u\in \mathcal U} \left\{
        \left(a(u,\gamma)) - \beta^\star\right)  g'(a(u,\gamma))  \right\} +\max_{u\in \mathcal U} \left\{ -g(b(u,\gamma)) \right\} \\
    &= \max_{u\in \mathcal U} \left\{
        \frac{\gamma u g'(a(u,\gamma))}{(1+\gamma)\overline{\ell}}     \right\} +\max_{u\in \mathcal U} \left\{ -g(b(u,\gamma)) \right\} \\
    &\leq \max_{u\in \mathcal U} \left\{
        \frac{\gamma u }{(1+\gamma)\overline{\ell}}     \right\} +\frac{1}{e}  \\
    & = \frac{\gamma \overline{v}^{\operatorname{min}}}{(1+\gamma)\overline{\ell}}  + \frac{1}{e} \leq \frac{\gamma }{(1+\gamma)} +\frac{1}{e},
\end{aligned}
\end{equation*} 
where the second inequality is due to convexity of $g$. The third equality holds by \eqref{eqn:prop 2 a>=beta}.
The third inequality holds because $g'(a(u,\gamma)) = 1 + \log a(u,\gamma) \leq 1$ when $a(u,\gamma) \leq 1$ as well as $-\frac{1}{e}$ is a global minimum of $g(x)$ over $x > 0$. Hence, $\lim_{\gamma\downarrow 0} \left(\theta^\star - \beta^\star \log \beta^\star\right) \leq \frac{1}{e}$, and \eqref{eqn:lambda is 1-1} holds again.

\textit{Case 2 ($\gamma \uparrow \infty$)}: 
If $\overline{v}^{\operatorname{min}} \geq  e\overline{v}^{\operatorname{maj}}$, then similar to the first part of the first case, we have $e\Bigl(\theta^\star - \beta^\star \log \beta^\star\Bigr)\leq 1$ and
\begin{equation}\label{eqn:lambda is 1-2}
\begin{aligned}
    \lim_{\gamma\uparrow\infty}\lambda =~\ &  \lim_{\gamma\uparrow\infty} \left({\overline{\ell}}  /  \max\Bigl\{\overline{v}^{{\operatorname{min}}},\frac{\overline{v}^{{\operatorname{maj}}}}{1+\gamma}\Bigr\}\right) = 1.
\end{aligned}
\end{equation} 
Assume now that $\overline{v}^{\operatorname{min}} < e\overline{v}^{\operatorname{maj}}$. Our roadmap is to show that $\lim_{\gamma\uparrow\infty}\theta^\star = 0$. Together with $\lim_{\gamma \uparrow \infty} \beta^\star \log \beta^\star = -1/e$, this would imply
\begin{equation*}
    \lim_{\gamma \uparrow \infty} e\left(\theta^\star - \beta^\star \log \beta^\star\right)
    \leq
    -e \lim_{\gamma \uparrow \infty} \beta^\star \log \beta^\star
    = 1,
\end{equation*}
and hence \eqref{eqn:lambda is 1-2} holds again. To this end, we first simplify the optimization problem defining $\theta^\star$ in \eqref{eq: definition of theta star}. In particular, we show that its feasible set $\mathcal{U}$ can be restricted, without loss of optimality, to
$\mathcal{U} \cap [\overline{v}^{\operatorname{min}}/e,\overline{v}^{\operatorname{min}}]$.

Indeed, any $u < \overline{v}^{\operatorname{min}}/e$ is strictly suboptimal in \eqref{eq: definition of theta star}. For such a $u$, the objective value is
\begin{equation*}
\begin{aligned}
    F(u,\gamma)
    = g(a(u,\gamma)) - g(b(u,\gamma))
    = g(a(u,\gamma)) - g(1/e),
\end{aligned}
\end{equation*}
where the second equality follows from
\[
    \frac{u}{\overline{\ell}}
    \leq
    \frac{u}{\overline{v}^{\operatorname{min}}}
    <
    \frac{1}{e}.
\]
Since $a(u,\gamma)$ is non-decreasing in $u$ and 
$g(x)$ is increasing on $x\in[1/e,\infty)$, 
it follows that $F(u,\gamma)$ is non-decreasing in $u$ whenever
$u < \overline{v}^{\operatorname{min}}/e$. Hence, any such $u$ generates an objective value that is smaller than or equal to that of $u=\overline{v}^{\operatorname{min}}/e$, and may be excluded from the feasible set without loss of generality.

Focusing on $u \in \mathcal{U} \cap [\overline{v}^{\operatorname{min}}/e,\overline{v}^{\operatorname{min}}]$, 
we now show that $F(u,\gamma)$ uniformly converges to $0$ when $\gamma\uparrow \infty$, in the sense that
there exists a $u$-independent function $\varepsilon(\gamma)$ with $\varepsilon(\gamma) \downarrow 0$ as $\gamma \uparrow \infty$ such that, 
\begin{equation*}
\left|F(u,\gamma)\right|
\leq \varepsilon(\gamma)
\quad
\forall u \in \mathcal{U} \cap[\overline{v}^{\operatorname{min}}/e,\overline{v}^{\operatorname{min}}],\gamma > 0.
\end{equation*}
To establish this uniform convergence, note that $g(x)=x\log x$ is
$1$-Lipschitz on $[1/e,1]$, which contains all admissible values of $a(u,\gamma)$ and $b(u,\gamma)$. Therefore, we have
\begin{equation*}
\begin{aligned}
    \vert F(u,\gamma) \vert
    &= \vert g(a(u,\gamma)) - g(b(u,\gamma)) \vert \\
    &\leq  \vert a(u,\gamma) - b(u,\gamma) \vert \\
    &\leq  \left\vert a(u,\gamma) - \frac{u}{\overline{v}^{\operatorname{min}}} \right\vert
    + \left\vert b(u,\gamma) - \frac{u}{\overline{v}^{\operatorname{min}}} \right\vert \\
    &\leq  \left\vert a(u,\gamma) - \frac{u}{\overline{v}^{\operatorname{min}}} \right\vert
    +  \left\vert \max\left\{ \frac{u}{\overline{\ell}}, \frac{1}{e} \right\}
    -  \max\left\{\frac{u}{\overline{v}^{\operatorname{min}}}, \frac{1}{e} \right\} \right\vert \\
    &\leq  \left\vert a(u,\gamma) - \frac{u}{\overline{v}^{\operatorname{min}}} \right\vert
    +  \left\vert  \frac{u}{\overline{\ell}} - \frac{u}{\overline{v}^{\operatorname{min}}} \right\vert \\
    &=
    \frac{
    \left\vert
    \overline{v}^{\operatorname{min}}\overline{v}^{\operatorname{maj}}
    -
    u\max \{ \overline{v}^{\operatorname{min}},\overline{v}^{\operatorname{maj}} \}
    \right\vert
    }{
    \overline{v}^{\operatorname{min}}
    \left(
    \max \{ \overline{v}^{\operatorname{min}},\overline{v}^{\operatorname{maj}} \}
    +
    \gamma \overline{v}^{\operatorname{min}}
    \right)
    }
    +
    \frac{
    u \left(
    \max \{ \overline{v}^{\operatorname{min}},\overline{v}^{\operatorname{maj}} \}
    -
    \overline{v}^{\operatorname{min}}
    \right)
    }{
    \overline{v}^{\operatorname{min}}
    \left(
    \max \{ \overline{v}^{\operatorname{min}},\overline{v}^{\operatorname{maj}} \}
    +
    \gamma \overline{v}^{\operatorname{min}}
    \right)
    }\\
    & \leq   \frac{
    \overline{v}^{\operatorname{maj}}
    +
    \max \{ \overline{v}^{\operatorname{min}},\overline{v}^{\operatorname{maj}} \}
    }{
    \max \{ \overline{v}^{\operatorname{min}},\overline{v}^{\operatorname{maj}} \}
    +
    \gamma \overline{v}^{\operatorname{min}}   
    }
    +
    \frac{
    \max \{ \overline{v}^{\operatorname{min}},\overline{v}^{\operatorname{maj}} \}
    -
    \overline{v}^{\operatorname{min}}
    }{
    \max \{ \overline{v}^{\operatorname{min}},\overline{v}^{\operatorname{maj}} \}
    +
    \gamma \overline{v}^{\operatorname{min}}
    }. 
\end{aligned}
\end{equation*}
The fourth inequality holds because the mapping $x \mapsto \max\{x,\frac{1}{e}\}$ is 1-Lipshitz. 
The right-hand side is an $u$-independent upper bound $\varepsilon(\gamma)$ that converges to zero
as $\gamma\uparrow\infty$.

Therefore,
\begin{equation*}
    \lim_{\gamma \uparrow \infty} \theta^\star \leq \lim_{\gamma \uparrow \infty} \max_{ u \in  \overline{v}^{\operatorname{min}}[1/e,1] }  F(u,\gamma)  = 0,
\end{equation*}
which in turn implies that $\lim_{\gamma \uparrow \infty} \lambda = 1$.

Combining the results from the two cases, we have shown that in both asymptotic regimes, the constant approximation ratio converges to one. The proof is therefore complete. 
\end{proof}

\begin{proof}{Proof of Proposition \ref{prop:fairness of payments}}

If the highest bidder belongs to the minority group, then by the definition of $(\bm q^{\operatorname{sa}}, \bm m^{\operatorname{sa}})$, the allocations and payments for majority bidders are zero, and the inequality holds immediately.
        
If the highest bidder is from the majority group, then by the definition of $(\bm q^{\operatorname{sa}}, \bm m^{\operatorname{sa}})$, there are two bidders whose allocations (and payments) might be non-zero: the highest minority bidder with $(q^{\operatorname{sa}}_{i^{\operatorname{min}}(\bm{v})}(\bm{v}),m^{\operatorname{sa}}_{i^{\operatorname{min}}(\bm{v})}(\bm{v}))$ 
and the highest majority bidder with $(q^{\operatorname{sa}}_{i^{\operatorname{maj}}(\bm{v})}(\bm{v}),m^{\operatorname{sa}}_{i^{\operatorname{maj}}(\bm{v})}(\bm{v}))$. 
By \eqref{eq:IR}, the payment of the highest minority bidder satisfies 
\begin{equation}\label{eqn:fairness of payments 1}
     v_{i^{\operatorname{min}}(\bm{v})}q^{\operatorname{sa}}_{i^{\operatorname{min}}(\bm{v})}(\bm{v})\geq m^{\operatorname{sa}}_{i^{\operatorname{min}}(\bm{v})}(\bm{v}).
\end{equation}
If $v_{i^{\operatorname{min}}(\bm{v})} = 0$, then $m^{\operatorname{sa}}_{i^{\operatorname{min}}(\bm{v})}(\bm{v}) =0$ and the inequality in this proposition holds immediately. 
Therefore, it suffices to consider the case $v_{i^{\operatorname{min}}(\bm{v})} > 0$.
We then analyze the payment of the highest majority bidder. 

\begin{equation}\label{eqn:fairness of payments 2}
\begin{aligned}
    &m^{\operatorname{sa}}_{i^{\operatorname{maj}}(\bm{v})}(\bm{v})\\
    &= v_{i^{\operatorname{maj}}(\bm{v})}q^{\operatorname{sa}}_{i^{\operatorname{maj}}(\bm{v})}(\bm{v})-\int_{0}^{v_{i^{\operatorname{maj}}(\bm{v})}}q^{\operatorname{sa}}_{i^{\operatorname{maj}}(\bm{v})}(x,\bm{v}_{-{i^{\operatorname{maj}}(\bm{v})}}) {\rm{d}} x\\
    &= v_{i^{\operatorname{maj}}(\bm{v})}q^{\operatorname{sa}}_{i^{\operatorname{maj}}(\bm{v})}(\bm{v})-\int_{v_{i^{\operatorname{min}}(\bm{v})}}^{v_{i^{\operatorname{maj}}(\bm{v})}}q^{\operatorname{sa}}_{i^{\operatorname{maj}}(\bm{v})}(x,\bm{v}_{-{i^{\operatorname{maj}}(\bm{v})}}) {\rm{d}} x\\
    &= v_{i^{\operatorname{min}}(\bm{v})}q^{\operatorname{sa}}_{i^{\operatorname{maj}}(\bm{v})}(\bm{v}) + (v_{i^{\operatorname{maj}}(\bm{v})} - v_{i^{\operatorname{min}}(\bm{v})}) q^{\operatorname{sa}}_{i^{\operatorname{maj}}(\bm{v})}(\bm{v})-\int_{v_{i^{\operatorname{min}}(\bm{v})}}^{v_{i^{\operatorname{maj}}(\bm{v})}}q^{\operatorname{sa}}_{i^{\operatorname{maj}}(\bm{v})}(x,\bm{v}_{-{i^{\operatorname{maj}}(\bm{v})}}) {\rm{d}} x\\
     &= v_{i^{\operatorname{min}}(\bm{v})}q^{\operatorname{sa}}_{i^{\operatorname{maj}}(\bm{v})}(\bm{v}) + \int_{v_{i^{\operatorname{min}}(\bm{v})}}^{v_{i^{\operatorname{maj}}(\bm{v})}}(q^{\operatorname{sa}}_{i^{\operatorname{maj}}(\bm{v})}(\bm{v})-q^{\operatorname{sa}}_{i^{\operatorname{maj}}(\bm{v})}(x,\bm{v}_{-{i^{\operatorname{maj}}(\bm{v})}})) {\rm{d}} x\\
    &\geq v_{i^{\operatorname{min}}(\bm{v})}q^{\operatorname{sa}}_{i^{\operatorname{maj}}(\bm{v})}(\bm{v}). 
\end{aligned}
\end{equation}
The first equality above is obtained by the definition of $\bm {m}^{\operatorname{sa}}$. The second equality holds by the definition of $\bm {q}^{\operatorname{sa}}$ and that
the value of $q^{\operatorname{sa}}_{i^{\operatorname{maj}}(\bm{v})}(x,\bm{v}_{-{i^{\operatorname{maj}}(\bm{v})}})$ is zero when $x<v_{i^{\operatorname{min}}(\bm{v})}$. 
The inequality holds because $q^{\operatorname{sa}}_{i^{\operatorname{maj}}(\bm{v})}(x,\bm{v}_{-{i^{\operatorname{maj}}(\bm{v})}})$ is an non-decreasing function of $x$.
Since $v_{i^{\operatorname{min}}(\bm{v})} > 0$ and $\bm {q}^{\operatorname{sa}}$ is non-negative, combining \eqref{eqn:fairness of payments 1} and \eqref{eqn:fairness of payments 2} yields the desired inequality in the proposition.
\end{proof}

\begin{proof}{Proof of Proposition \ref{prop:feasibility of stochastic case}}
We will first prove that the constraint \eqref{eq:IC} holds under the mechanism $(\bm q^\star, \bm m^\star)$ for both minority bidders {(\it Case 1)} and majority bidders {(\it Case 2)}.  
Then, we will prove constraints ~\eqref{eq:IR},~\eqref{eq:AF} and ~\eqref{eq:Eq} and conclude the proof.

    To prove that $(\bm q^\star, \bm m^\star)$ satisfies the \eqref{eq:IC} constraint, it is sufficient to show that conditions (i) and (ii) in Lemma \ref{lem: monotonicity of allocation} hold. By the construction of $\bm m^\star$, condition (ii) in Lemma~\ref{lem: monotonicity of allocation} immediately holds. We now consider (i) and show that the allocation rule $\bm q^\star$ is non-decreasing, that is, for each $i \in \mathcal I$, it holds that $q^\star_i(v_i,\bm{v}_{-i}) \geq q^\star_i(w_i,\bm{v}_{-i})$ for all $\bm{v} \in \mathcal{V}$ and $w_i \in \mathcal{V}_i: v_i \geq w_i$.
    Suppose for the sake of finding a contradiction that $q^\star_i(v_i,\bm{v}_{-i}) < q^\star_i(w_i,\bm{v}_{-i})$ for some $\bm v \in \mathcal V$, $w_i \in \mathcal V_i$ such that $v_i \geq w_i$, and consider two cases.

{\it Case 1} ($i \in \minority$): 
If $0 \leq q^\star_i(v_i,\bm{v}_{-i}) < q^\star_i(w_i,\bm{v}_{-i})$, by the definition of $\bm q^\star$, we must have $q_i(w_i,\bm{v}_{-i}) = 1$ or $q_i(w_i,\bm{v}_{-i}) = \frac{\gamma}{1+\gamma}$.
Suppose first that $q^\star_i(w_i,\bm{v}_{-i}) = 1$. By definition of $\bm q^\star$, in this case, $i$ must be the smallest index of the bidders with the highest and non-negative virtual value. 
As $v_i > w_i$ and as $\psi_i$ is non-decreasing by Assumption \ref{assum:regularity}, $\psi_i(v_i) \geq 0$ and $\psi_i(v_i) = \max_{j \in \mathcal{I}} \psi_j(v_j)$. Hence, $q^\star_i(v_i,\bm{v}_{-i}) = 1$ which coincides with $q^\star_i(w_i,\bm{v}_{-i})$, leading to a contradiction. 
Suppose next that $q^\star_i(w_i,\bm{v}_{-i}) = \frac{\gamma}{1+\gamma}$. In this case, $i$ must represent the bidder with the highest virtual value from the minority group, and it must hold that 
\begin{equation*}
    \max_{j \in \majority} \psi_j(v_j) > \psi_i(w_i) \quad \text{and} \quad \max_{j \in \majority} \psi_j(v_j) + \gamma \psi_i(w_i) \geq 0.
\end{equation*}
The second inequality continues to hold when $\psi_i(w_i)$ is replaced by $\psi_i(v_i)$. If the first inequality also remains satisfied after a similar substitution $\psi_i(v_i)$ for $\psi_i(w_i)$, then $q^\star_i(v_i,\bm{v}_{-i}) = \frac{\gamma}{1+\gamma} = q^\star_i(w_i,\bm{v}_{-i})$ resulting in a contradiction. On the other hand,~if $\psi_i(v_i) \geq \max_{j \in \majority} \psi_j(v_j)$, then $i$ must be the bidder with the highest virtual value among all bidders and 
\begin{equation*}
    (1+\gamma) \psi_i(v_i) \geq \max_{j \in \majority} \psi_j(v_j) + \gamma \psi_i(v_i) \geq \max_{j \in \majority} \psi_j(v_j) + \gamma \psi_i(w_i) \geq 0.
\end{equation*}
As a result, $q^\star_i(v_i,\bm{v}_{-i}) = 1 > q^\star_i(w_i,\bm{v}_{-i})$ resulting in another contradiction.

{\it Case 2} ($i \in \majority$): If $0 \leq q^\star_i(v_i,\bm{v}_{-i}) < q^\star_i(w_i,\bm{v}_{-i})$, by the definition of $\bm q^\star$, we must have $q_i(w_i,\bm{v}_{-i}) = \frac{1}{1+\gamma}$. 
It hence follows that
\begin{equation*}
    \psi_i(w_i) \geq \max_{j \in \mathcal{I}{\backslash\{i\}}} \psi_j(v_j) \quad \text{and} \quad 
    \psi_i(w_i) + \gamma \max_{j \in \minority} \psi_j(v_j) \geq 0.
\end{equation*}
Since $v_i > w_i$ and $\psi_i$ is non-decreasing by Assumption \ref{assum:regularity}, $i$ remains the highest majority bidder in scenario $(v_i,\bm{v}_{-i})$, and the above two inequalities continue to hold even if $\psi_i(w_i)$ is replaced by $\psi_i(v_i)$. Hence, it must also hold that $q^\star_i(v_i,\bm{v}_{-i}) = \frac{1}{1+\gamma}$ and a similar contradiction is found.

Constraint \eqref{eq:IR} holds immediately because $q_i(\bm v)v_i - m_i(\bm v) = \int_{0}^{{v}_i} q_i(x, \bm v_{-i})\text{d}x \geq 0 $
for all $i \in \mathcal I$ and $\bm v \in \mathcal V$. 

Constraint \eqref{eq:AF} is also trivially satisfied. Indeed, if a bidder with the highest virtual value is from the minority group, then the item can be allocated to neither the majority group nor any minority bidder with dominated virtual values. On the other hand, if this bidder comes from the majority group, then $\sum_{i\in\mathcal{I}} q_i^\star(\bm{v})$ cannot exceed $\frac{\gamma}{1+\gamma} + \frac{1}{1+\gamma}=1$.

Finally, we prove \eqref{eq:Eq}. Suppose for the sake of contradiction that \eqref{eq:Eq} does not hold for some~$\bm v \in \mathcal V$. Then, it must hold that $\sum_{i \in \mathcal{I}^{\operatorname{maj}}} q^\star_i(\bm{v}) = \frac{1}{1+\gamma}$ and $\sum_{i \in \mathcal{I}^{\min}} q^\star_i(\bm{v}) < \frac{\gamma}{1+\gamma}$, and hence, $q^\star_i(\bm{v}) = 0,\ \forall i \in \mathcal{I}^{\min}$. 
Since the total allocation to the majority group is non-zero, $\min \arg \max_{j \in  \mathcal I} \psi_j(v_{j})\in \mathcal I^\text{maj}$ and $ \psi_i(v_i) + \gamma \max_{j \in \mathcal I^{\operatorname{min}}} \psi_j(v_j) \geq 0$.
Thus, the allocation to the bidder with the largest virtual value in the minority group is $\frac{\gamma}{1+\gamma}$, leading to a contradiction.
The claim thus follows.
\end{proof}

The next result reformulates problem \eqref{Rev-MDP} by leveraging \Cref{lem: monotonicity of allocation} to express the payment rules $m_i$ in terms of the allocation rules $q_i$, and by exploiting Assumption~\ref{assum:regularity} to express the total revenue in terms of the virtual values and allocation outcomes, taking inspiration from \citep[Chapter~5]{krishna2009auction}. This lemma will be useful to prove \Cref{theorem:optimal mechanism REV}.

\begin{lemma}[See, e.g., \cite{krishna2009auction}]\label{lem: reformulation in terms of virtual values}
Problem \eqref{Rev-MDP} is equivalent to
   \begin{equation}\label{eq: reformulation REV-MDP}
    \begin{aligned}
    &\max &&\mathbb E_{\mathbb F}\left[\sum_{i \in \mathcal I} \psi_i(\tilde v_i) q_i(\tilde{\bm v})\right]\\
        &\;\;\operatorname{s.t.} &&\bm q \in \mathcal L(\mathcal V, [0,1]^I), \; \bm m \in \mathcal L(\mathcal V, \mathbb R^I)\\
&&& q_i(v_i, \bm{v}_{-i}) \geq q_i(w_i, \bm{v}_{-i}) \;\;\;\forall i \in \mathcal I, \forall \bm v \in \mathcal V, \forall w_i \in \mathcal V_i : w_i \leq v_i\\
&&& m_i(\bm v) = q_i(\bm v)v_i - \int_{0}^{{v}_i} q_i(x, \bm v_{-i})\text{d}x \;\;\;\forall i \in \mathcal I, \forall \bm v \in \mathcal V\\
&&&\eqref{eq:AF},\,\eqref{eq:Eq}.
    \end{aligned}
\end{equation}
\end{lemma}

\begin{proof}{Proof of Lemma \ref{lem: reformulation in terms of virtual values}}
   By utilizing Lemma \ref{lem: monotonicity of allocation} and noting that $m_i(0, \bm v_{-i}) = 0$ holds for all $\bm v \in  \mathcal{V}$ and $i \in \mathcal I$ at optimality, we can reformulate problem \eqref{Rev-MDP} equivalently as follows. 
\begin{equation}\label{eq: reformulation step 1 REV-MDP}
    \begin{aligned}
    &\max &&\mathbb E_{\mathbb F}\left[\sum_{i \in \mathcal I} \left(q_i(\tilde{\bm v})\tilde{v}_i - \int_{0}^{\tilde{v}_i} q_i(x, \tilde{\bm v}_{-i})\text{d}x \right) \right]\\
        &\;\;\text{s.t.} &&\bm q \in \mathcal L(\mathcal V, [0,1]^I), \; \bm m \in \mathcal L(\mathcal V, \mathbb R^I)\\
&&& q_i(v_i, \bm{v}_{-i}) \geq q_i(w_i, \bm{v}_{-i}) \;\;\;\forall i \in \mathcal I, \forall \bm v \in \mathcal V, \forall w_i \in \mathcal V_i : w_i \leq v_i\\
&&& m_i(\bm v) = q_i(\bm v)v_i - \int_{0}^{{v}_i} q_i(x, \bm v_{-i})\text{d}x \;\;\;\forall i \in \mathcal I, \forall \bm v \in \mathcal V\\
&&&\eqref{eq:AF},\,\eqref{eq:Eq}
    \end{aligned}
\end{equation}
Let $f$ denote the joint density function of $\tilde{\bm{v}}$. The objective function of~\eqref{eq: reformulation step 1 REV-MDP} can then be expressed in terms of \( f \) and simplified as follows:
\begin{equation}\label{eq: reformulation step 1}
\allowdisplaybreaks
    \begin{aligned}
    &\mathbb E_{\mathbb F}\left[\sum_{i \in \mathcal I} \left(q_i(\tilde{\bm v})\tilde{v}_i - \int_{0}^{\tilde{v}_i} q_i(x, \tilde{\bm v}_{-i})\text{d}x \right) \right]\\
    &= \sum_{i \in \mathcal I} \left(\int_{\mathcal{V}_{-i}} \int_{\mathcal V_i} q_i({\bm v}){v}_i f(\bm v) \text{d}v_i \text{d}\bm v_{-i} -  \int_{\mathcal{V}_{-i}} \int_{\mathcal V_i} \int_{0}^{{v}_i} q_i(x, {\bm v}_{-i})\text{d}x f(\bm v) \text{d}v_i \text{d}\bm v_{-i} \right)\\
    &= \sum_{i \in \mathcal I} \left(\int_{\mathcal{V}_{-i}} \int_{\mathcal V_i} q_i({\bm v}){v}_i f(\bm v) \text{d}v_i \text{d}\bm v_{-i} -  \int_{\mathcal{V}_{-i}} \int_{\mathcal V_i} \int_{0}^{x} q_i(v_i, {\bm v}_{-i})\text{d}v_i f(x,\bm v_{-i}) \text{d}x \text{d}\bm v_{-i} \right)\\
    &= \sum_{i \in \mathcal I} \left(\int_{\mathcal{V}_{-i}} \int_{\mathcal V_i} q_i({\bm v}){v}_i f(\bm v) \text{d}v_i \text{d}\bm v_{-i} -  \int_{\mathcal{V}_{-i}} \int_{\mathcal V_i}  q_i(v_i, {\bm v}_{-i}) \left(\int_{v_i}^{\overline{v}_i} f(x, \bm v_{-i}) \text{d}x\right) \text{d}v_i \text{d}\bm v_{-i} \right)\\
    &= \sum_{i \in \mathcal I} \int_{\mathcal{V}_{-i}} \int_{\mathcal V_i} \left(v_i - \frac{\int_{v_i}^{\overline{v}_i} f(x, \bm v_{-i}) \text{d}x}{f(\bm v)} \right) q_i({\bm v}) f(\bm v) \text{d}v_i \text{d}\bm v_{-i},
    \end{aligned}
\end{equation}
where the second equality follows from relabeling the variable \(x\) as \(v_i\) and vice versa, and the third equality results from swapping the order of integration over \(x\) and \(v_{i}\), which is justified by Fubini's theorem.

Next, we recall the definition of the virtual values
\begin{equation*}
    \begin{aligned}
    \psi_i(v_i) = v_i - \frac{1 - \int_{0}^{{v}_i} f_i(x) \text{d}x}{f_i(v_i)} 
    = v_i - \frac{\int_{v_i}^{\overline{v}_i} f_i(x) \text{d}x}{f_i(v_i)} 
    = v_i - \frac{\int_{v_i}^{\overline{v}_i} f(x, \bm v_{-i}) \text{d}x}{f(\bm v)} \;\;\;\forall \bm v \in \mathcal V,
    \end{aligned}
\end{equation*}
where the transition from marginal density function $f_i$ to the joint density function $f$ in the rightmost equation is valid when the bidder's values are independent.
Using this observation and the equation \eqref{eq: reformulation step 1}, we can finally re-express the objective of~\eqref{eq: reformulation step 1 REV-MDP} as
\begin{equation*}
    \begin{aligned}
    \sum_{i \in \mathcal I} \int_{\mathcal{V}_{-i}} \int_{\mathcal V_i} \psi_i(v_i) q_i({\bm v}) f(\bm v) \text{d}v_i \text{d}\bm v_{-i} = \mathbb E_{\mathbb F}\left[\sum_{i \in \mathcal I} \psi_i(\tilde{v}_i) q_i(\tilde{\bm v})\right],
    \end{aligned}
\end{equation*}
which completes the proof. 
\end{proof}

\begin{proof}{Proof of Theorem \ref{theorem:optimal mechanism REV}}
Since~\eqref{Rev-MDP} and~\eqref{eq: reformulation REV-MDP} are equivalent by Lemma~\ref{lem: reformulation in terms of virtual values}, it suffices to prove the optimality of $(\bm q^\star, \bm m^\star)$ in~\eqref{eq: reformulation REV-MDP}. 
To this end, first, we relax the monotonicity constraints of $q_i$ in $v_i$ in~\eqref{eq: reformulation REV-MDP} and consider the relaxed problem
\begin{equation} \label{Rev-MDP-relaxed}
\begin{aligned}
    \max_{\bm{q} \in \mathcal L(\mathcal V, [0,1]^I) } \left\{ \mathbb E_{\mathbb F}\left[\sum_{i \in \mathcal I} \psi_i(\tilde v_i) q_i(\tilde{\bm v})\right]
    : \eqref{eq:AF},\,\eqref{eq:Eq} \right\}.
\end{aligned}
\end{equation}
Note that the constraints related to the payment rule $\bm m$ in~\eqref{eq: reformulation REV-MDP} can be omitted, allowing the problem to be solved only for the allocation rule $\bm q$ since $\bm m$ does not appear in any other constraints or the objective function. The optimal payment rule can be constructed after solving the problem by using the optimal allocation rule and setting $m_i(0, \bm v_{-i}) = 0$ for all $\bm v \in  \mathcal{V}$ and $i \in \mathcal I$ (see Lemma~\ref{lem: monotonicity of allocation}).
Our strategy is to show that $\bm{q}^\star$ is optimal in~\eqref{Rev-MDP-relaxed}. Given that $(\bm{q}^\star,\bm{m}^\star)$ is incentive compatible (see Proposition~\ref{prop:feasibility of stochastic case}), it would then automatically hold that $\bm{q}^\star$ satisfies the omitted monotonicity constraints corresponding to \eqref{eq:IC} (see Lemma~\ref{lem: monotonicity of allocation}), and hence $\bm{q}^\star$ would then be optimal not only in~\eqref{Rev-MDP-relaxed} but also in~\eqref{eq: reformulation REV-MDP} and in~\eqref{Rev-MDP}. 

To achieve this, we first observe that the objective and the constraints of~\eqref{Rev-MDP-relaxed} are decoupled over~$\bm{v} \in \mathcal{V}$. Solving~\eqref{Rev-MDP-relaxed} is thus tantamount to solving the optimization problem 
\begin{equation}\label{lp-I}
    \max_{q(\bm{v}) \in [0,1]^I} \left\{ \sum_{i \in \mathcal I} \psi_i(v_i)q_i(\bm{v}): \sum_{i \in \mathcal{I}} q_i(\bm{v}) \leq 1, \sum_{i \in \mathcal I^{\operatorname{min}}} q_i(\bm v) \geq \gamma \sum_{i \in \mathcal I^{\operatorname{maj}}} q_i(\bm v) \right\},
\end{equation}
which is parametrized in $\bm{v} \in \mathcal{V}$ and is a linear program consisting of $I$ decision variables. 
Introducing aggregate allocation variables for each group of bidders: $\tau^{\operatorname{min}}(\bm{v}) = \sum_{i \in \mathcal{I}^{\min}} q_i(\bm{v})$ and $\tau^{\operatorname{maj}}(\bm{v}) = \sum_{i \in \mathcal{I}^{\operatorname{maj}}} q_i(\bm{v})$, we can relax~\eqref{lp-I} one step further as another parametrized linear program in $\bm{v} \in \mathcal{V}$:
\begin{equation}\label{lp-2}
    \max_{\tau^{\operatorname{min}}(\bm{v}),\tau^{\operatorname{maj}}(\bm{v}) \geq 0} \left\{ \tau^{\operatorname{min}}(\bm{v}) \max_{i \in \mathcal{I}^{\operatorname{min}}} \psi_i(v_i) + \tau^{\operatorname{maj}}(\bm{v}) \max_{i \in \mathcal{I}^{\operatorname{maj}}} \psi_i(v_i): 
    \begin{array}{l} 
        \tau^{\operatorname{min}}(\bm{v}) + \tau^{\operatorname{maj}}(\bm{v}) \leq 1 \\
        \tau^{\operatorname{min}}(\bm{v}) \geq \gamma \tau^{\operatorname{maj}}(\bm{v})
    \end{array}
    \right\}
\end{equation}
Problem~\eqref{lp-2} consists of only two variables and admits a vertex solution, that is, the optimal solution belongs to the set $\left\{ (0,0), (1,0), (\frac{\gamma}{1+\gamma},\frac{1}{1+\gamma}) \right\}$ of vertices. When the highest minority virtual value $\max_{i \in \mathcal{I}^{\operatorname{min}}} \psi_i(v_i)$ surpasses the highest majority virtual value $\max_{i \in \mathcal{I}^{\operatorname{min}}} \psi_i(v_i)$ and is non-negative, $(1,0)$ is optimal. On the other hand, if the highest majority virtual value dominates and if $\max_{i \in \mathcal{I}^{\operatorname{maj}}} \psi_i(v_i) + \gamma \max_{i \in \mathcal{I}^{\operatorname{min}}} \psi_i(v_i) \geq 0$, then $(\frac{\gamma}{1+\gamma},\frac{1}{1+\gamma})$ is optimal. Otherwise, the optimal solution of~\eqref{lp-2} is $(0,0)$. It turns out that~\eqref{lp-2} is in fact a tight relaxation because its optimal objective value can be attained by $\bm{q}^\star(\bm{v})$ in~\eqref{lp-I}. As a result, for each $\bm{v} \in \mathcal{V}$, $\bm{q}^\star(\bm{v})$ is optimal in~\eqref{lp-I}, and therefore $\bm{q}^\star$ itself is optimal in~\eqref{Rev-MDP-relaxed},~\eqref{eq: reformulation REV-MDP}, and~\eqref{Rev-MDP}. 
\end{proof}

\section{Supplementary Numerical Results}\label{appendix: experiment}
In this section, we repeat the experiment from Section~\ref{sec: Numerical Experiment} for $\rho$ values from $\{0.5, -0.5\}$ and show the results in Figures~\ref{fig:pos_rho} and~\ref{fig:neg_rho}, respectively. We observe that the results are similar to the case where $\rho = 0$ (see Figure~\ref{fig:zero_rho}), where the regret-based mechanism $(\hat{\bm{q}}, \hat{\bm{m}})$ exhibits the most stable performance across different contamination levels and support configurations.
Notably, numerical results indicate that the revenue of the regret-based mechanism increases with the contamination level for all $\rho = -0.5,0,0.5$.

\begin{figure}[!htbp]
    \centering
    \includegraphics[scale=0.30]{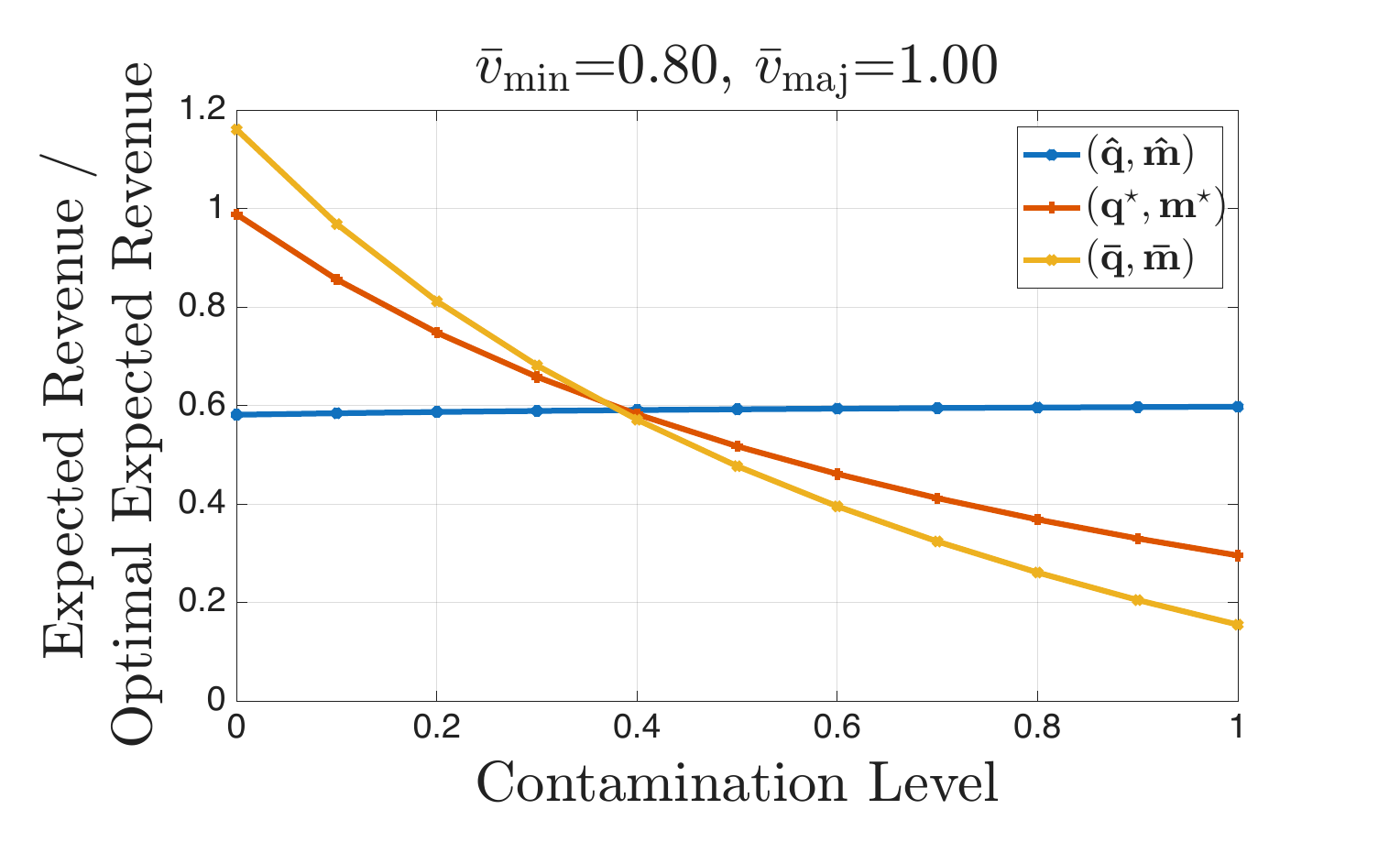}
    \includegraphics[scale=0.30]{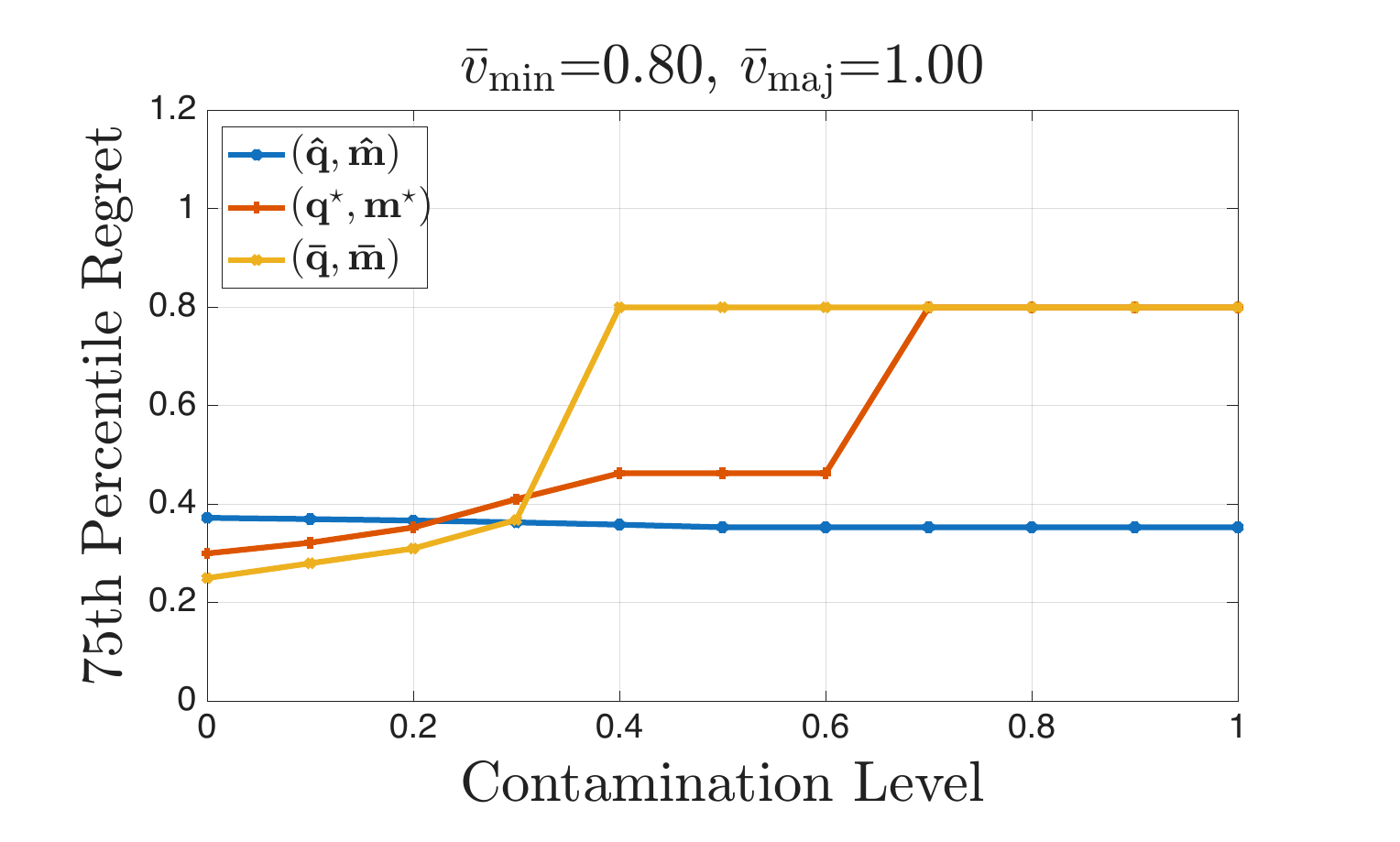}
    
    \includegraphics[scale=0.30]{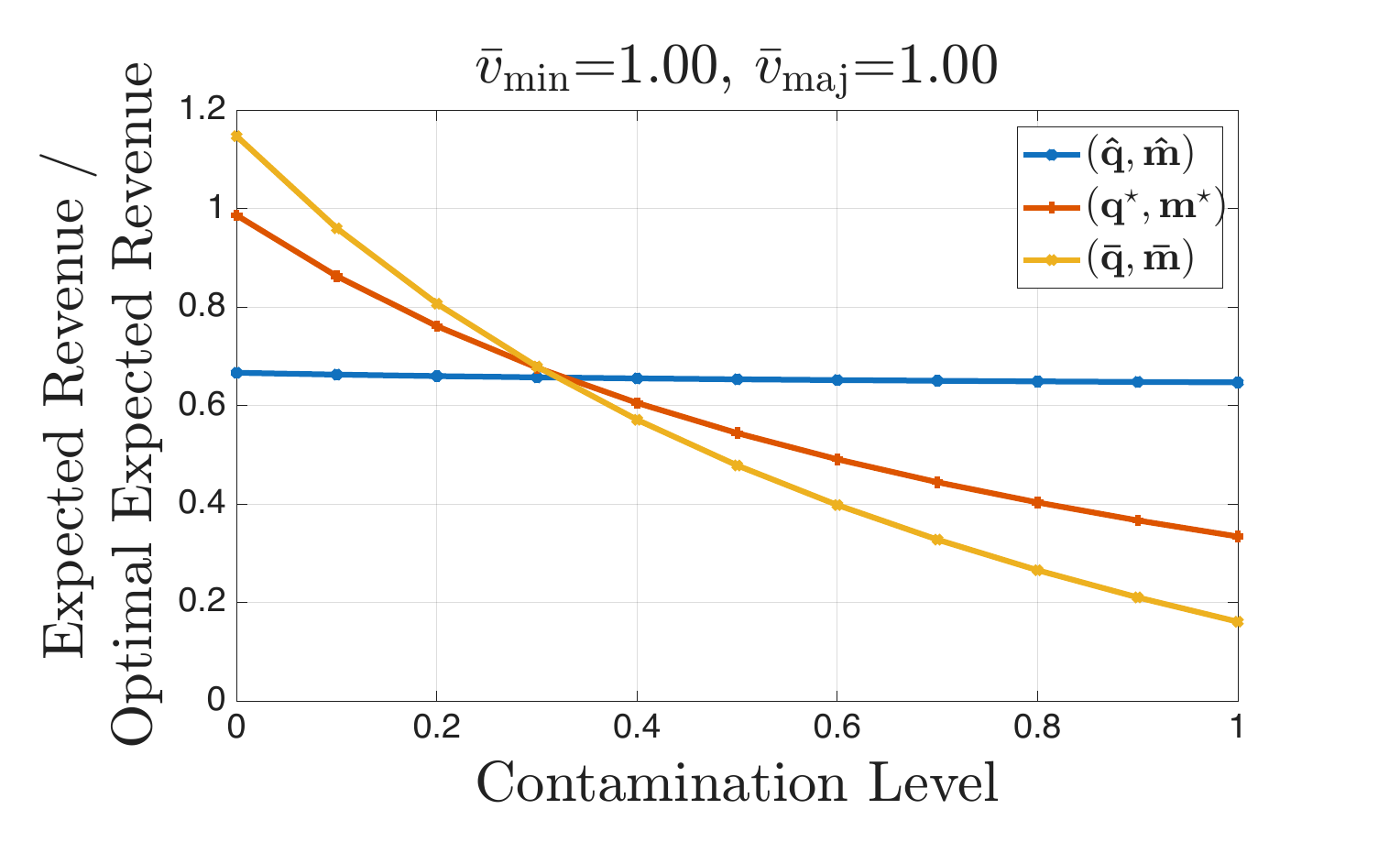}
    \includegraphics[scale=0.30]{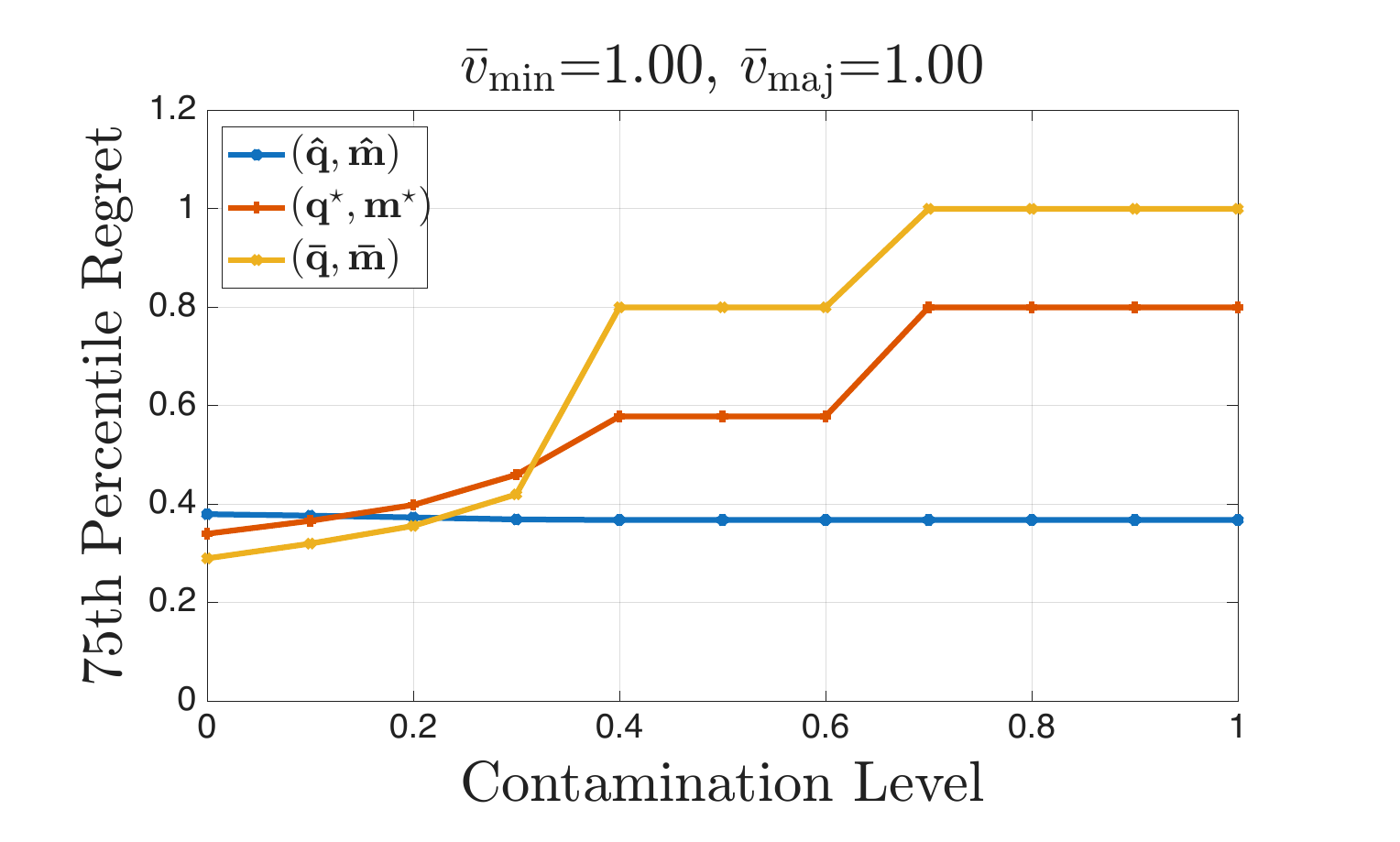}
    
    \includegraphics[scale=0.30]{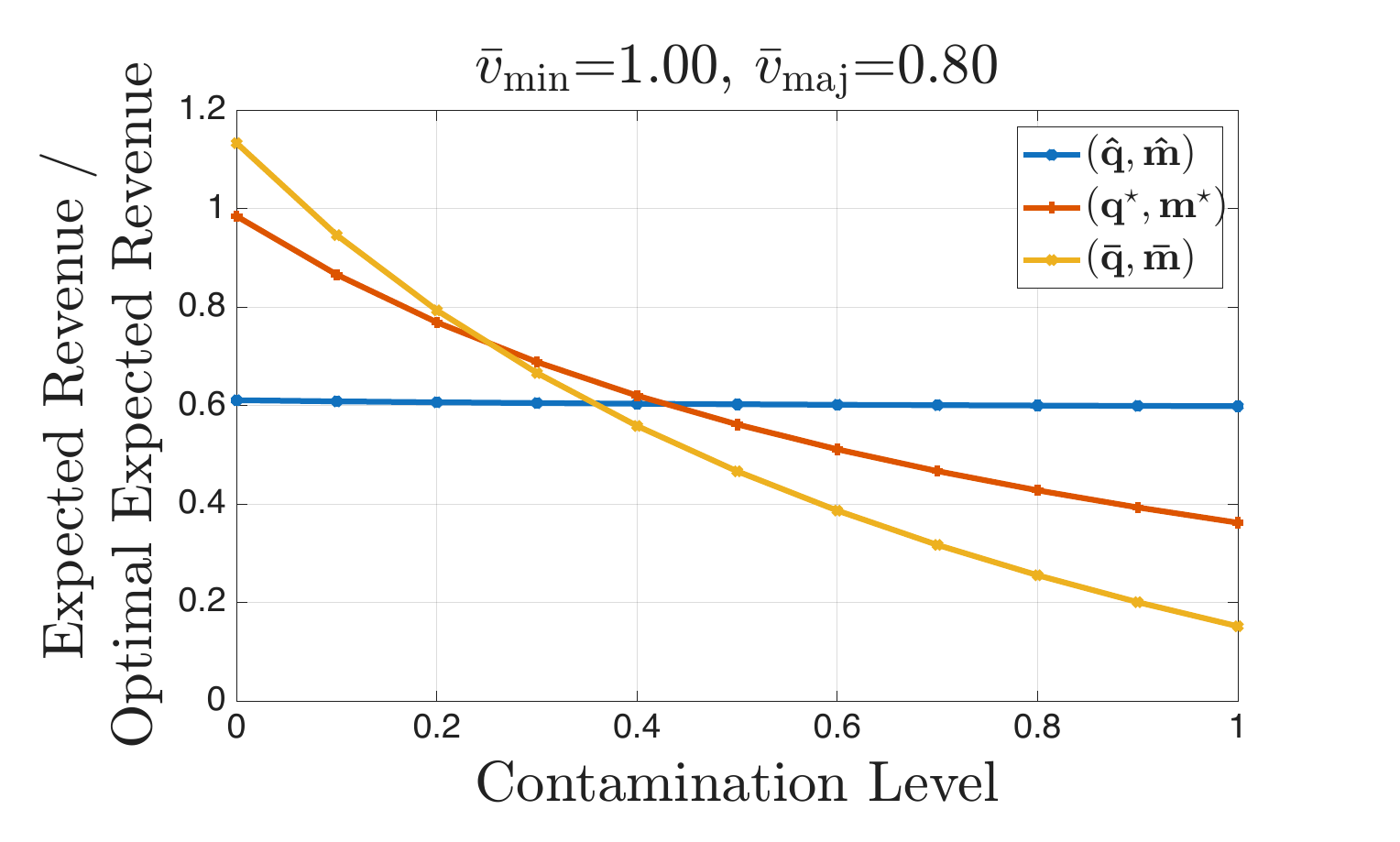}
    \includegraphics[scale=0.30]{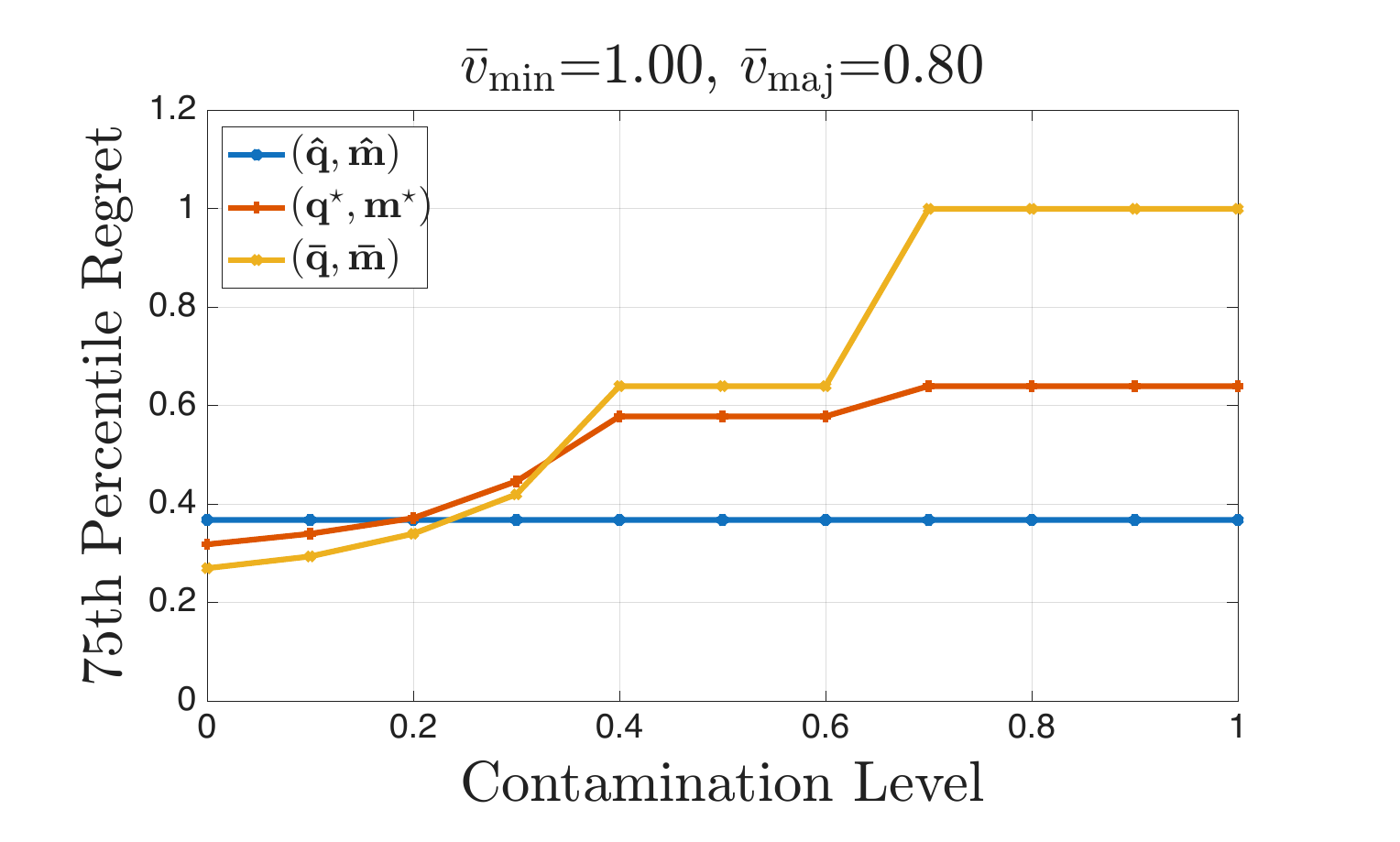}
    \caption{Normalized expected revenues (left) and the upper quartile regrets (right) of the three mechanisms when $\rho = -0.5$ and $(\overline{v}^{\operatorname{min}},\overline{v}^{\operatorname{maj}}) = (0.8,1),(1,1),(1,0.8)$: the regret-based mechanism $(\hat{\bm{q}},\hat{\bm{m}})$ (blue), the revenue-based mechanism $(\bm{q}^\star,\bm{m}^\star)$ (red) and its variant $(\overline{\bm{q}}, \overline{\bm{m}})$ that enforces equity only in expectation (yellow).}
    \label{fig:pos_rho}
\end{figure}

\begin{figure}[!htbp]
    \centering
    \includegraphics[scale=0.30]{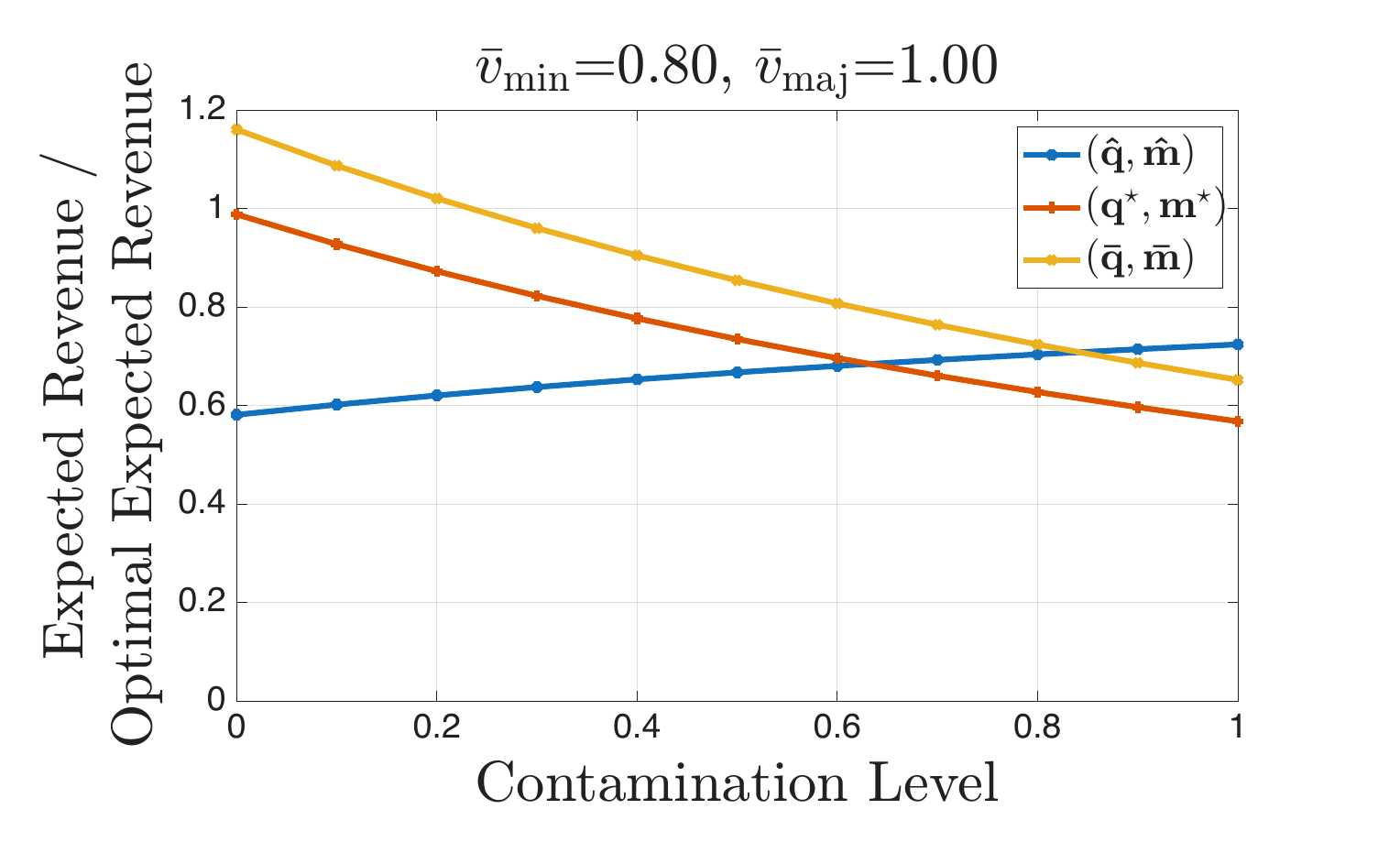}
    \includegraphics[scale=0.30]{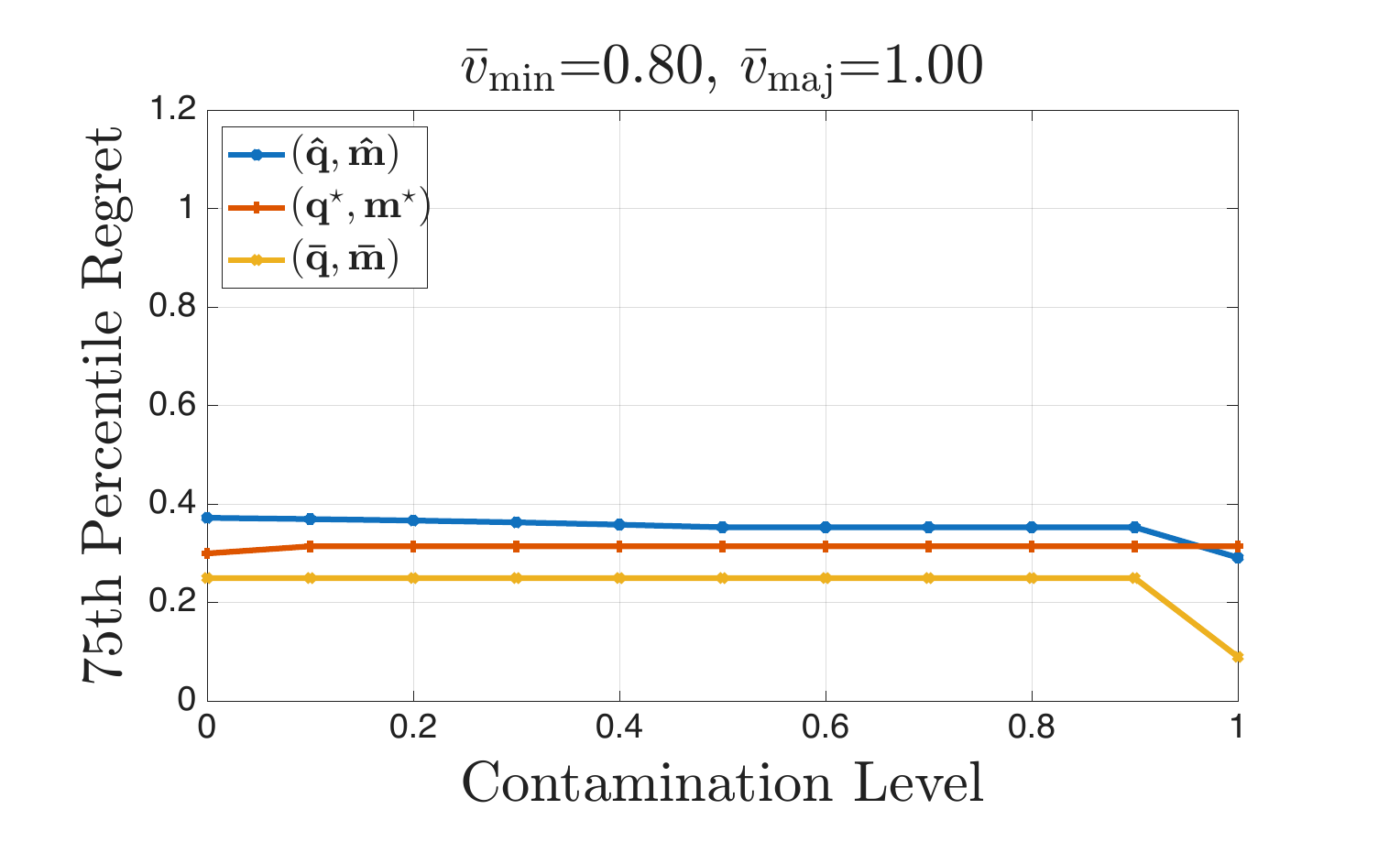}
    
    \includegraphics[scale=0.30]{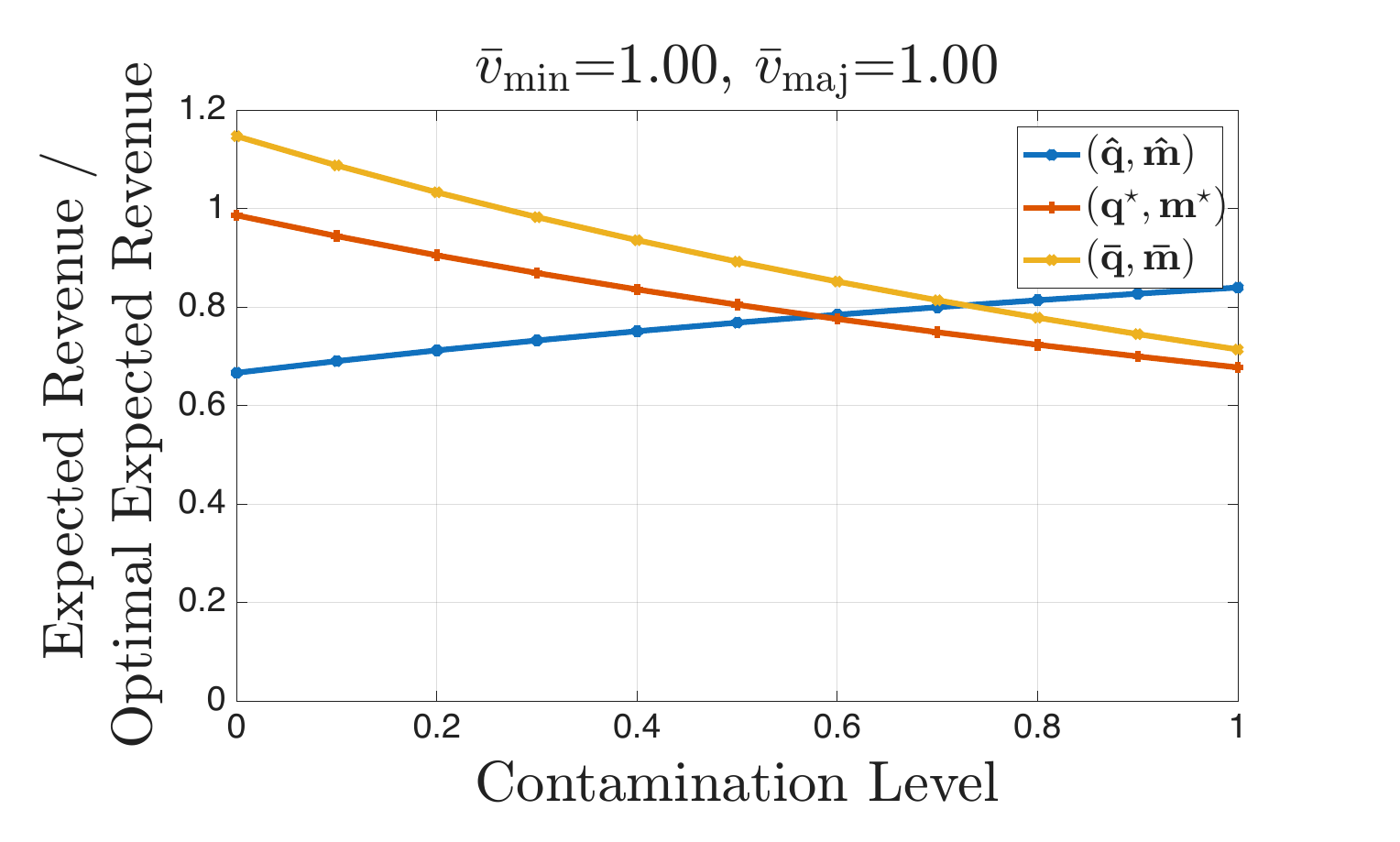}
    \includegraphics[scale=0.30]{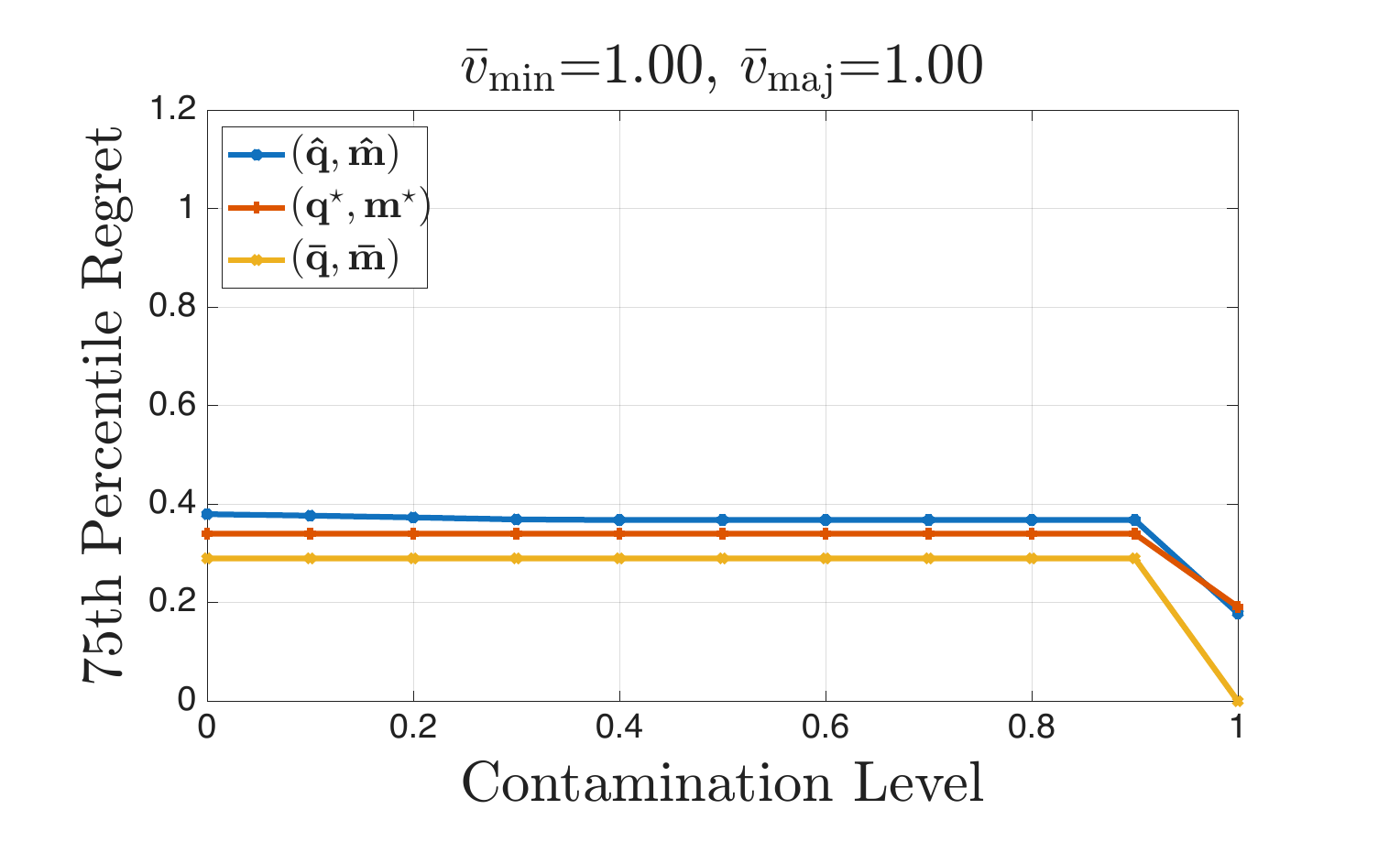}
    
    \includegraphics[scale=0.30]{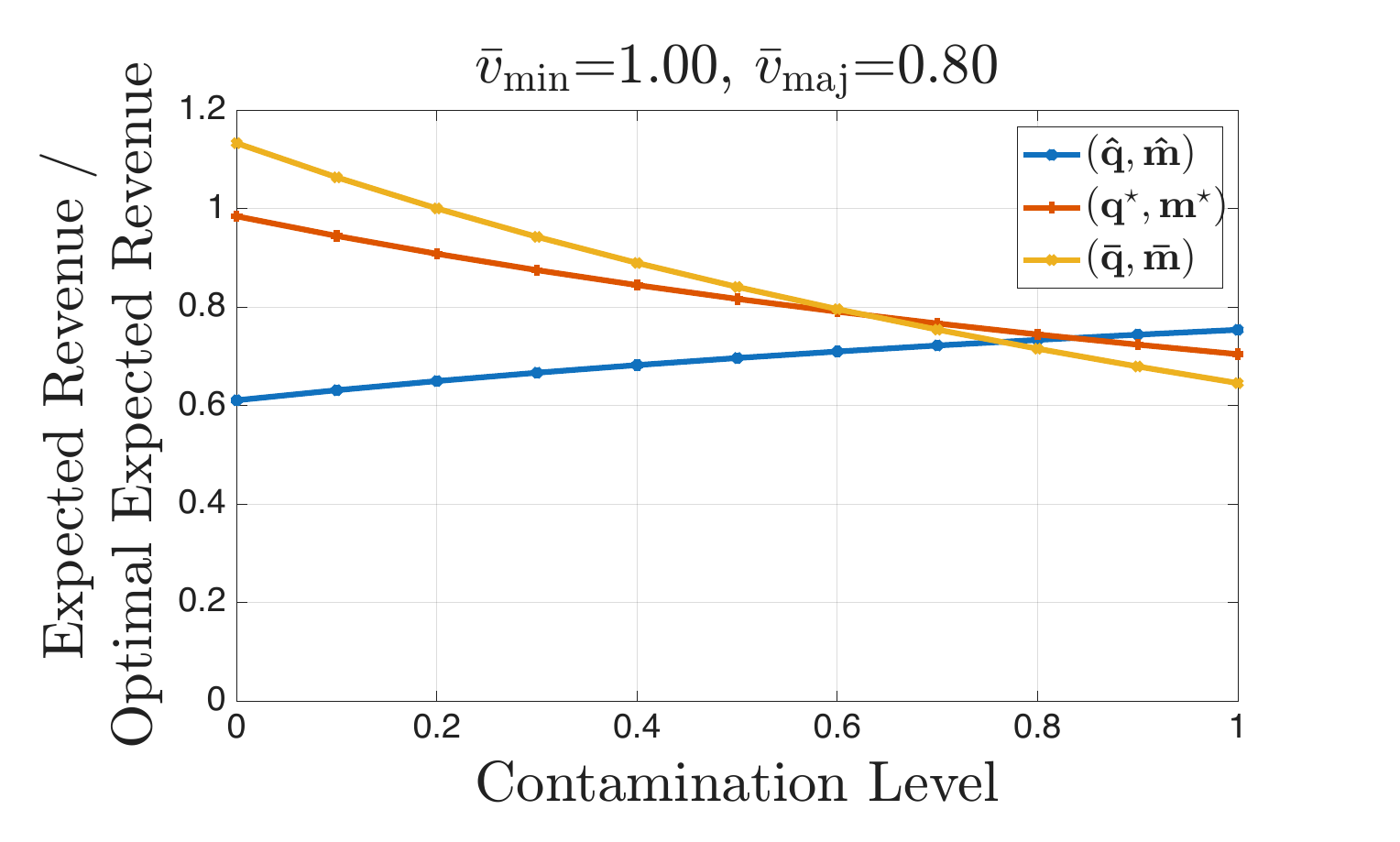}
    \includegraphics[scale=0.30]{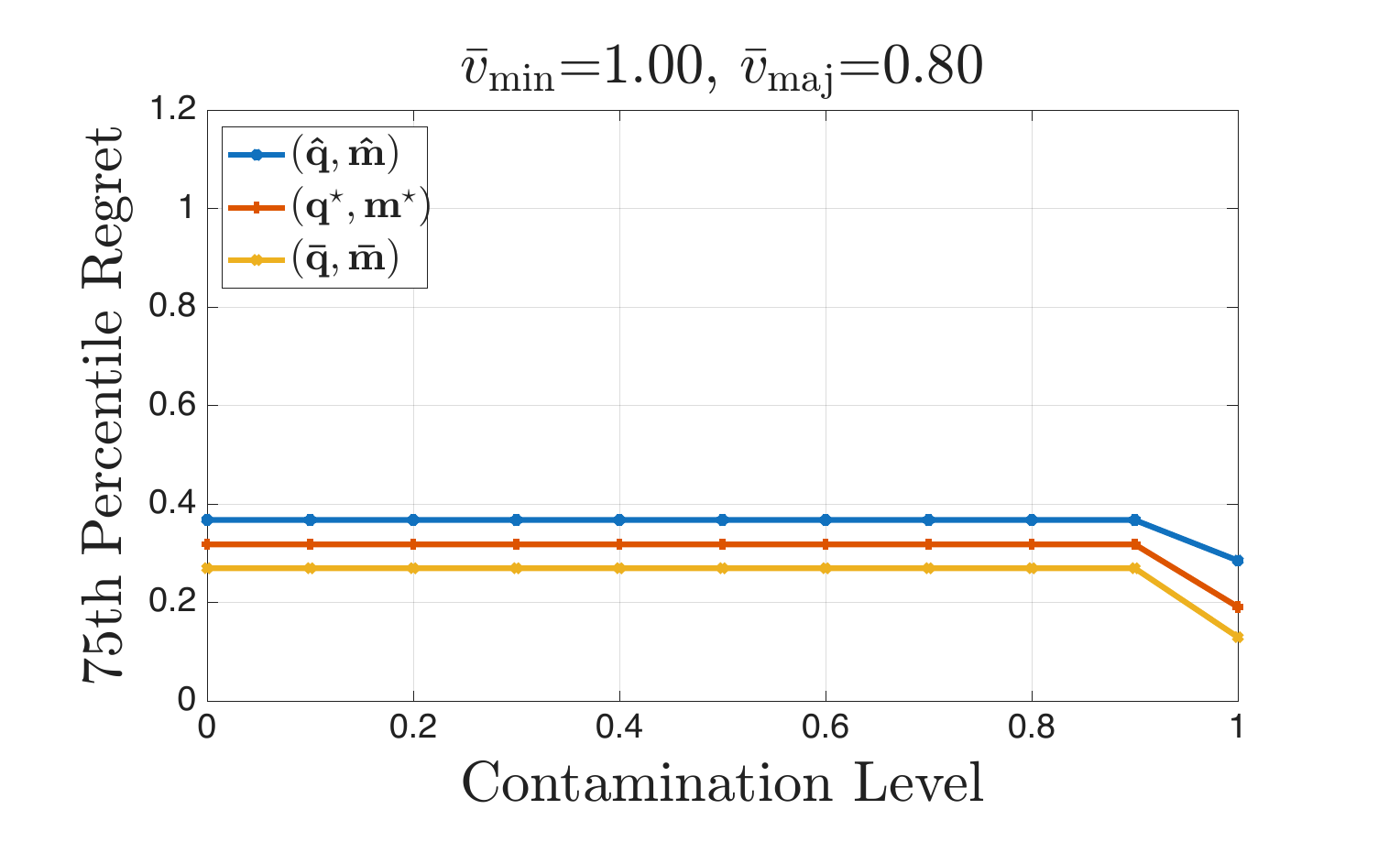}
    \caption{Normalized expected revenues (left) and the upper quartile regrets (right) of the three mechanisms when $\rho = 0.5$ and $(\overline{v}^{\operatorname{min}},\overline{v}^{\operatorname{maj}}) = (0.8,1),(1,1),(1,0.8)$: the regret-based mechanism $(\hat{\bm{q}},\hat{\bm{m}})$ (blue), the revenue-based mechanism $(\bm{q}^\star,\bm{m}^\star)$ (red) and its variant $(\overline{\bm{q}}, \overline{\bm{m}})$ that enforces equity only in expectation (yellow).}
    \label{fig:neg_rho}
\end{figure}

\end{document}